\newtheorem{thm}{Theorem}[section]
\newtheorem{prop}{Proposition}
\newtheorem{lem}[thm]{Lemma}
\newtheorem{mydef}{Definition}
\title{Critical probabilities and convergence time of Percolation Probabilistic Cellular Automata}
\author{Lorenzo Taggi \\ \textit{Max Planck Institute for Mathematics in the Sciences}}
\date{\today}
\begin{document}
\maketitle

\section*{Abstract}
This paper considers a class of probabilistic cellular automata
undergoing a phase transition with an absorbing state.
Denoting by ${\mathcal{U}}(x)$ the neighbourhood of
site $x$, the transition probability is
$T(\eta_x = 1 | \eta_{{\mathcal{U}}(x)}) = 0$ if $\eta_{{\mathcal{U}}(x)}= \mathbf{0}$
or $p$ otherwise, $\forall x \in \mathbb{Z}$.
For any $\mathcal{U}$ there exists
a non-trivial critical probability
$p_c( {\mathcal{U}})$ 
that separates a phase with an absorbing state from a fluctuating phase.
This paper studies how the neighbourhood affects the value of 
$p_c( {\mathcal{U}})$ and provides lower bounds for
$p_c( {\mathcal{U}})$. 
Furthermore, by using dynamic renormalization techniques, we 
prove that the expected convergence time of the processes
on a finite space with periodic boundaries grows exponentially
(resp. logarithmically) with the system size if $p > p_c$ (resp. $p<p_c$).
This provides a partial answer to an open problem in
Toom \textit{et al.} (1990, 1994).

\section{Introduction}
Probabilistic cellular automata (PCA) 
are discrete-time Markov processes
modelling the time evolution of a 
multicomponent system.
Their main feature is
the synchronous update of the states
of the components, which
take values in a finite set 
and interact with their neighbours
according to a given probabilistic interaction rule.

PCA are favourable models to study non-equilibrium phenomena.
Indeed, on the one hand, their definition is simple,
as the space of realizations is discrete and interactions
are local. On the other hand, despite this simplicity,
they show a variety of complex behaviours.

One of the interesting phenomena involving
probabilistic cellular automata
is the transition from ergodic to non-ergodic regime.
After setting a free parameter above or below a certain critical threshold,
at infinite time the process preserves part of the information 
on its initial condition (non-ergodic behaviour).
Namely, the probability measure at infinite time depends on the initial 
state of the dynamics.
On the contrary, if the process is ergodic, it admits a unique, attracting invariant measure.
In \cite{Lebowitz} it has been shown that the non-ergodic regime of a PCA is connected
to the existence of a phase transition for the PCA, considered as a statistical mechanics system.

Over the last 50 years, PCA have undergone intense analytical and numerical investigations
( e.g. \cite{Dawson, Lebowitz, ToomCont, ToomDiscr}). 
However, as far as we know, many questions 
involving the rate of convergence to equilibrium or the 
characterisation of the invariant measures 
still remain open, even for the simplest models
(see e.g. \cite{ToomCell, ToomDiscr}  for a survey).

In this paper we consider a class of PCA that has a correspondence with 
percolation. These models are refereed to as \textit{Percolation Systems} 
in \cite{ToomCell} and as \textit{Percolation Operators} in \cite{ToomDiscr}.
From now on we will refer to them as \textit{Percolation PCA}. This class includes the well studied \textit{Stavksaya's process} (see e.g. \cite{Depoorter,  Maere, Stavskaja, Stavskaja2, Stavskaja3, ToomCell,ToomDiscr, ToomCont}), in which  the neighbourhood of every corresponds to the site itself and its right nearest neighbour on the one dimensional lattice. On the contrary in Percolation PCA the neighbourhood of every site could be any finite (translation invariant) set.

The reason why we decided to consider Percolation PCA is that
their simplicity, combined with the presence of a phase transition,
make them an interesting test case
for attempts to characterise transient behaviour
and stationary measures for spatially extended stochastic dynamics.
Namely Percolation PCA are 
a prominent model for studying
\textit{absorbing state phase transitions}
(\cite{Hinrichsen}), 
i.e. there exists a phase characterised
by almost sure convergence into an ``absorbing state''
(a realisation where the process remains for ever whenever reached)
and a fluctuating phase, where the process remains active at all times.

In this paper we discuss two distinct aspects of the Percolation PCA.
In Section \ref{sect:criticalprobabilities} we study how the neighbourhood 
affects the critical probability.
We provide a lower bound
for critical probabilities $p_c(\mathcal{U})$ and
our result is stated in Theorem \ref{theo:maintheorem1}.
With our estimations we improve
the previous lower bound \cite{Pearce}
showing that  $p_c (\mathcal{U}) > 1/2$ strictly
if the neighbourhood $\mathcal{U} = \{ -1, 0, 1 \}$.
Furthermore, we provide
new bounds in case of neighbourhoods 
not considered before (as far as we know).
The comparison with numerical estimations, provided in the last section 
of this article,
shows that
our bounds are sharp.
In order to derive the lower bound we studied the temporal evolution 
evolution of ``absorbed sets''
(sets of adjacent sites all in state ``zero'').
If these sets on average are
expanding, the realisation at infinite time 
is ``all zeros'' almost surely. 
This idea comes from \cite[Chapter 6]{ToomDiscr}.
Our estimations take into account a certain aspect of the dynamics,
i.e. absorbed sets can dynamically merge one with the other.
This leads to an improvement of the bound.

In Section \ref{sect:timeofconvergence} we consider
Percolation PCA on a finite one dimensional lattice
with periodic boundaries and we study the convergence time
of the process into the absorbing state.
Our second main result is stated
in Theorem \ref{theo:maintheorem2}.
We show that at $p_c$ there exists a transition
from a fast to a slow convergence regime.
Namely we prove that the expected convergence time of the model
grows exponentially
(resp. logarithmically) as the size of the system grows 
if $p > p_c(\mathcal{U})$ (resp. $p<p_c(\mathcal{U})$).
This provides a partial answer to the \textit{Unsolved Problem 5.3.3} in
\cite{ToomCell} and to an open problem mentioned in \cite[Pag. 80-83]{ToomDiscr}.
If compared with \cite{ToomDiscr},
where the fast (resp. slow) convergence behaviour is proved for 
$p$ small enough  (resp. close enough to $1$),
our result provides a sharp estimation.
The slow convergence regime can be interpreted as a metastable
behaviour of the model, as the process spends an exceptionally long time
into a non-stable state before falling into the absorbing state.
Similar studies on the metastable behaviour of PCA models
were recently presented also in \cite{Bigelis, 
Cirillo, Louis},
although the methods used there do not apply in our case,
as Percolation PCA are not reversible and do
not have a naturally associated potential.
Numerical estimations of $p_c(\mathcal{U})$ 
(e.g. \cite{Jensen, Lubeck, Perlsman})
are obtained assuming that the metastable regime 
(the actual regime observed in numerical simulations, 
as there is no way to really simulate  ``infinite space'' in computers)
is observed only for all values of $p$ 
at which the infinite process is in the fluctuating phase. 
Although this fact might appear obvious in terms
of physical intuition, Theorem \ref{theo:maintheorem2} provides 
a justification for this assumption from a rigorous mathematical point of view.

The proof of our result relies almost entirely on the
correspondence between Percolation PCA and
oriented percolation in two dimensions.
This connection has been described for the first time in \cite{Uniform}.
The proof of the statement of the theorem
involving the case of $p<p_c$ is an application of some percolation estimations
presented in \cite{DurretOr}. We generalize these estimates
to the percolation model considered here, which differs from  \cite{DurretOr}
as here the neighbourhood is an arbitrary finite set and because sites (instead of bonds) can be open or closed.
The proof of the statement involving the case of $p>p_c$ is more technical
and is based on
\textbf{(1)} the generalization of the dynamic-block argument
provided by \cite{DurretOr}
to the case of non symmetric neighbourhood
with more than two elements 
and \textbf{(2)} the estimation
of the probability of a certain event
involving a dual lattice construction
provided by \cite{Uniform}.

We shall end this introductory section by presenting the structure of the paper.
In Section \ref{sect:themodel} we define
the model and we present our main results,
Theorem \ref{theo:maintheorem1} and Theorem \ref{theo:maintheorem2}.
In Section \ref{sect:criticalprobabilities}
we prove Theorem \ref{theo:maintheorem1}.
In Section \ref{sect:timeofconvergence}, divided into
three subsections,
we prove Theorem \ref{theo:maintheorem2}.
In Subsection \ref{sect:ergodicity}
we describe the correspondence
between Percolation PCA and oriented percolation
in two dimension, following \cite{ToomDiscr, Uniform}.
In Subsection \ref{sect:percolationestimates}
we present several percolation estimations
from \cite{DurretOr} used to prove
of the theorem.
Finally in Subsection \ref{sect:proofTheo2} we prove Theorem \ref{theo:maintheorem2}.

\section{Definition and Results}
\label{sect:themodel}
Probabilistic Cellular Automata (PCA)
are discrete-time Markov chains on a product space,
$\Sigma= X^{S}$.
In this paper we consider both the case of infinite space, 
$S = \mathbb{Z}$,
and of finite space,  
$S = \mathbb{S}_n$, $\mathbb{S}_n: = \{ -n, -n + 1, \ldots n-2, n-1 \}$.

We consider the case of boolean variables, $X = \{ 0, 1 \}$.
Realisations of the process are denoted by $\eta \in \Sigma$.
For any  $x \in S$ and any $K \subset S$, use $\eta_x$ to denote the $x$-th component of the vector $\eta$ and $\eta_K$ to designate the set of components
corresponding to the sites of $K$.

We introduce a \textit{neighbourhood function} on $S$.
We first fix a finite set $\mathcal{U} =\{  s_1, s_2, \ldots s_u \}
\subset \mathbb{Z}$, assuming that
$s_1 < s_2 < \ldots < s_u$.
If $S = \mathbb{Z}$,   $\forall x \in S$ we define the neighbourhood
of $x$ as $\mathcal{U}{ (x )} =\{  s_1, s_2, \ldots s_u \} + x$.
If $S = \mathbb{S}_n$ we consider periodic boundaries.
Namely, for any $\forall x \in \mathbb{S}_n$ we define the neighbourhood
of $x$ as 
\begin{multline}
\label{eq:neighborperiodic}
\begin{split}
& \mathcal{U}(x) =  
\\  &\{  | x + s_1 + n |_{2n} - n, \, | x + s_2 + n |_{2n} - n,\,  \ldots, \, | x + s_u + n |_{2n} - n  \},
\end{split}
\end{multline}
where $| x |_{2n}$ denotes $x$ (mod $2n$).
For example, if $\mathcal{U} = \{0,1\}$, the neighbourhood of 
the site $n-1$ is $\mathcal{U}(n-1)=\{n-1, -n\}$.
For any set $K \subset S$, we define the neighbourhood of $K$
as $\mathcal{U}(K)  = \bigcup _{x \in K} \mathcal{U}(x)$.

In Percolation PCA the states of the process are synchronously updated at every site according to the following \textit{transition probability},
 \begin{equation}
\label{eq:transitionprob}
T_{x}( \, \eta^{\prime}_x  = 1 \, | \, \eta_{\mathcal{U}(x)}  \, ) =
  \begin{cases}
 \, 0   & \text{if } \eta_{\mathcal{U}(x)} = \mathbf{0}  \\
 \, p	  & \mbox{otherwise}
  \end{cases},
\end{equation}
where $p \in [0,1]$ is a free parameter.~\footnote{We use a different notation from \cite{ToomCell, ToomDiscr, ToomCont}: here $p$ corresponds to $1 - \epsilon$ and zeroes and ones are inverted.}

The temporal evolution of the process can be represented by introducing a linear operator $\mathcal{P}$, which acts on the space of probability measures $\mathcal{M}(\Sigma)$.
For any $\mu \in \mathcal{M}(\Sigma)$, we
use $\mu \mathcal{P}$ to denote the measure
obtained applying $\mathcal{P}$ to $\mu$.
By using $\overline{ \mathbf{\eta}^{\prime} }_K$ to denote the cylinder set
$\overline{ \eta^{\prime} }_K = \{ \eta \in \Sigma \, :  \eta_K =  \eta^{\prime} _K \, \}$, 
with $K \subset S$, the measure $\mu \mathcal{P}$ is defined as
\begin{equation}
\label{eq:operator}
\mu \mathcal{P} ( \, \overline{ \eta^{\prime} }_K  \, ) = \sum_{ \eta_{\mathcal{U}(K)} \in {\{ 0,1\}}^{\mathcal{U}(K) }}\mu{(\eta_{\mathcal{U}(K)})}
\,
\prod_{x \in K} T_{x}( \, \eta^{\prime}_x \, | \, \eta_{\mathcal{U}(x)} \, ).
\end{equation}

In order to characterise the time evolution of PCA, it is useful
to introduce the set of \textit{space-time realisations},
$\tilde{\Sigma}= { \{ 0, 1 \} }^{V}$, where  $V = S \times \mathbb{N}$
is the \textit{space-time} set. The elements of  $\tilde{\Sigma}$ are the realisations of the process
at all times,  $\tilde{\eta} = {( \eta^t )}_{t=0}^{\infty} \in \tilde{\Sigma}$.
We then introduce an oriented graph $\mathcal{G}_{\mathcal{U}} = ( V ,  \vec{E}_{\mathcal{U}})$,
whose edges connect any vertex $(x, t) \in V$ to the 
vertices $(k, t-1) \in V$, where $k \in \mathcal{U}(x)$.
The vertices that can be reached from $(x, t) \in V$ 
through a path on $\mathcal{G}_{\mathcal{U}}$ 
constitute the \textit{evolution cone} of $(0,t)$.
\begin{figure}
\begin{center}
\includegraphics[width=0.70\linewidth]{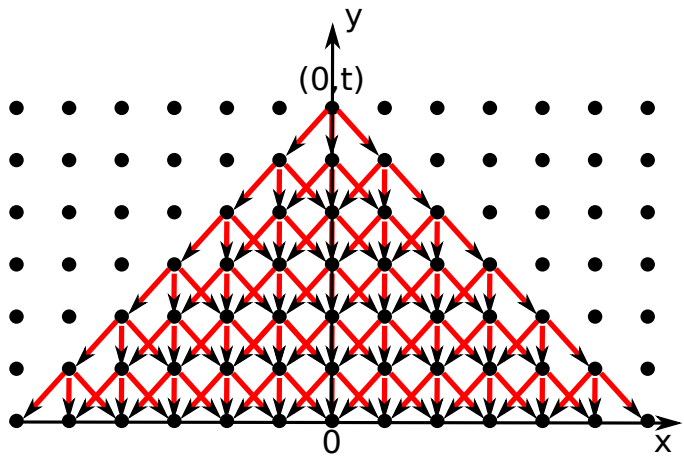}
\caption{Representation of the graph $\mathcal{G}_{\mathcal{U}}$ with neighbourhood $\mathcal{U} = \{ -1, 0, 1 \}$.
In this figure only edges between vertices belonging to the evolution cone of $(0,t)$ have been drawn. }
 \label{Fig0:Perc}
\end{center}
\end{figure}

We now introduce some definitions that will be used along
the whole article.
\begin{mydef}[Evolution Measure]
\label{def:evolutionmeasure}
Consider the Percolation PCA (\ref{eq:operator}) with $S = \mathbb{Z}$
(respectively $S = \mathbb{S}_n$ and periodic boundaries).
For every $\mu \in \mathcal{M}(\Omega)$,
we define the \textit{evolution measure} $\mathcal{E}_{\mu}$ 
(respectively $\mathcal{E}^{n}_{\mu}$) as the joint 
probability distribution of measures $\mu$, $\mu \mathcal{P}^1$,
$\mu \mathcal{P}^2$, $\ldots$.
\end{mydef}
For example, we use $\mathcal{E}^n_{\delta_\mathbf{1}}$ to denote the
 evolution measure of the Percolation PCA on 
finite space, starting from the realisation ``all ones''.
\begin{mydef}[Expectation on  the evolution space]
\label{def:expectationontheevolutiospace}
Consider the Percolation PCA  (\ref{eq:operator}) with $S = \mathbb{Z}$
(respectively $S = \mathbb{S}_n$ and periodic boundaries).
We use $\mathbb{E}_{\mu}[\, \, \cdot \, \, ]$ 
(respectively $\mathbb{E}^{(n)}_{\mu}[\, \, \cdot \, \, ]$)
to denote the expectation in relation to the evolution measure $\mathcal{E}_{\mu}$
(respectively $\mathcal{E}^n_{\mu}$).
\end{mydef}

\paragraph{Monotonicity}
It is immediate from the definition of transition probability that the Dirac measure $\delta_{\mathbf{0}}$,
where $\mathbf{0} = (0, 0, 0, \ldots )$, is stationary, i.e.  $\delta_{\mathbf{0}} = \delta_{\mathbf{0}}  \mathcal{P}$.  
Furthermore, the operator $\mathcal{P}$ of this stochastic process is \textit{monotone}.
Monotonicity of $\mathcal{P}$ means that it preserves partial order among elements of $\mathcal{M}(\Sigma)$. 
We first introduce partial order `` $\prec$ '' in $\Sigma$ by defining for any two realizations  $\eta, \eta^{\prime} \in \Sigma$,
 $\eta \prec \eta^{\prime}  \Leftrightarrow \forall x \in S$ $\eta_x \leq \eta_x^{\prime}$.
We then introduce the  functions
$\varphi : \Sigma \longmapsto \mathbb{R}$,
which only depend on a finite number of sites.
We call  $\varphi$ \textit{monotone} \textit{iff} for any $\eta, \eta^{\prime} \in \Sigma$, 
$\, \, \eta \prec \eta^{\prime} \Rightarrow \varphi(\eta) \leq \varphi(\eta^{\prime}) $.
We then introduce partial order in $\mathcal{M}(\Sigma)$ by
defining $\mu \prec  \mu^{\prime} \Leftrightarrow $  for any \textit{monotone} function $\varphi~$, $\int  \varphi \,  d\mu   \,  \leq \,    \int  \varphi \, d \mu^{\prime}$. 
Finally, we introduce an order relation between operators
and we introduce the notion of monotone operator.
\begin{mydef}[Monotone operator]
\label{def:monotoneoperator}
An operator $P \, : \mathcal{M}( \Sigma ) \longmapsto \mathcal{M}( \Sigma )$ is called monotone if for any pair of measures $\mu, \mu^{\prime} \in \mathcal{M}(\Sigma)$,  $\mu \prec \mu^{\prime} \Rightarrow \mu \mathcal{P} \prec \mu^{\prime} \mathcal{P}$.
\end{mydef}
The operator (\ref{eq:operator}) of the Percolation PCA is monotone.
This property follows from the fact that the transition probability (\ref{eq:transitionprob}) preserves order locally, i.e. for any $x \in S$,
$$
\eta^1_{\mathcal{U}(x)} \prec  \eta^2_{\mathcal{U}(x)}\, \,  \Rightarrow \, \,
 T_p ( \eta_x = 1 \, | \, \eta^1_{\mathcal{U}(x)} ) \leq T_p ( \eta_  = 1 \, | \,    \eta^2_{\mathcal{U}(x)} ) ,
$$
(see for example \cite[page 28]{ToomDiscr} for a proof of this).
Monotonicity of $\mathcal{P}$ implies that the probability measure,
\begin{equation}
\label{eq:limitmeasure}
\nu_p   : =  \lim_{ t \rightarrow \infty} \delta_{\mathbf{1}} \, \mathcal{P} ^t,
\end{equation}
exists and it is invariant.

\begin{mydef}[Critical Probability]
\label{def:critprob}
Consider the Percolation PCA on $\mathbb{Z}$ with finite neighbourhood $\mathcal{U} \subset \mathbb{Z}$.  We define the \textit{critical probability} as,
\begin{equation}
\label{eq:pc}
p_c(\mathcal{U}) = \sup_{p \in [0,1]} \{ \nu_p = \delta_{\mathbf{0}} \}.
\end{equation}
\end{mydef}
\begin{mydef}[Ergodic Operator]
\label{def:ergodic}
An operator $\mathcal{P}: \mathcal{M}(\Sigma) \rightarrow
\mathcal{M}(\Sigma)$ is ergodic if the two following conditions hold: \textbf{(a)}
there exists a unique $\varphi \in \mathcal{M}(\Sigma)$ such that $\varphi P = \varphi$
and \textbf{(b)} $\forall \mu \in \mathcal{M}(\Sigma)$, $\lim\limits_{t \rightarrow \infty} \mu P^t = \varphi$.
\end{mydef}
For any $p>p_c$ the evolution operator of the Percolation PCA is not ergodic. Indeed, in this case $\delta_{\mathbf{0}}$ and $\nu_p \neq \delta_{\mathbf{0}}$
(defined in \ref{eq:limitmeasure}) are two distinct invariant measures.
For any $p < p_c$, the Percolation PCA (\ref{eq:operator}) is ergodic.

In \cite{Stavskaja3, Uniform} it has been proved that 
$$p_c(\mathcal{U}) \in (0, 1)$$  
for the Stavskaya's process ($\mathcal{U} = \{0,1\}$) and 
a more general proof in case of general neighbourhood can be found in 
\cite{ToomDiscr}. The proofs are based on two methods widely used
in statistical mechanics, namely the counting
path method and the Peierls argument.
Our first result is stated in the following theorem and it 
involves the estimation of $p_c$.
\begin{thm}
\label{theo:maintheorem1}
Consider the Percolation PCA on $\mathbb{Z}$ with 
finite neighbourhood
$\mathcal{U} = \{ s_1, s_2, s_3, \ldots, s_u \}$,
where $s_1, s_2 \ldots s_u \in \mathbb{Z}$ are
such that $s_1 < s_2 < \ldots < s_u$.
Define  $p_1 := \frac{2}{2 + s_u - s_1}$
and $p_2$ as the unique solution in the interval $(0,1)$ of the following equation,
\begin{equation}
\label{eq:theorem1eq1}
p =   p_{1} \cdot  \frac{1}{1 - \frac{\varphi(p)}{s_u - s_1 + 2}},
\end{equation}
where $\varphi(p) = \frac{(1-p)^6 + (1-p)^{2(s_u - s_1)}}{p (2-p)}$.
Then $p_c(\mathcal{U}) \geq p_{2}$.
\end{thm}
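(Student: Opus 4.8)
The plan is to prove the statement directly: for every $p < p_2$ I would show that the process started from all ones is absorbed, $\nu_p = \delta_{\mathbf{0}}$, which by Definition~\ref{def:critprob} gives $p_c(\mathcal{U}) \ge p_2$. Since $\mathcal{P}$ is monotone and $\delta_{\mathbf{1}}$ dominates every initial measure, the all-ones start is the worst case for absorption, so it suffices to treat it. The object I would follow is the family of \emph{absorbed intervals} --- the maximal runs of consecutive zeros. The whole line is absorbed precisely when these intervals invade all of $\mathbb{Z}$, and I would show that this happens as soon as they expand on average.

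First I would analyse one step of a single absorbed interval $[L,R]$, temporarily assuming every exterior site carries a one. A site $x$ is forced into state $0$ iff $\mathcal{U}(x)=\{x+s_1,\dots,x+s_u\}\subseteq[L,R]$, i.e. $L-s_1\le x\le R-s_u$; hence the \emph{deterministic core} is $[L-s_1,\,R-s_u]$ and the length shrinks by $s_u-s_1$ at each step. Because the updates are conditionally independent across sites, each site immediately beyond the core has an active neighbour and so flips to $0$ independently with probability $1-p$; iterating, each edge advances outward by a run of length $G$ with $\mathbb{P}(G\ge k)=(1-p)^k$ and $\mathbb{E}[G]=(1-p)/p$. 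Writing $\ell_t$ for the length, this produces the drift $-(s_u-s_1)+2(1-p)/p$, which is positive exactly when $p<p_1=2/(2+s_u-s_1)$, recovering the crude threshold.

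The improvement to $p_2$ comes from reinstating the dynamics suppressed above. By monotonicity, exterior zeros can only help an edge advance, and in particular a growing edge may reach a neighbouring absorbed interval and \emph{merge} with it, gaining its length as extra drift. I would bound from below the probability of such a merging event by the local zero-patterns that force it; a short combinatorial analysis should produce the two contributions $(1-p)^6$ and $(1-p)^{2(s_u-s_1)}$ in the numerator of $\varphi(p)$, while summing over the admissible separations yields the geometric factor $1/(1-(1-p)^2)=1/(p(2-p))$. The resulting lower bound on the drift is
\begin{equation}
\label{eq:netdrift}
\mathbb{E}[\ell_{t+1}-\ell_t]\;\ge\;-(s_u-s_1)+2\,\frac{1-p}{p}+\varphi(p),
\end{equation}
and setting the right-hand side of \eqref{eq:netdrift} to zero and rearranging is precisely equation~\eqref{eq:theorem1eq1}; since all three terms are monotone in $p$, its unique root $p_2$ in $(0,1)$ satisfies $p_2>p_1$, and for every $p<p_2$ the drift is strictly positive.

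Finally I would turn positive drift into almost-sure invasion. Dominating the two edge processes by random walks with strictly positive outward drift and light (geometric) tails shows that a long enough absorbed interval covers any prescribed window with probability bounded away from $0$; since long absorbed intervals nucleate at all sufficiently large times, a restart combined with a Borel--Cantelli argument yields $\mathbb{P}(\eta^t_0=1)\to 0$, that is $\nu_p=\delta_{\mathbf{0}}$. The hard part is exactly this quantitative last step. Once merging is included the edge increments are neither independent nor identically distributed, so one must extract a valid lower bound on the drift while avoiding double-counting of merges, and one must prevent the interval from dying --- its length dropping into the regime where the core is empty --- before the drift prevails. This is where the stochastic domination by a biased walk and the nucleation/restart argument are indispensable, and where I expect most of the technical effort to lie.
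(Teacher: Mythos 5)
Your overall strategy is the paper's: reduce to the worst case, follow the maximal runs of zeros, show their endpoints drift outward, and convert positive drift into absorption by dominating the endpoints with biased random walks (this last step is exactly the content of Lemma~\ref{lem:randomwalk} in the paper, and your one-step computation giving the threshold $p_1=2/(2+s_u-s_1)$ coincides with the paper's $T=1$ case). However, there is a genuine gap at the one point where the theorem has content beyond $p_1$: you never derive $\varphi(p)$, you only assert that ``a short combinatorial analysis should produce'' the terms $(1-p)^6$ and $(1-p)^{2(s_u-s_1)}$ and the factor $1/(p(2-p))$. That analysis is Proposition~\ref{prop:estimations}, which is the bulk of the paper's proof, and it cannot be waved through: the exponents and the geometric factor come from an explicit decomposition of the time-$1$ configuration beyond the edge into cylinder sets (all zeros; all zeros except one isolated one at distance $m$; a one at distance $m$ followed by at least one more one; the rest), each paired with a lower bound on the conditional probability that the enlarged block is zero at time $2$.

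More importantly, the mechanism you propose for the correction term does not match the one that actually works, and I do not see how to make yours rigorous. You want a \emph{one-step} drift inequality \eqref{eq:netdrift} improved by ``merging with a neighbouring absorbed interval, gaining its length as extra drift.'' But the drift bound must hold uniformly over configurations, i.e. it is computed from the worst case $\rho(x,y)$ (ones everywhere outside the interval), and in that configuration there are no neighbouring absorbed intervals to merge with: the one-step advance is exactly geometric and the drift is exactly $-(s_u-s_1)+2(1-p)/p$, with no room for a $\varphi(p)$ correction. If instead you try to credit merges with pre-existing neighbouring massifs, you need control on their density and lengths under the unknown time-$t$ law, and you lose the stochastic domination that justifies the random-walk comparison (you acknowledge the double-counting and non-i.i.d.\ issues yourself). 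The paper sidesteps all of this by taking $T=2$ in Lemma~\ref{lem:randomwalk}: starting from the worst case $\rho(x,y)$, after one step the sites just beyond the deterministic core are independently zero with probability $1-p$, and it is the interaction of the second step with these freshly created zeros (an isolated one surrounded by zeros is very likely to be absorbed) that produces $\varphi(p)$; the ``merging'' is with massifs nucleated at the intermediate time, not with other macroscopic intervals. Your equation \eqref{eq:netdrift} is numerically the paper's two-step drift divided by two, but as a one-step bound from the worst case it is false, so the argument as proposed does not close.
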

The proof of the theorem is presented in Section \ref{sect:criticalprobabilities}.
From (\ref{eq:theorem1eq1}) it follows that  $p_{2} > p_1$, as $\varphi(p)$ is positive in $(0,1)$. Our analytical lower bound can be compared with the numerical estimations 
in the following tables. The plots of the numerical estimations can be found in the appendix of this article. The numerical estimation in case $\mathcal{U}=\{-1,0\}$ 
has been provided in \cite{Mendonca}.
\vspace{0.15cm}
\begin{center}
\begin{tabular}{|c | c | c | c |}
\hline
$\mathcal{U}$       &   $p_1$  &   $p_2$      & Num. Est. \\
\hline
$\{ -1, 0 \}$          &   $2/3$  &   $0.670$    & $0.705$   \\
\hline
$\{ -1, 0, 1 \}$       &   $1/2$  &   $0.505$    & $0.538$   \\
\hline
$\{ -1, 0, 1, 2 \}$    &   $2/5$  &   $0.407$    & $0.435$   \\
\hline 
$\{ -1, 0, 1, 2, 3 \}$ &   $1/3$  &   $0.343$    & $0.364$   \\
\hline
\end{tabular}
\end{center}
\vspace{0.25cm}
In the following table we consider neighbourhoods with $3$ elements and different values $s_u - s_1$. The comparison with the previous table shows that
our estimation is worse in case some sites between the two extremal ones are missing.
\vspace{0.15cm}
\begin{center}
\begin{tabular}{|c | c | c | c |}
\hline
$\mathcal{U}$ &   $p_1$ &  $p_2$ & Num. Est. \\
\hline
$\{ -1, 0, 1 \}$  & $1/2$ & $0.505$  & $0.538$ \\
\hline
$\{ -1, 0, 2 \}$  & $2/5$ & $0.407$  & $0.490$  \\
\hline
$\{ -1, 0, 3 \}$  & $1/3$ &  $0.343$  & $0.470$ \\
\hline
\end{tabular}
\end{center}
\vspace{0.25cm}

Our main result is stated in Theorem 
\ref{theo:maintheorem2} and it involves the convergence time 
into the absorbing state of the Percolation PCA
with finite space and periodic boundaries, as
defined at the beginning of this section.

When $S$ is finite, the process is always ergodic (Definition \ref{def:ergodic}).
Indeed, for any realisation of the process $\eta^t \in \Sigma$ at time $t$, 
the probability that $\eta^{t+1} = \mbox{ ``all zeroes''}$
is bounded from below by the constant ${(1-p)}^{|S|}$.
This implies that there exists almost surely a finite time $\tau \in \mathbb{N}$ such that $\eta^t = \mbox{ ``all zeroes''}$ for all $t \geq \tau$.
Hence, for any $\mu \in \mathcal{M}({\Sigma})$, $\lim\limits_{t \rightarrow \infty} \mu \mathcal{P}^t = \delta_{\mathbf{0}}$.

In order to estimate the convergence time into the
absorbing state
we define the \textit{absorption-time} $\tau \in \mathbb{N}_0$,
representing the first 
time all sites in the segment $\{-k, -k+1, \ldots, k-1\}$ are in state zero
for $\eta^\tau$.
\begin{mydef}
\label{def:absorbtime}
For all $k \in \mathbb{N}$, we call the absorption time of the 
interval $\{-k, -k+1, \ldots, k-1 \}$ 
the random variable $\tau_k : \tilde{\Sigma} \rightarrow \mathbb{N}$,
\begin{equation}
\label{eq:absorbingtime}
\tau_k(\tilde{\eta}) = \min \{ t \in \mathbb{N}_0 \mbox{ s.t. }  
\tilde{\eta}_x^t = 0 \, \, \, \, \forall x  \in \{-k, -k+1, \ldots, k-1 \}  \}.
\end{equation}
\end{mydef}
In case $S = \mathbb{S}_n$, this random variable is well defined only if 
$k \leq n$. 

We recall Definitions \ref{def:evolutionmeasure} and \ref{def:expectationontheevolutiospace} and we state our main result.
\begin{thm}[]
\label{theo:maintheorem2}
Consider the Percolation PCA with space $\mathbb{S}_n$, periodic boundaries and finite neighbourhood $\mathcal{U} = \{s_1, s_2, \ldots s_u\}$, where $s_1, s_2, \ldots s_u$ are some distinct elements of $\mathbb{Z}$.
For every $p \in [0,1]$ there exist $n_0 \in \mathbb{N}$ and some positive constants $K_1, K_2, K_3, K_4, c_1, c_2, c_3, c_4$ (dependent on $p$) such that
for all $n > n_0$,
\begin{center}
\begin{enumerate}
\item[a)]  if $p<p_c$,  ~ 
$K_1 \log (c_1 \, n)  \leq  \mathbb{E}_{{\delta}_{\mathbf{1}}}^{(n)} [ \tau_n] \leq K_2 \log (c_2 \, n)  $,
\item[b)]  if $p>p_c$, ~ 
$K_3 \exp (c_3 \, n) \leq  \mathbb{E}_{\delta_{\mathbf{1}}}^{(n)} [ \tau_n] \leq K_4 \exp (c_4 \, n)$.
\end{enumerate}
\end{center}
\end{thm}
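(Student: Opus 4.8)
The plan is to reduce both statements to sharp estimates for two-dimensional oriented site percolation, exploiting the correspondence (to be established in Subsection \ref{sect:ergodicity}) between the space-time realisation $\tilde{\eta}$ of the Percolation PCA and an oriented percolation process on $\mathcal{G}_{\mathcal{U}}$. Under this correspondence a site $(x,t)$ is ``wet'' (in state $1$) only if it can be reached by an open oriented path from the initial all-ones configuration, and the absorption time $\tau_n$ is essentially the survival time of the percolation cluster restricted to the cylinder $\mathbb{S}_n \times \mathbb{N}$. The key input is that the percolation critical point coincides with $p_c(\mathcal{U})$, so that $p<p_c$ puts us in the subcritical regime (clusters die out quickly) and $p>p_c$ in the supercritical regime (clusters survive and percolate with positive density).

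For part (a), the subcritical case, I would first prove the upper bound $\mathbb{E}^{(n)}_{\delta_\mathbf{1}}[\tau_n]\le K_2\log(c_2 n)$. Since $p<p_c$, on the \emph{infinite} lattice the probability that the cluster started from the origin survives up to time $t$ decays exponentially, $\mathbb{P}(\text{survival to }t)\le C e^{-\gamma t}$; this is precisely the type of exponential-decay estimate imported from \cite{DurretOr} and generalised there to arbitrary finite neighbourhoods and site percolation. A union bound over the $2n$ possible starting sites in $\mathbb{S}_n$ shows that the process survives beyond time $t$ with probability at most $2n\,C e^{-\gamma t}$, which becomes small once $t\gtrsim \gamma^{-1}\log n$. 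Summing the tail $\mathbb{P}(\tau_n>t)$ then yields the $\log n$ upper bound. One must be slightly careful that periodic boundaries do not help survival; monotonicity and a comparison with the infinite-space process (coupling the finite torus to $\mathbb{Z}$ by ignoring wrap-around, or bounding the torus cluster by finitely many copies of the half-line cluster) handle this. The matching lower bound $K_1\log(c_1 n)\le \mathbb{E}^{(n)}_{\delta_\mathbf{1}}[\tau_n]$ is elementary: the farthest surviving site needs order $\log n$ time to disappear because, with the $2n$ independent-ish seeds, the maximum of $2n$ exponential-tailed survival times is of order $\log n$; alternatively, a simple second-moment or direct probabilistic estimate shows $\mathbb{P}(\tau_n>\tfrac{1}{2}\gamma^{-1}\log n)$ is bounded below.

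For part (b), the supercritical case, the goal is the exponential lower bound $K_3 e^{c_3 n}\le \mathbb{E}^{(n)}_{\delta_\mathbf{1}}[\tau_n]$, which is the metastability statement and the heart of the theorem. Here I would run the dynamic renormalization / block argument of \cite{DurretOr}, generalised as announced to non-symmetric neighbourhoods with $u>2$ elements. The idea is to tile space-time into macroscopic blocks and show that, for $p>p_c$, a block is ``good'' (the percolation cluster crosses it and feeds the next block) with probability exceeding the critical threshold of a dependent oriented-percolation comparison model, uniformly after choosing the block size large. A supercritical comparison percolation on the torus then survives for a time exponential in the number of blocks across the torus, i.e.\ exponential in $n$, because the process can only be absorbed if \emph{every} good crossing fails simultaneously, an event of probability at most $e^{-c_3 n}$ per unit time. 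Translating ``the comparison percolation survives'' back into ``$\tilde{\eta}$ has not reached all-zeroes'' uses the dual-lattice estimate from \cite{Uniform}: absorption of the PCA corresponds to the existence of a dual blocking path, whose probability is exponentially small in the torus circumference. The matching upper bound $\mathbb{E}^{(n)}_{\delta_\mathbf{1}}[\tau_n]\le K_4 e^{c_4 n}$ follows from the crude global estimate already noted in the text: each step lands in all-zeroes with probability at least $(1-p)^{2n}$, so $\tau_n$ is stochastically dominated by a geometric variable with that success probability, giving $\mathbb{E}[\tau_n]\le (1-p)^{-2n}$.

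I expect the main obstacle to be the supercritical lower bound in part (b), specifically making the block-renormalization argument go through for an \emph{asymmetric} neighbourhood $\mathcal{U}$ with more than two elements: the classical construction in \cite{DurretOr} relies on the symmetry and nearest-neighbour structure to define clean rectangular blocks and to show good blocks occur with high probability at any $p>p_c$. With a general finite $\mathcal{U}$ the cone of influence is skewed and wider, so the block geometry, the definition of a ``good'' block, and the renewal/restart structure that guarantees independence between distant blocks all need to be redesigned; coupling the induced dependent percolation to an honestly supercritical independent one (via a Liggett--Schonmann--Stacey type domination) is the delicate technical step. The second, subtler point is ensuring that the percolation critical density genuinely equals $p_c(\mathcal{U})$ so that ``$p>p_c$'' really lands in the supercritical phase of the comparison model; this identification, rather than any single inequality, is what makes the estimate \emph{sharp} and distinguishes the result from the non-sharp bounds of \cite{ToomDiscr}.
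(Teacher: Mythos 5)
Your plan follows essentially the same route as the paper: the correspondence with oriented site percolation on $\mathcal{G}_{\mathcal{U}}$, the union bound plus subcritical exponential decay (and the finite-to-infinite comparison) for part (a), the trivial geometric bound for the upper estimate in (b), and, for the supercritical lower bound, Durrett's dynamic-block renormalization adapted to asymmetric neighbourhoods combined with a Peierls/dual-lattice estimate in the spirit of Toom, with the crossing events glued together by an FKG-type product bound. You also correctly single out the genuinely delicate step, namely redesigning the block geometry for a skewed, multi-element neighbourhood so that good blocks occur with probability close to one for every $p>p_c$.
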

The proof of the theorem is presented in Section \ref{sect:timeofconvergence}.

\section{Critical Probabilities}
\label{sect:criticalprobabilities}
In this section we prove Theorem \ref{theo:maintheorem1},
which provides a lower bound for $p_c$ as a function of
the neighbourhood.  The proof of Theorem \ref{theo:maintheorem1} 
requires Lemma \ref{lem:randomwalk} and  Propositions \ref{prop:containedinto} and \ref{prop:estimations}, which are stated in this section.
\begin{prop}
\label{prop:containedinto}
Consider two Percolation PCA in $\mathbb{Z}$ 
with neighbourhoods respectively $\mathcal{U}$ and 
$\mathcal{U}^{\prime}$,
both finite subsets of $\mathbb{Z}$,
such that $\mathcal{U} \subset \mathcal{U}^{\prime}$.
Then $$p_c (\mathcal{U}) \geq p_c (\mathcal{U}^{\prime}).$$ 
\end{prop}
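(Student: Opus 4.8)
The plan is to construct an explicit monotone coupling of the two Percolation PCA driven by a common field of uniform random variables, and then to transfer the resulting pointwise domination to the limit measures $\nu_p$ defined in (\ref{eq:limitmeasure}). Write $\mathcal{P}$ and $\mathcal{P}'$ for the evolution operators associated with $\mathcal{U}$ and $\mathcal{U}'$ respectively, and let $\nu_p$, $\nu_p'$ denote the corresponding limits $\lim_{t\to\infty}\delta_{\mathbf{1}}\mathcal{P}^t$ and $\lim_{t\to\infty}\delta_{\mathbf{1}}(\mathcal{P}')^t$, which exist by the monotonicity argument following (\ref{eq:limitmeasure}). I would attach to every space--time vertex $(x,t)$ an independent uniform random variable $U_x^t \sim \mathrm{Unif}[0,1]$ and realise both dynamics on the same probability space by the rule: site $x$ is set to $1$ at time $t+1$ if and only if its neighbourhood (under $\mathcal{U}$, resp.\ $\mathcal{U}'$) is not identically $\mathbf{0}$ at time $t$ and $U_x^t \le p$. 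This reproduces the transition probability (\ref{eq:transitionprob}) for each process marginally.

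Starting both coupled chains from the common configuration $\mathbf{1}$, I would prove by induction on $t$ that $\eta^t \prec \zeta^t$ almost surely, where $\eta^t$ and $\zeta^t$ denote the realisations governed by $\mathcal{U}$ and $\mathcal{U}'$. The base case is immediate since $\eta^0=\zeta^0=\mathbf{1}$. For the inductive step, fix $x$ and suppose $\eta^{t+1}_x=1$; then $\eta^t_{\mathcal{U}(x)}\neq\mathbf{0}$ and $U_x^t\le p$. The inductive hypothesis $\eta^t\prec\zeta^t$ gives $\zeta^t_{\mathcal{U}(x)}\neq\mathbf{0}$, and since $\mathcal{U}\subset\mathcal{U}'$ implies $\mathcal{U}(x)\subset\mathcal{U}'(x)$ for every $x$, we also have $\zeta^t_{\mathcal{U}'(x)}\neq\mathbf{0}$; together with $U_x^t\le p$ this forces $\zeta^{t+1}_x=1$. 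Hence $\eta^{t+1}_x\le\zeta^{t+1}_x$ for every $x$, i.e.\ $\eta^{t+1}\prec\zeta^{t+1}$, completing the induction. Note that the same shared randomness is used for both chains, which is what makes the coupling monotone rather than merely a marginal construction.

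Taking the expectation of an arbitrary monotone $\varphi$ over this coupling yields $\delta_{\mathbf{1}}\mathcal{P}^t \prec \delta_{\mathbf{1}}(\mathcal{P}')^t$ for every $t$, and letting $t\to\infty$ gives $\nu_p\prec\nu_p'$. To conclude I would invoke the definition (\ref{eq:pc}): if $p$ is such that $\nu_p'=\delta_{\mathbf{0}}$, then applying the relation $\nu_p\prec\nu_p'=\delta_{\mathbf{0}}$ to the monotone function $\varphi(\eta)=\eta_x$ gives $\nu_p(\eta_x=1)=0$ for every $x$, hence $\nu_p=\delta_{\mathbf{0}}$. This shows $\{p:\nu_p'=\delta_{\mathbf{0}}\}\subseteq\{p:\nu_p=\delta_{\mathbf{0}}\}$, and taking suprema yields $p_c(\mathcal{U}')\le p_c(\mathcal{U})$. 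The one step requiring care — and the main obstacle — is the passage to the limit: one must check that the stochastic domination valid at each finite $t$ survives as $t\to\infty$. This is where the structure of $\prec$ helps, since the inequality $\int\varphi\,d(\delta_{\mathbf{1}}\mathcal{P}^t)\le\int\varphi\,d(\delta_{\mathbf{1}}(\mathcal{P}')^t)$ holds for every monotone $\varphi$ depending on finitely many sites, and both sides converge by the existence of $\nu_p$ and $\nu_p'$, so the inequality is preserved in the limit.
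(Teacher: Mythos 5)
Your proof is correct and takes essentially the same approach as the paper: the paper also realises both dynamics on a common i.i.d.\ field indexed by space--time (via the open-path representation of Proposition~\ref{prop:percolation} together with the containment of the edge set of $\mathcal{G}_{\mathcal{U}}$ in that of $\mathcal{G}_{\mathcal{U}^{\prime}}$), which gives exactly the pointwise domination you establish by induction. The only difference is presentational --- your argument is self-contained, whereas the paper forward-references its percolation correspondence and compares the probabilities $\delta_{\mathbf{1}}\mathcal{P}^t_{\mathcal{U}}(\eta_x=0)$ directly rather than passing through full stochastic domination of the limit measures.
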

\begin{proof}
From Proposition \ref{prop:percolation}, stated in Section \ref{sect:ergodicity}, and from the fact that the edge set of the graph $\mathcal{G}_{\mathcal{U}}$ is a subset of the edge set of the graph $\mathcal{G}_{\mathcal{U}^{\prime}}$, it follows that $\forall x \in \mathbb{Z}$, $\forall t \in \mathbb{N}_0$,
$\delta_{\mathbf{1}} \mathcal{P}^t_{\mathcal{U}} (\eta_x = 0  )
\geq \delta_{\mathbf{1}} \mathcal{P}^t_{\mathcal{U}^{\prime}} (\eta_x = 0 )$ (we added the subscript to the operator in order to distinguish between the two neighbourhoods). From Definition \ref{def:critprob} it follows that, 
$$\lim\limits_{t \rightarrow \infty} \delta_{\mathbf{1}} \mathcal{P}^t_{\mathcal{U}} (\eta_x = 0  ) < 1 \implies p \geq p_c(\mathcal{U}).$$ 
Hence, $p_c (\mathcal{U}) \geq p_c (\mathcal{U}^{\prime})$. 
\end{proof}
We introduce some notation.
\begin{mydef}[Massif of zeros] We call a segment of $\mathbb{Z}$,
$$
\{k, k+1, \ldots, k+\ell-1\} \subset \mathbb{Z}
$$
a \textit{massif of zeros} of length $\ell$ for a given $\eta \in \Sigma$, if 
$\eta_{k-1}=\eta_{k+\ell}=1$ and $\eta_{k}= \ldots = \eta_{k+\ell-1} =0$.
\end{mydef}
We use $[[a,b]]$ to denote the set of integers in the interval $[a,b]$.
We use $\eta^0 \in \Sigma$ to denote the initial realisation of the 
Percolation PCA (namely the initial probability measure 
is $\delta_{\eta^0}$) and $\eta^1$, $\eta^2$, $\ldots$
the random realisations of the process at different times.
For every $T \in \mathbb{Z}_+$, we introduce the following
notation (the role of $T$ will be clear later).
For every $\eta^0$, we enumerate somehow the massifs 
of zeros of length larger or equal to  $T (s_u - s_1)$.
This means that we assign to every massif a label $k \in \mathcal{I}$,
where $\mathcal{I} \subset \mathbb{N}_0$ is the set of labels.
We denote by $R_k^0$ and by $L_k^0$ respectively the rightmost and 
the leftmost zero of the $k$-th massif.
We observe that, by definition of the transition probability for the process (\ref{eq:transitionprob}), such massifs cannot have disappeared at time $t \leq T$ (see also Figure \ref{Fig:2}).
\begin{figure}
\centering        
\includegraphics[width=\textwidth]{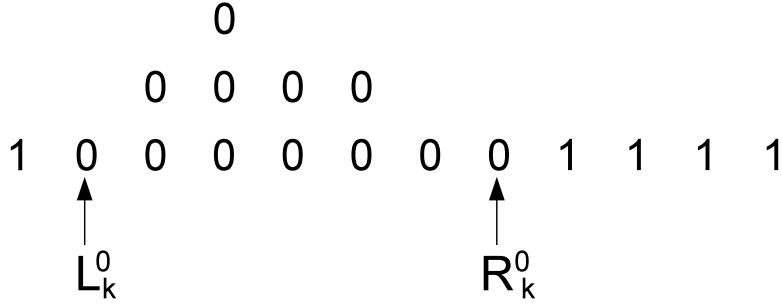}                		
\caption{In this example we consider a Percolation PCA  with $\mathcal{U}=\{ -1, 0, 1, 2 \}$. If the process starts from a realisation having a massif of zeros in 
$\{ L_k^0, \ldots, R_k^0 \}$ with $R_k^0 - L_k^0 \geq T (s_u - s_1)$, as in the figure, then the state of every site in $\{ L_k^0-s_1, L_k^0-s_1+1,\ldots,  R_k^0-s_u \}$ for the random realisation at time $1$ and in $\{ L_k^0\, - 2 \, s_1, L_k^0\, -\, 2  s_1 +1,\ldots ,  R_k^0\, -\, 2 s_u \}$ for the random realisation at time $2$ is $0$
almost surely.} 
\label{Fig:2}   
\end{figure}
For every $k \in \mathcal{I}$, we define the random variables $(R_k^t)_{t \geq 1}$ and $(L_k^t)_{t \geq 1}$ using recursion.
Namely, $\forall k \in \mathcal{I}$, $\forall t \in \mathbb{Z}_+$,
\begin{equation}
\label{eq:R}
R_k^t :=
\begin{cases}
\max\limits_{x \in \mathbb{Z}} \{\forall y \in [[L_k^{t-1} -  s_1, x]], \, \,  \eta^t_{y}  = 0
 \} & \mbox{ if }  R_k^{t-1} - L_k^{t-1} \geq  (s_u - s_1) \\
- \infty	  & \mbox{otherwise}\\
\end{cases}
\end{equation}
\begin{equation}
\label{eq:L}
L_k^t :=
\begin{cases}
\min\limits_{x \in \mathbb{Z}} \{\forall y \in [[x, R_k^{t-1}- s_u]], \, \,  \eta^t_{y}  = 0
 \} & \mbox{ if }  R_k^{t-1} - L_k^{t-1} \geq  (s_u - s_1) \\
+ \infty	  & \mbox{otherwise}\\
\end{cases}.
\end{equation}
Namely $R_k^t$ and $L_k^t$ keep track of the temporal evolution of
two extremal sites of the $k$-th massif. If the distance between such sites
at a given time is less than $ (s_u - s_1)$, then at all subsequent times $R_k^t= -\infty$ and $L_k^t = + \infty$.
Instead if at time $R_k^t - L_k^t \geq (s_u - s_1)$, then at time $t+1$ the massif still exists almost surely.
Note that it might happen that two or more massifs merge 
at a certain time. In this case more than one label is used to denote the same massif.
The next lemma shows that if the massifs of zeros are ``on average'' expanding,
then the state of the system at infinite time is zero almost surely.
As this happens independently on the initial realisation, the process is ergodic.
\begin{lem}
\label{lem:randomwalk}
For every $T \in \mathbb{Z}_+$, the following statement holds.
If there exist two families of independent and identically distributed random variables $({\pi}_k^{i})_{i , k \in \mathbb{N}}$, 
$({\xi}_k^{i})_{i, k \in \mathbb{N}}$,
such that 
$\forall \eta^0 \in \Sigma$, $\forall k \in \mathcal{I}, \forall i \in \mathbb{N}_0$, $\forall j \in \mathbb{Z}$, the conditions (\ref{eq:lemcond1}),  (\ref{eq:lemcond2}),
(\ref{eq:lemcond3}) hold,
\begin{align}
\label{eq:lemcond1}
P( {\pi}_k^{i } \geq j) & \leq  \mathcal{E}_{\delta_{\eta^0}} ( R_k^{i T + T} - 
R_k^{i T}  \geq j \, | \, R_k^{i T} - L_k^{i T} \geq  T (s_u - s_1) ) \\
\label{eq:lemcond2}
P( {\xi}_k^{i } \leq -j)  & \leq \mathcal{E}_{\delta_{\eta^0}} ( 
L_k^{i T + T} - L_k^{i T} \leq - j \, | \,  R_k^{i T} - L_k^{i T} \geq  T (s_u - s_1))
\end{align}
\begin{equation} 
\label{eq:lemcond3}
E[\pi_1^1] > E[\xi_1^1]
 \end{equation}
then $\forall \mu \in \mathcal{M}(\Sigma)$,
\begin{equation}
\label{eq:lemmaconcl}
\lim\limits_{t \rightarrow \infty} \mu \mathcal{P}^t =  \delta_{\mathbf{0}}.
\end{equation}
\end{lem}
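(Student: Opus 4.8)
The plan is to collapse the whole statement onto a single scalar and then run a random‑walk/renewal expansion on the massifs. First I would exploit monotonicity: since $\delta_{\mathbf 0}\prec \mu\prec\delta_{\mathbf 1}$ for every $\mu\in\mathcal M(\Sigma)$, Definition \ref{def:monotoneoperator} gives $\delta_{\mathbf 0}=\delta_{\mathbf 0}\mathcal P^t\prec\mu\mathcal P^t\prec\delta_{\mathbf 1}\mathcal P^t$ for all $t$. As $\delta_{\mathbf 1}\mathcal P^t$ decreases to the invariant measure $\nu_p$ defined in (\ref{eq:limitmeasure}), testing against monotone local functions and squeezing shows that (\ref{eq:lemmaconcl}) is equivalent to $\nu_p=\delta_{\mathbf 0}$; by translation invariance of $\nu_p$ this in turn reduces to proving that $\rho:=\lim_{t\to\infty}\mathcal E_{\delta_{\mathbf 1}}(\eta^t_0=1)=\nu_p(\eta_0=1)$ equals $0$.

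The engine is the width of a massif sampled every $T$ steps. For a label $k\in\mathcal I$ set $W_i:=R_k^{iT}-L_k^{iT}$. The threshold $T(s_u-s_1)$ is chosen precisely so that whenever $W_i\geq T(s_u-s_1)$ the massif cannot be destroyed during the next block of $T$ single steps: at each single step the deterministic core $[[L-s_1,R-s_u]]$ loses at most $s_u-s_1$, so after $j\leq T-1$ steps the width is still at least $(T-j)(s_u-s_1)\geq s_u-s_1$, which keeps the recursion (\ref{eq:R})--(\ref{eq:L}) alive. On this event, hypotheses (\ref{eq:lemcond1})--(\ref{eq:lemcond2}) let me couple the block increment $W_{i+1}-W_i$ from below by the i.i.d. variables $\zeta_i:=\pi_1^{i}-\xi_1^{i}$, whose common mean $E[\pi_1^1]-E[\xi_1^1]$ is strictly positive by (\ref{eq:lemcond3}). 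Thus, up to the first block at which $W_i$ drops below the threshold, the width stochastically dominates a random walk with positive drift started above an absorbing barrier.

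Next I would invoke the standard fact that a random walk with strictly positive mean increment has strictly positive probability of never returning to or below its starting level (Sparre--Andersen/Spitzer, using that $P(\text{partial sum}\le 0)$ is summable against $1/n$ once $\pi_1^1,\xi_1^1$ are taken with finite variance). Hence a massif whose width ever reaches $T(s_u-s_1)$ survives forever with $W_i\to\infty$, with some probability $q=q(p)>0$ that is uniform in the initial configuration. Starting from $\delta_{\mathbf 1}$, the configuration after one step is an i.i.d.\ Bernoulli field, so massifs of length $\geq T(s_u-s_1)$ occur with positive density; by spatial independence, infinitely many of them survive and grow without bound almost surely.

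Finally --- and this is the step I expect to be the main obstacle --- I must upgrade ``some massif grows forever'' to ``the origin is covered for all large $t$'', which is what $\rho=0$ demands. The difficulty is that the conditions control only the width $R_k^t-L_k^t$, not the individual drifts of $R_k^t$ and $L_k^t$; if these have the same sign the massif translates while it expands, so no single massif is guaranteed to sweep over a prescribed site. I would resolve this by a block/renewal argument in the spirit of \cite{DurretOr}: using translation invariance of $\mathcal E_{\delta_{\mathbf 1}}$ together with the positive width-drift, the surviving massifs whose expanding supports cross the origin at a given large time $t$ form a spatial cohort that is non-empty almost surely once at least one of its members has survived, an event whose probability tends to $1$; the merging of neighbouring massifs (a single absorbed region carrying several labels) is exactly the mechanism that glues the cohort together over the origin. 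This yields $\mathcal E_{\delta_{\mathbf 1}}(\eta^t_0=0)\to 1$, hence $\rho=0$ and $\nu_p=\delta_{\mathbf 0}$; combined with the monotonicity squeeze of the first step, $\mu\mathcal P^t\to\delta_{\mathbf 0}$ for every $\mu$, which is (\ref{eq:lemmaconcl}).
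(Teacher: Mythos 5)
Your skeleton matches the paper's: dominate the massif endpoints by i.i.d.\ increments, use the positive drift $E[\pi_1^1]>E[\xi_1^1]$ to get persistent expansion, and use the fact that one application of $\mathcal P^T$ produces long massifs of zeros with probability close to $1$ in any long spatial window. The reduction to $\nu_p=\delta_{\mathbf 0}$ via the monotonicity squeeze is also fine. The problem is the step you yourself flag as the main obstacle: your resolution of it is not a proof, and it rests on a misreading of the hypotheses. Conditions (\ref{eq:lemcond1}) and (\ref{eq:lemcond2}) do \emph{not} control only the width $R_k-L_k$: they bound the increments of $R_k^{iT}$ from below by $\pi_k^i$ and those of $L_k^{iT}$ from above by $\xi_k^i$ \emph{separately}, so each endpoint is individually dominated by a random walk, with drifts $E[\pi_1^1]$ and $E[\xi_1^1]$ per block of $T$ steps. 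This is exactly what makes the localization step work, and your "cohort glued together by merging" is a red herring --- merging of massifs plays no role in the paper's proof of this lemma.

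Concretely, the paper picks two slopes $c_1<c_2$ strictly between $E[\xi_1^1]$ and $E[\pi_1^1]$ and shows, by a standard large-deviation bound for random walks with drift, that a massif initially occupying $[i_0,j_0]$ with $j_0-i_0\geq T(s_u-s_1)+n+m$ satisfies $L_k^{iT}\leq c_1(i-1)T+L_k^T+n$ and $R_k^{iT}\geq c_2(i-1)T+R_k^T-m$ for \emph{all} $i$ with probability at least $1-u^m-v^n$; on that event a deterministic space-time cone $Y^{m,n}_{[i_0,j_0]}$, widening at rate $c_2-c_1>0$, is entirely in state zero. The drift of the cone is then handled not by waiting for a wandering massif to sweep the target site, but by choosing, for each target $(x,t)$, the window $[y,y+d]$ in which one looks for the massif so that the \emph{common intersection} $U^{m,n}_{[y,d]}$ of all cones emanating from sub-intervals of $[y,y+d]$ contains $(x,t)$; since a sufficiently long massif exists in that window under $\delta_{\eta^0}\mathcal P^T$ with probability $\to1$ as $d\to\infty$, one gets $\delta_{\eta^0}\mathcal P^t(\eta_x=0)\geq 1-\epsilon$. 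Your version also leans on "by spatial independence, infinitely many of them survive", which is not available as stated (the survival events of distinct massifs are driven by overlapping cones of randomness); the paper sidesteps this by making the \emph{single}-massif survival probability $1-u^m-v^n$ as close to $1$ as needed rather than invoking independence across massifs. Without the quantitative cone construction and the choice of window, your final step does not close.
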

In the statement of the lemma $P (\, \cdot \, )$ denotes the probability distribution of the random variables $\pi_k^i$ or $\xi_k^i$. Such random variables stochastically dominate from below the change of position of the rightmost and leftmost site of the massif every $T$ steps. We also recall that $\mathcal{E}_{\delta_{\eta^0}}$ has been defined in Definition \ref{def:evolutionmeasure}.  
The proof of the lemma is similar to the proof of Proposition 6.4 in \cite{ToomDiscr}.
\begin{proof}
It is sufficient to prove that $\forall \eta^0 \in \Sigma$, $\forall x \in \mathbb{Z}$, $\forall \epsilon >0$,
$\exists \, t_0$ such that  
\begin{equation}
\label{eq:lemtoprove}
\forall t > t_0, \, \, \, \, \, \, 
\delta_{\eta^0} \mathcal{P}^t (\eta_x=0) \geq 1 - \epsilon,
\end{equation}
from which condition (\ref{eq:lemmaconcl}) follows.

We define $c_1:=\frac{ 3 E[{\xi}_1^1] + E[{\eta}_1^1]}{4}$
and $c_2:=\frac{  E[{\xi}_1^1] + 3 E[{\eta}_1^1]}{4}$,
where $E[ \cdot ]$ denotes the expectation, and we observe that
if (\ref{eq:lemcond3}) holds, then $c_2 > c_1$.
Then for every $\eta^0 \in \Sigma$, 
$\forall k \in \mathcal{I}$,
$\forall n, m \in \mathbb{Z}^+$,
$\forall i_0, j_0 \in \mathbb{Z}$ such that
$j_0 - i_0 \geq  T (s_u - s_1) + m + n$,
there exists two constants $u , v \in [0,1)$ such that,
\begin{multline}
\label{eq:lemeq1}
\begin{split}
  \mathcal{E}_{\delta_{\eta^0}} 
( \forall i \geq & 1, \, \, \, 
L_k^{i T} \leq  \,  c_1 (i-1) T  + L_k^T + n, \, \, 
R_k^{i T} \geq c_2 (i -1) T  + R_k^T - m
\, \mid\, L_k^T=i_0, R_k^T=j_0) \\
\geq 
 & \, \,  P  ( \forall i \geq 1, 
\, \, \, 
 \sum\limits_{j=1}^i {\xi}_k^j \leq c_1 (i-1) T+ i_0 + n, \, \, 
 \sum\limits_{j=1}^i {\pi}_k^j \geq c_2 (i -1) T + j_0 - m)  \\
\geq   & \, \, 1 - u^m - v^n.
\end{split}
\end{multline}
In the previous expression $P( \cdot )$ denotes the probability
measure defined on the space of outcomes of the sum of the increments
${\xi}_k^t$, ${\pi}_k^t$.
The first inequality follows from  (\ref{eq:lemcond1}) and (\ref{eq:lemcond2}).
The second inequality follows from the properties of the one dimensional random walk, observing that by definition $E[\xi_k^i] < c_1$ and $E[\pi_k^i] > c_2$.
The two constants $u$ and $v$ depend on the probability distribution of the increments of the random walk.

We observe that if for all $t$ multiple of $T$, $R^t_k \geq c_2 t + j_0 -m$ and $L_k^t \leq c_1 t + i_0 + n$ (event in the first expression in (\ref{eq:lemeq1})) , then for all $t \in \mathbb{N}_0$,
$R^t_k \geq c_2 t + j_0 -m - s_u T $ and $L_k^t \leq c_1 t + i_0 + n - s_1 T$.  Hence, the state of all sites in the space-time region $Y^{m,n}_{[i_0, j_0]} := 
\{ (x,t) \, \, : \, \,  t \in \mathbb{Z}_+ \mbox{ and } c_1 t + n + i_0 -T s_1   \leq x \leq c_2 t - m + j_0  - T s_u \}$ is zero. This region is represented in Figure \ref{Fig:lemm} on the left. This follows from the observation that by definition of transition probability of the Percolation PCA the following property holds, namely,
\begin{equation}
R_k^t - L_k^t \geq  T (s_u - s_1) \implies \forall q < T, \, \, R_k^{t+q}  \geq  R_k^t - q s_u,  L_k^{t+q}  \leq   L_k^t - q s_1.
\end{equation}
\begin{figure}
\centering        
\includegraphics[width=0.45\textwidth]{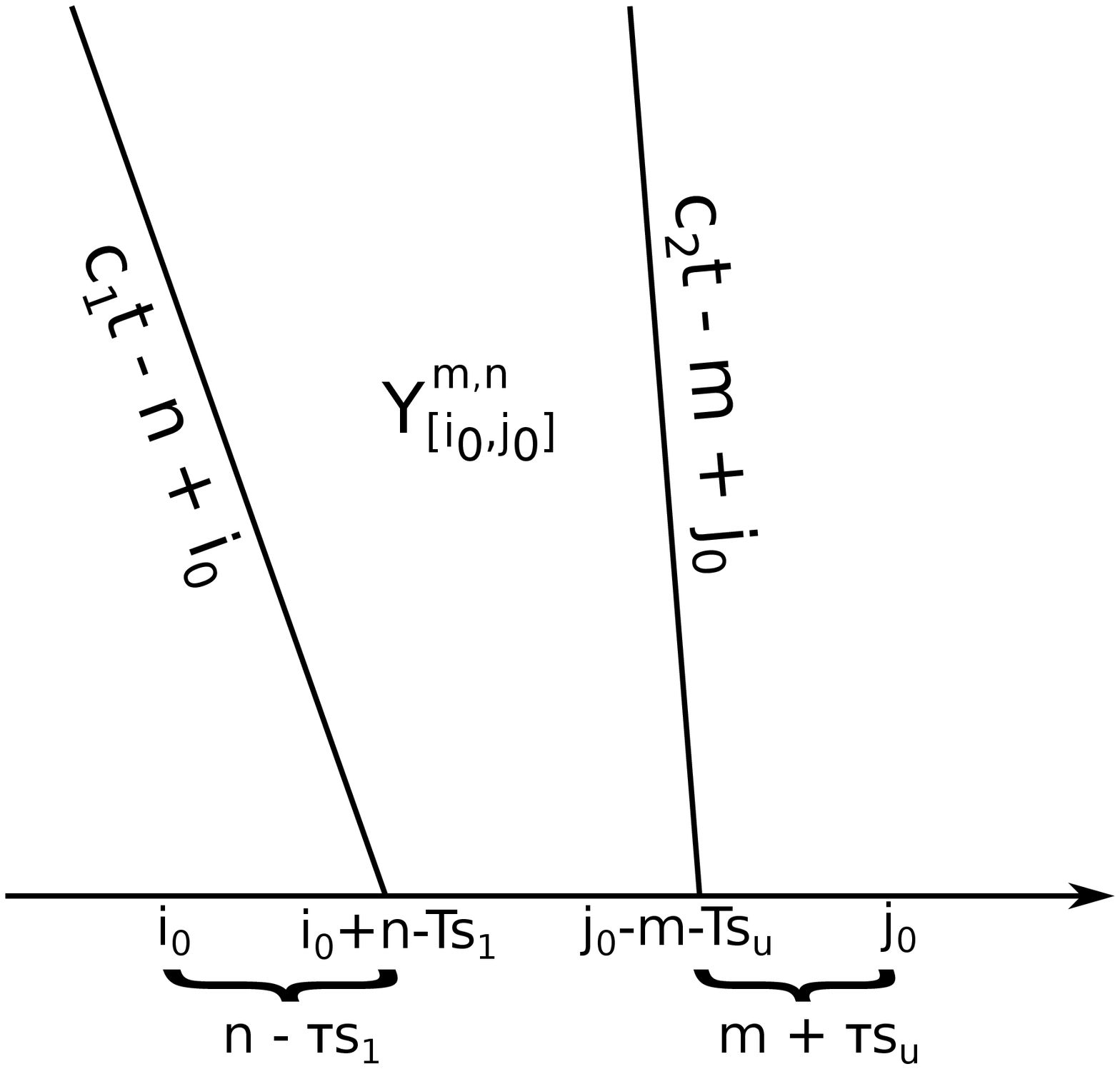} 
\includegraphics[width=0.45\textwidth]{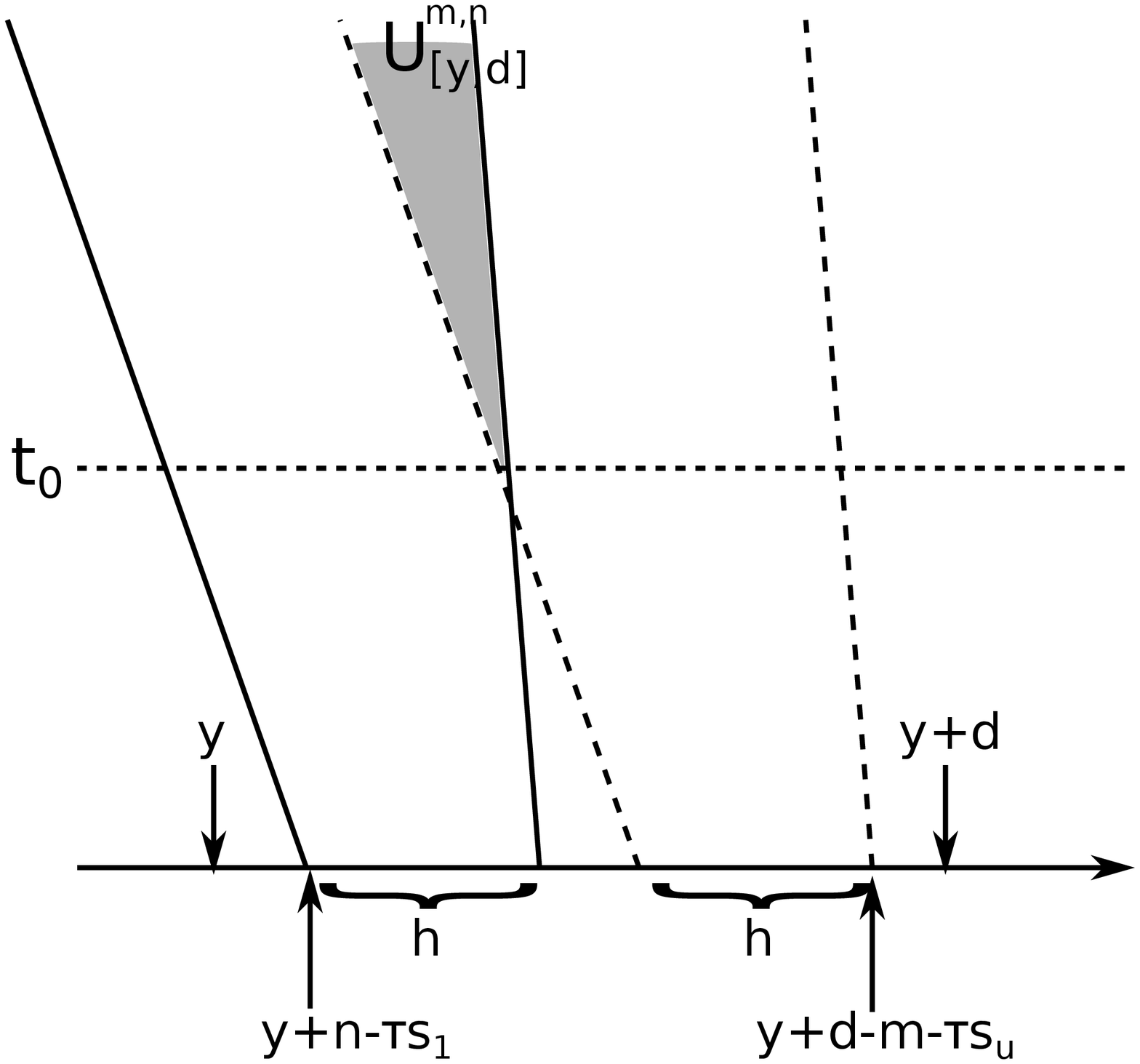}                		
\caption{The variable $h$ on the right is defined as $h:= j_0 - i_0 - T(s_u - s_1) - n - m$.}
\label{Fig:lemm}   
\end{figure}

Furthermore we observe that $\forall \eta^0 \in \Sigma$, $\forall x \in \mathbb{Z}$, $\forall n, m \in \mathbb{Z}_+$, 
the measure $\delta_{\eta^0}\mathcal{P}^{T}$ is such that the probability that
there exists a massif of zeros of length $j_0 - i_0 \geq  T (s_u - s_1) + n + m$
in $[y, d]$ goes to $1$ as $d \rightarrow \infty$. 
We choose then $n$ and $m$ such that $u^m + v^n < \frac{\epsilon}{2}$
and  $d$ large enough such that such probability is larger than
$1 - \frac{\epsilon}{2}$ for all $y$.

Simple geometrical considerations show that for any $y \in \mathbb{Z}$, $d \in \mathbb{Z}_+$, all regions $Y^{n,m}_{[i_0, j_0]}$, where $[i_0, j_0] \subset [y, y+d]$, have a non empty common region (dark region in Figure \ref{Fig:lemm} - right). We call $U^{m,n}_{[y,d]}$ this region.
From (\ref{eq:lemeq1}) and from the previous observations the following property holds,
\begin{equation}
\mathcal{E}_{\delta_{\eta^0}}  \left (  \, \forall (x,t) \in U_{[y,d]},  \, 
\eta_x^t =0 \, \, \right ) > 1 - \epsilon.
\end{equation}
Choosing $y$ and $d$ such that $(x, t) \in U_{[y,d]}$ implies (\ref{eq:lemtoprove}).
\end{proof}

\begin{proof}[\textbf{Proof of Theorem \ref{theo:maintheorem1}}]
We provide a lower bound for the critical probability of the Percolation PCA with neighbourhood
\begin{equation}
\label{eq:containedintoex}
\mathcal{U} := \{ s_1, s_1 + 1,  \ldots, s_{u}-1, s_u\},
\end{equation}
i.e. all elements between the two extremal ones are present. Our bound is a function of $s_u - s_1$. By Proposition \ref{prop:containedinto} such bound holds also for Percolation PCA with neighbourhood obtained removing some sites from 
(\ref{eq:containedintoex}).

The proof of the theorem is based on an application of Lemma \ref{lem:randomwalk}.
We fix a value $T \in \mathbb{Z}_+$ and by using the monotonicity property of the Percolation PCA, we define the random variables $\pi_k^t$ and $\xi_k^t$, whose probability distribution satisfies $\forall p \in [0,1]$ the conditions (\ref{eq:lemcond1}) and (\ref{eq:lemcond2}) of Lemma \ref{lem:randomwalk}.
We define,
$$p_T := \max\limits_{p \in [0,1]} \{ E[\pi^1_1] > E[\xi^1_1] \}.$$
From Lemma \ref{lem:randomwalk}, for all $p \geq p_T$ the Percolation PCA
is ergodic. From Definitions \ref{def:critprob} and \ref{def:ergodic},  $p_T \leq p_c$. We fix first $T=1$ and we derive $p_1$, later we consider 
$T=2$ and we derive $p_2$. Both $p_1$ and $p_2$ appear in the statement 
of the theorem. Higher is the value of  $T$ considered, more challenging is the estimation of $p_T$, as this involves the characterization of the 
increments of $L_k^t, R_k^t$ over a larger time interval.

\paragraph*{}
Fix then an integer $T \in \mathbb{Z}^+$ and consider an initial realisation $\eta^0 \in \Sigma$. Enumerate somehow the massifs of zeros having length not smaller than $T (s_u - s_1)$ and recall the definitions of the random variables
$R_k^t$, $L_k^t$,  $t \in \mathbb{N}_0$, $k \in \mathcal{I}$, provided before the statement of Lemma \ref{lem:randomwalk}.
For any $A \subset \mathbb{Z}$,  let $\mathbbm{1}^t_A : \tilde{\Sigma} \rightarrow \tilde{\Sigma}$ be the function that is equal to $1$ if the state of all sites in $A$ at time $t \in \mathbb{N}_0$  is zero and zero otherwise. Let
$\mathbbm{1}_A : {\Sigma} \rightarrow {\Sigma}$
(without the superscript) be the function that is equal to $1$ if the state of all
sites in $A$ is zero and zero otherwise. Observe that $1 - \mathbbm{1}^t_A$ and $1 - \mathbbm{1}_A$ are monotone functions.
Let also $\rho(x,y) \in \Sigma$ be the realisation having zeros in $[[x,y]]$ and 
ones everywhere else.
Then $\forall   \eta^0, \forall \eta \in \Sigma$, $\forall t \in \mathbb{Z}_+$, $\forall k \in \mathcal{I}$, $\forall x, y \in \mathbb{Z}$ such that 
$ y - x \geq T (s_u - s_1)$, $\forall j \in \mathbb{Z}_0$,
the following relations hold,
\begin{align}
\label{eq:stochdomination1}
\mathcal{E}_{\delta_{\eta^0}} ( & R_k^{t + T} - 
R_k^{t}   \geq j \, |  \, R_k^{t} =y,  L_k^{t} =x, \eta^t = \eta ) \\
\label{eq:stochdomination2}
& = \mathcal{E}_{\delta_{\eta^0}} ( \mathbbm{1}^{t + T}_{ [[x - T s_1, y - T s_u + j ]]} \, |  \, R_k^{t} =y,  L_k^{t} =x, \eta^t = \eta ) \\
\label{eq:stochdomination3}
& =\delta_{\eta} \mathcal{P}^T ( \mathbbm{1}_{ [[x - T s_1, y - T s_u + j ]]} ) \\
\label{eq:stochdomination4}
& \geq \delta_{\rho(x,y)} \mathcal{P}^T ( \mathbbm{1}_{ [[x - T s_1, y - T s_u + j ]]} ). 
\end{align}
Equation (\ref{eq:stochdomination2}) follows from the definition of
$R_k^t$, equation (\ref{eq:stochdomination3}) follows from the Markov
property of the probabilistic cellular automaton,
inequality (\ref{eq:stochdomination4}) follows from the monotonicity property of the Percolation PCA, as any realisation $\eta \in \Sigma$ having a massif of zeros in $[[x,y]]$ is such that $\eta \prec \rho(x,y)$.
Similarly, 
\begin{align}
\label{eq:stochdomination1L}
\mathcal{E}_{\delta_{\eta^0}} ( & L_k^{t + T} - 
L_k^{t}   \leq - j \, |  \, R_k^{t} =y,\,  L_k^{t} =x,\, \eta^t = \eta ) \\
\label{eq:stochdomination3L}
& =\delta_{\eta} \mathcal{P}^T ( \mathbbm{1}_{ [[x - T s_1 - j , y - T s_u ]]} ) \\
\label{eq:stochdomination4L}
& \geq \delta_{\rho(x,y)} \mathcal{P}^T ( \mathbbm{1}_{ [[x - T s_1 - j , y - T s_u ]]} ). 
\end{align}
We also observe that from the definition of transition probability of the Percolation PCA,  the quantities (\ref{eq:stochdomination4}) and (\ref{eq:stochdomination4L}) do not depend on the sites $x,y \in \mathbb{Z}$, as long as  $y- x \geq T (s_u - s_1)$.
Thus, we provide the following definitions of the probability distribution of the random variables $\pi_k^t$, $\xi_k^t$. Namely, fix $y$ and $x$ such that $y - x \geq T (s_u - s_1)$ and $\forall k, \forall t \in \mathbb{N}_0$ we define,
\begin{align}
\label{eq:pik}
P(\pi_k^t \geq j) & :=     \delta_{\rho(x,y)} \mathcal{P}^T ( \mathbbm{1}_{ [[x - T s_1,\, y - T s_u + j ]]})  \\
\label{eq:xik}
P(\xi_k^t \leq -j) & :=  \delta_{\rho(x,y)} \mathcal{P}^T ( \mathbbm{1}_{ [[x - T s_1 - j,\, y - T s_u ]]}).
\end{align}
With this definition, from (\ref{eq:stochdomination1}) - (\ref{eq:stochdomination4L})
the first two conditions  of
Lemma \ref{lem:randomwalk}, namely (\ref{eq:lemcond1}) and (\ref{eq:lemcond2}), are satisfied.
The maximum among all $p \in [0,1]$ such that condition (\ref{eq:lemcond3})
is satisfied is $p_{T} \leq p_c$.

We fix now $T=1$ and we provide an estimation for (\ref{eq:pik})
and (\ref{eq:xik}) for any $j \in \mathbb{Z}$. After this we determine which values of $p$ satisfy (\ref{eq:lemcond3}).
We consider the Percolation PCA starting from initial realisation $\rho(x,y)$ and we assign assign the label $1$ to the unique massif of zeros, namely $R_1^0 = y$ and $L_1^0=x$.
We recall that by definition,
\begin{equation}
R^1_1  \geq j + R_1^0 - s_u \iff \forall z \in [[L_1^0-s_1,  R_1^0  - s_u + j]],  
\, \, \eta_z^1 = 0
\end{equation}
(see also Figure \ref{Fig:3}).
Hence, 
$\forall j \in \mathbb{N}_0$, 
\begin{equation}
\delta_{\rho(x,y)} \mathcal{P} ( R^1_1  \geq j + R_1^0 - s_u) = (1-p)^j.
\end{equation}
This bound is obtained considering that almost surely 
$\forall z \in [[ x - s_1, y  - s_u]]$,
$\eta^1_z = 0$ and that $\forall z \in [[y - s_u + 1,  y  -s_u  +j]]$,
independently $\delta_{\rho(x,y)} \mathcal{P} (\eta_z=0) = (1-p)$.
Analogously, $\forall j \in \mathbb{N}_0$, 
\begin{equation}
\delta_{\rho(x,y)} \mathcal{P} ( L^1_1  \leq -j + L^0_1  - s_1) = (1-p)^j.
\end{equation}
Thus for all $j \in \mathbb{N}_0$, we define the probability distributions of $\pi^t_1$ and $\xi^t_1$ 
respectively as,
\begin{align}
\label{eq:tau1-1}
P( \pi^t_k  \geq j - s_u ) & := (1-p)^j \\
P( \xi^t_k  \leq - j - s_1) & := (1-p)^j.
\end{align}
With this definition, from the relations 
(\ref{eq:stochdomination1}) - (\ref{eq:stochdomination4L}),
the relations (\ref{eq:lemcond1}) and (\ref{eq:lemcond2}) are satisfied.
It remains to determine for which values of $p \in [0,1]$ the second condition of Lemma \ref{lem:randomwalk} holds.
By a simple computation, 
\begin{align}
\label{eq:tau1-1}
E[ \pi_1 ] = \frac{1-p}{p}  - s_u, \\
E[ \xi_1 ] =   \frac{1-p}{p} - s_1.
\end{align}
and 
\begin{equation}
\label{eq:tau1-2}
E[ \xi_1]  - E[\pi_1]  \geq 0 \, \,  \iff \, \,  p \geq p_1,
\end{equation}
where  $p_1 := \frac{2}{2 + s_u - s_1}$ appears on the statement of the theorem. Thus we proved that $p_c \geq p_1$.
\begin{figure}
\centering        
\includegraphics[width=\textwidth]{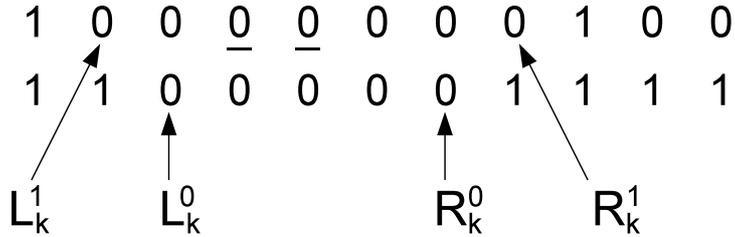}                		
\caption{In this example we consider a Percolation PCA with
$\mathcal{U}=\{ -1, 0, 1, 2 \}$. If the process starts from the realisation represented in the figure (row below), then the state of the sites above the small
horizontal ball is almost surely $0$ at time $1$ (row above).}
\label{Fig:3}   
\end{figure}

\paragraph*{} We fix now $T=2$ and we use the same argument.
Namely we consider the Percolation PCA starting from initial realisation
$\rho(x,y) \in \Sigma$ such that $y-x \geq 2 (s_u - s_1)$
and we assign label $1$ to the unique massif of zeros of $\rho(x,y)$.
We recall that by definition of $R_1^2$, 
$$R_1^2  \geq  j  +  R_1^0  - 2 s_u \iff \forall z \in  [[L_1^0 - 2 s_1, R_1^0 - 2 s_u + j]],  \, \, \, \eta^2_z=0.$$
From definition (\ref{eq:operator}) it follows that
\begin{multline}
\begin{split}
\label{eq:tau2-1}
\delta_{\rho(x,y)} \mathcal{P}^2(  R_1^2  \geq  j  +  y  - 2 s_u  )  & =     
\sum_{ \eta^1 \in A_R}
\delta_{\rho(x,y)} \mathcal{P}  ( C_{\eta^{1}}  )  \\ &
\prod_{z \in  [[ y -2 s_u,y- 2 s_u + j ]]}
T (    \eta^{2}_z=0  | \eta^1_{\mathcal{U}(z)} ),
\end{split}
\end{multline}
where $A_R: = \{0,1 \}^{[[y- 2 s_u + s_1, y  - s_u +j]]}$ and 
$C^R_{\eta^1} = \{ \eta^{\prime} \in \Sigma \mbox{ s.t. } 
\forall z \in  [[y - 2 s_u + s_1, y  - s_u +j]], 
\, \, \, \eta^{\prime}_z = \eta^{1}_z \,  \}$.
The sum is reduced to the elements of $A_R \subset \Sigma$
because the states of the sites in the interval  $[[y - 2 s_u, y  - 2s_u +j]]$ for $\eta^2$ depend only on the states of the sites in the finite interval  
$[[  y- 2s_u + s_1, y - s_u +j ]]$ for $\eta^1$.
A similar expression holds for the random variable $L_1^2$,
 \begin{multline}
\begin{split}
\label{eq:tau2-left}
\delta_{\rho(x,y)} \mathcal{P}^2(  L_1^2  \leq  - j  + x - 2 s_1  )  & =     
\sum_{ \eta^1 \in A_L}
\delta_{\rho(x,y)} \mathcal{P}  ( C^L_{\eta^{1}}  )  \\ &
\prod_{z \in  [[ x -2 s_1 - j, x- 2 s_u ]]}
T (    \eta^{2}_z=0  | \eta^1_{\mathcal{U}(z)} ),
\end{split}
\end{multline}
where $A_L: = \{0,1 \}^{[[x -  s_1 - j, x  - 2s_1 + s_u]]}$
and $C^L_{\eta^1} = \{ \eta^{\prime} \in \Sigma \mbox{ s.t. }
\forall z \in [[x - 2 s_1 - j, x  - 2s_1 + s_u]], \, \, \, \eta^{\prime}_z = \eta^{1}_z \}$.

The exact computation of the left hand side of (\ref{eq:tau2-1})
and of (\ref{eq:tau2-left}) for any $j$ is a difficult combinatorial problem, as for each of the $2^j$ possible realizations one should determine the corresponding product of transition probabilities. We present our estimations in the following proposition.
\begin{prop}
\label{prop:estimations}
Consider the realisation $\rho(x,y) \in \Sigma$ which has zeros in  $[[x,y]]$ and
ones everywhere else, where  $x,y \in \mathbb{Z}$ are such that $y-x \geq 2 (s_u - s_1)$. Assign label $1$ to the unique massif of zeros of $\rho(x,y)$ and recall
the definition of $(R_1^t)_{t \in \mathbb{N}_0}$, $(L_1^t)_{t \in \mathbb{N}_0}$.
Then,
\begin{align}
\label{eq:R20}
& \delta_{\rho(x,y)} \mathcal{P}^2  (  R_1^2 \geq  R_1^0 - 2s_u)  
=1,   \\
\label{eq:L20}
& \delta_{\rho(x,y)} \mathcal{P}^2  (  L_1^2 \leq  L_1^0 -  2s_1) 
=1, \\
\label{eq:R21}
& \delta_{\rho(x,y)} \mathcal{P}^2  (  R_1^2 \geq  1 +R_1^0 - 2s_u)  
\geq 1 - p^2,   \\
\label{eq:L21}
& \delta_{\rho(x,y)} \mathcal{P}^2  (  L_1^2 \leq  -1 + L_1^0 -  2s_1) 
\geq1 - p^2,   \\
\label{eq:R22}
& \delta_{\rho(x,y)} \mathcal{P}^2  (  R_1^2 \geq  2 + R_k^0 - 2s_u)  
\geq (1-p)^2 (1 + 2p), \\
\label{eq:L22}
& \delta_{\rho(x,y)} \mathcal{P}^2  ( L_1^2 \leq -2  + L_1^0 -  2s_1)  
\geq (1-p)^2 (1 + 2p),
\end{align}
 for any $3 \leq j \leq s_u - s_1$,
\begin{align}
\label{eq:tau2-estj}
& \delta_{\rho(x,y)} \mathcal{P}^2  ( R_1^2 \geq j +  R_1^0 - 2s_u   ) 
\geq  j \,  p (1-p)^j  + (1-p)^{j} + (1-p)^{2j},   \\
& \delta_{\rho(x,y)} \mathcal{P}^2  ( L_1^2 \leq -j + L_1^0 - 2s_1   ) 
 \geq  j \,  p (1-p)^j  + (1-p)^{j} + (1-p)^{2j},  
\end{align}
 and for any $j > s_u - s_1$,
\begin{equation}
\begin{split}
\label{eq:tau2-estj2}
\delta_{\rho(x,y)} \mathcal{P}^2  ( R_1^2 \geq j +  R_1^0 - 2s_u   ) 
& \geq   j \,  p (1-p)^j  +   (1-p)^j \\  & +  p (1-p)^{j+s_u - s_1} (j - s_u + s_1 - \frac{1}{p}) \\ & + 2 (1-p)^{2j},
\end{split}
\end{equation}
\begin{equation}
\begin{split}
\label{eq:tau2-estj2L}
\delta_{\rho(x,y)} \mathcal{P}^2  ( L_1^2 \leq -j + L_1^0 - 2s_1   ) 
& \geq   j \,  p (1-p)^j  +   (1-p)^j \\  & +  p (1-p)^{j+s_u - s_1} (j - s_u + s_1 - \frac{1}{p}) \\ & + 2 (1-p)^{2j}.
\end{split}
\end{equation}
\end{prop}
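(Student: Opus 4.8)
The plan is to evaluate each probability by conditioning on the configuration $\eta^1$ at time $1$ and then exploiting the conditional independence of the simultaneous updates that produce $\eta^2$, which is exactly the decomposition already set up in (\ref{eq:tau2-1}) and (\ref{eq:tau2-left}). Since the initial state $\rho(x,y)$ has the single massif $[[x,y]]$ and the rule (\ref{eq:transitionprob}) forces a site to zero precisely when its whole neighbourhood is zero, at time $1$ every site of $[[x-s_1,\,y-s_u]]$ is zero almost surely, whereas each site $m+\ell$ strictly to the right of $m:=y-s_u$ is zero independently with probability $1-p$, since its neighbourhood necessarily meets the surrounding ones. The hypothesis $y-x\ge 2(s_u-s_1)$ guarantees that over two steps the left and right boundaries never interact, so the right-edge and left-edge statements can be treated separately, and the latter follow from the former by reflecting the neighbourhood (exchanging $s_1$ and $s_u$). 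This yields a clean two-layer description: a deterministic core carried by the forcing rule, surrounded by an i.i.d.\ Bernoulli$(1-p)$ boundary $Z_1,Z_2,\dots$ at the sites $m+1,m+2,\dots$.

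First I would settle the small increments. Identities (\ref{eq:R20}) and (\ref{eq:L20}) are immediate once $y-x\ge 2(s_u-s_1)$, since the time-$2$ core $[[x-2s_1,\,y-2s_u]]$ is then zero with probability one. Writing $z_i:=y-2s_u+i$, the event $\{R_1^2\ge j+R_1^0-2s_u\}$ coincides with $\{z_1,\dots,z_j\text{ are all zero at time }2\}$, and for $j=1,2$ this depends only on $Z_1$ (resp.\ $Z_1,Z_2$); a direct enumeration of the $2^{j}$ boundary configurations then gives the values in (\ref{eq:R21})–(\ref{eq:L22}).

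For the general increments I would reduce the probability to a single expectation. Conditionally on $\eta^1$ the $j$ updates at $z_1,\dots,z_j$ are independent, so the conditional probability that all of them are zero equals $\prod_{i=1}^{j}(1-p)^{\mathbbm{1}[A_i]}$, where $A_i$ is the event that the neighbourhood $\mathcal{U}(z_i)$ contains a one at time $1$; this is the product of transition factors in (\ref{eq:tau2-1}), and the quantity to estimate is $\mathbb{E}\big[\prod_{i=1}^{j}(1-p)^{\mathbbm{1}[A_i]}\big]$. I would bound this from below by retaining the contributions of a suitable family of favourable boundary configurations. For $3\le j\le s_u-s_1$ the neighbourhood $\mathcal{U}(z_i)$ sees exactly the boundary variables $Z_1,\dots,Z_i$, so the events obey $A_1\subset A_2\subset\cdots$ and the product is governed by the length of the initial run of zeros of the boundary, a first-passage structure whose favourable configurations can be summed explicitly. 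The regime $j>s_u-s_1$ is where the real difficulty lies: now $\mathcal{U}(z_i)$ sees only a window of $s_u-s_1+1$ consecutive boundary variables, the $A_i$ are no longer nested, and one must control a window of width $s_u-s_1+1$ sliding along the boundary and scanning for ones. I expect the bookkeeping for this sliding window — tracking the initial run of zeros together with the position of the first one beyond it — to be the principal obstacle, and it is exactly why the statement provides only lower bounds, rather than exact values, for the larger increments in (\ref{eq:tau2-estj})–(\ref{eq:tau2-estj2L}); the matching left-edge bounds are then obtained verbatim by the reflection symmetry noted above.
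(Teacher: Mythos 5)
Your proposal reproduces the framework of the paper's proof faithfully: the deterministic core at time $2$, the i.i.d.\ Bernoulli boundary $W_1,W_2,\dots$ at the sites $y-s_u+1, y-s_u+2,\dots$ of $\eta^1$, the reduction of the target probability to $\mathbb{E}\bigl[\prod_{i=1}^{j}(1-p)^{\mathbbm{1}[A_i]}\bigr]$ via conditional independence, and the direct enumeration for $j\le 2$. But for $j\ge 3$ the proposal stops exactly where the content of the proposition begins. For $j>s_u-s_1$ you explicitly defer the ``sliding window bookkeeping''; that bookkeeping \emph{is} the proof. One must specify which boundary configurations are retained and which conditional factor each is assigned, and the particular shape of (\ref{eq:tau2-estj2}) --- especially the term $p(1-p)^{j+s_u-s_1}(j-s_u+s_1-\tfrac1p)$ --- cannot be recovered from the structural description alone. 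The paper does this by partitioning into cylinder sets $C^{a,m}$ (exactly one $1$, at position $m+1$), $C^{b,m}$ (first $1$ at $m+1$ plus at least one more), $C^{c}$ (all zeros), with the key observation that on $C^{a,m}$ the number of affected windows is $\min(j-m,\,s_u-s_1+1)$; it is the saturation at $s_u-s_1+1$ that produces the extra term in the regime $j>s_u-s_1$. Without an explicit choice of this kind your argument does not establish (\ref{eq:tau2-estj2})--(\ref{eq:tau2-estj2L}).

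There is a second, more substantive point you would have hit had you completed the nested case $3\le j\le s_u-s_1$. There your first-passage computation is not merely a lower bound but exact: conditioning on the first $1$ among $W_1,\dots,W_j$ sitting at position $k+1$ gives
\begin{equation*}
\sum_{k=0}^{j-1}(1-p)^{k}\,p\,(1-p)^{j-k}+(1-p)^{j}\;=\;j\,p\,(1-p)^{j}+(1-p)^{j},
\end{equation*}
which is \emph{strictly smaller} than the bound $j\,p\,(1-p)^{j}+(1-p)^{j}+(1-p)^{2j}$ asserted in (\ref{eq:tau2-estj}) for $p\in(0,1)$. (Summing the paper's own terms $\sum_m\delta\mathcal{P}(C^{a,m})B^{a,m}+\sum_m\delta\mathcal{P}(C^{b,m})B^{b,m}+\delta\mathcal{P}(C^{c})$ likewise yields $j\,p\,(1-p)^{j}+(1-p)^{j}$, since $C^{a,\cdot}, C^{b,\cdot}, C^{c}$ already exhaust the configuration space and $C^{d}$ is empty; the same extra $(1-p)^{2j}$ appears in the $j>s_u-s_1$ case.) So your (correct) method, carried to completion, cannot reach the stated inequalities: the proposition overshoots the achievable bound by $(1-p)^{2j}$, and the proof should target $j\,p\,(1-p)^{j}+(1-p)^{j}$ (resp.\ the corresponding expression without the doubled $(1-p)^{2j}$) instead. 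This is worth flagging explicitly rather than leaving the summation as an exercise.
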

We postpone the proof of Proposition \ref{prop:estimations} to the next paragraph and we conclude the proof of Theorem \ref{theo:maintheorem1}.
We use the lower bounds provided in the proposition to define the probability distribution of the random variables $\pi^k_1$, $\xi^k_1$.
Namely $\forall j \in \mathbb{N}_0$, we define the probability of the event $\{\pi^t_1 \geq j - s_u\}$ (respectively $\{\xi^t_1 \leq -j - s_1\}$) as the lower bound of the probability of the event  $\{R_1^2 \geq j + R_1^0   - s_u)\}$ 
(respectively $\{L_1^2 \leq - j + L_1^0   - s_1)\}$) provided in the proposition.
With such definition, the expectation of the random variables $\xi^t_k$, $\pi^t_k$ is equal to
\begin{equation}
\label{eq:tau2-4}
E [ \pi^t_k]  = 2 \, \frac{( 1 - p ) }{p} - 2 s_u + \frac{(1-p)^6 + (1-p)^{2s_u - 2s_1 + 2}}{p (2-p)},
\end{equation}
\begin{equation}
\label{eq:tau2-5}
E [ \xi^t_k]  = - 2 \, \frac{ ( 1 - p ) }{p} - 2 s_1 - \frac{(1-p)^6 + (1-p)^{2s_u - 2s_1 + 2}}{p (2-p)}
\end{equation}
By simple computations, the maximum $p \in [0,1]$ such that the inequality
$E [ \pi_k] - E [ \xi_k] \geq 0$ is satisfied (condition \ref{eq:lemcond3} of Lemma \ref{lem:randomwalk}), corresponds to the value $p_2$ defined in the statement of Theorem \ref{theo:maintheorem2}. As the function $E [\pi^t_k] - E [ \xi^t_k]$
intersects the line $y=0$ 
only in one point of the interval $[0,1]$, $p_2$ is the unique solution of $E [\pi^t_k] - E [ \xi^t_k]=0$ that falls in this interval.
\end{proof}

\begin{proof}[\textbf{Proof of Proposition \ref{prop:estimations}}]
In the proof we present the estimation of the probability $\delta_{\rho(x,y)} \mathcal{P}^2$ of the events $\{R_1^2 \geq j +  R_1^0 - 2s_u\}_{j \in \mathbb{N}_0}$.
Using the same argument one can estimate the probability of the events
$\{L_1^2 \leq  -j + L_1^0 - 2s_1\}_{j \in \mathbb{N}_0}$.
By definition of $R_1^2$,
$$
\{ R_1^2 \geq R_1^0- 2 s_u + j \}
\iff \forall z \in [[L_1^0 -  2 s_1, R_1^0 - 2 s_u + j]], \, \, \eta^2_z=0.
$$
As observed previously, the state of the sites in $[[R_1^0 - 2 s_u, R_1^0- 2 s_u + j]]$ for $\eta^2$ depends only on the state of the sites in 
$[[  R_1^0 -  2 s_u + s_1, R_1^0 - s_u + j  ]]$ for $\eta^1$.
Furthermore we observed that the state of the sites in $[[L_1^0 -  2 s_1, R_1^0 - 2 s_u ]]$ is zero almost surely for $\eta^2$.
Hence, from equation (\ref{eq:tau2-1}), 
we obtain the following estimation
 (see also Figure \ref{Fig:est12}),
\begin{multline}
\label{eq:j=1}
\begin{split}
\delta_{\rho(x,y)}\mathcal{P}^2( R_1^2  \geq  1 +  R_1^0  - 2 s_u  ) & =
\delta_{\rho(x,y)}\mathcal{P}^2( \forall z \in [[L_1^0 - 2 s_1,   R_1^0 - 2 s_u + 1 ]],
\, \, \eta_x = 0)\\ 
& =   \delta_{\rho(x,y)}\mathcal{P}^2(\eta_{R_1^0 - 2 s_1+1}=0) \\ &
= \delta_{\rho(x,y)} \mathcal{P} (   \eta_{R_1^0 - 2s_u + 1} = 0    )  +  
 \delta_{\rho(x,y)} \mathcal{P}( \eta_{R_1^0 - 2s_u + 1} = 1  )  \, (1-p)  
\\ & = (1-p) + p (1-p) = 1 - p^2,
\end{split}
\end{multline}
which corresponds to the estimation (\ref{eq:R21}).
Similarly we obtain the estimation (\ref{eq:R22}),
\begin{multline}
\label{eq:j=2}
\begin{split}
\delta_{\rho(x,y)}\mathcal{P}^2 (   R_1^2  \geq  2 +  R_1^0  - 2 s_u  ) & =
\delta_{\rho(x,y)}\mathcal{P}^2  ( \forall z \in [[ L_1^0 - 2 s_1, R_1^0 - 2s_u + 2]])
\\ & =  \delta_{\rho(x,y)}\mathcal{P}^2  (\eta_{R_1^0 - 2s_u + 1} = 
\eta_{R_1^0 - 2s_u + 2} =0  ) \\ & = \delta_{\rho(x,y)}\mathcal{P}^1
(   \eta_{R_1^0 - s_u + 1} =0, \,  \eta_{R_1^0 - s_u + 2} =0   ) \, \, 1  
\\  & +    \delta_{\rho(x,y)}\mathcal{P}^1 
(  \eta_{R_1^0 - s_u + 1} =0, \,  \eta_{R_1^0 - s_u + 2} =1   )  \, \, {(1-p)} 
\\ & + \delta_{\rho(x,y)}\mathcal{P}^1 (\eta_{R_1^0 - s_u + 1} =1, \,  \eta_{R_1^0 - s_u + 2} =0   ) \, \, {(1-p)}^2 \\ & + 
    \delta_{\rho(x,y)}\mathcal{P}^1(\eta_{R_1^0 - s_u + 1} =1, \,  \eta_{R_1^0 - s_u + 2} =1   )  \,\,  {(1-p)}^2   \\ &  \geq  (1-p)^2 \, + \,  p (1-p)^2 \,  + \,  p (1-p)^3 \, +  \, p^2(1-p)^2.
\end{split}
\end{multline}
\begin{figure}
\centering
\includegraphics[width=0.8\textwidth]{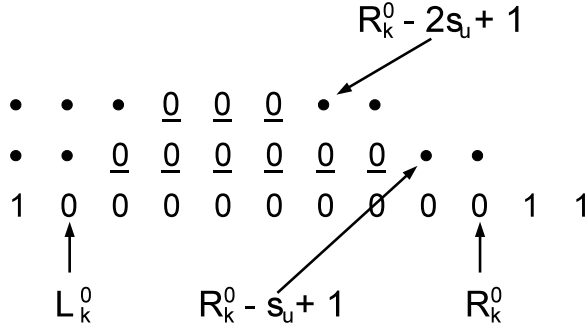}
\caption{In the figure we consider $\mathcal{U}=\{-1, 0, 1, 2 \}$. If the initial realisation of the Percolation PCA  is the one represented in the figure (lowest row), then almost surely the state of the sites above the short horizontal line is zero. }
\label{Fig:est12}
\end{figure}
We provide now the estimation (\ref{eq:tau2-estj})
considering all $j \geq 3$.
We introduce an index $m \in [[0, j-1]]$,
and we define the mutually disjoint cylinder sets (they will be defined later),
$$\{ C^{a,m} \}_{m \in [[0,j-1]]}\,  , \, \, \, 
\{ C^{b,m} \}_{m \in [[0,j-2]]}\,  , \, \, \,
C^{c}.$$ 
We denote by $C^d$ the set of realisations that are not in sets just defined, namely 
\begin{equation}
\label{eq:C^d}
C^d := \Sigma \, \, \setminus \, \,   \bigcup\limits_{m \in [[0, j-2]]}  C^{a,m} \cup  C^{b,m}  \cup C^c \cup C^{a,j-1}.
\end{equation}
For every $m  \in  [[0, j-2]]$ we estimate
$\delta_{\rho(x,y)} \mathcal{P}( C^{a, m})$ and
$\delta_{\rho(x,y)} \mathcal{P}( C^{b, m})$,
and we also estimate $\delta_{\rho(x,y)} \mathcal{P}( C^{a, j-1})$
and $\delta_{\rho(x,y)} \mathcal{P}( C^{c})$.
Furthermore for each of these sets we provide some bounds $B^{a,m}$,$B^{b,m}$,$B^{c}$. Namely for every $\eta^1 \in C^w$, where $w$ denotes generically $(a,m)$, $(b,m)$ or $c$, the following inequality holds,
\begin{equation}
\label{eq:tau2-2}
B^{w} \leq  \prod_{z \in  [[ R_1^0 -2 s_u,  \ldots, R_1^0- 2 s_u + j ]]}
T (    \eta^{2}_z=0  | \eta^1_{\mathcal{U}(z)} ),
\end{equation}
We use such estimations to provide a bound for (\ref{eq:tau2-1}), as shown in the following expression.
\begin{multline}
\begin{split}
\label{eq:tau2-5}
 \delta_{\rho(x,y)} \mathcal{P}^2(  R_1^2  \geq  j  +  R_1^0  - 2 s_u  )   \geq     \sum\limits_{m \in [[0, j-1]]} 
[ & \delta_{\rho(x,y)} \mathcal{P}  ( C^{a,m}  ) \, B^{a, m} +   \\ &
\delta_{\rho(x,y)} \mathcal{P}  ( C^{b,m}  ) \, B^{b,m} ] \,  +
\delta_{\rho(x,y)} \mathcal{P}  ( C^{c}  ) \, B^{c}  + \\&
\delta_{\rho(x,y)} \mathcal{P}  ( C^{d}  ) 
\end{split}
\end{multline}

We start with the introduction of the cylinder set $C^{a,m} \subset \Sigma$,
\begin{multline}
\begin{split}
C^{a,m}   := \{ \eta \in \Sigma \, \,   \mbox { s.t. } & \forall z \in  [[R^0_k - s_u + 1, R_1^0 - s_u + j]] \setminus \{  R^0_k-s_u+m+1 \},  
\\ & \eta_z=0  \mbox { and }  \eta_{R^0_k-s_u+m+1}=1 \},
\end{split}
\end{multline}
(see also Figure \ref{Subfig:c}).
By a simple computation,
\begin{equation}
\label{eq:PCam}
\delta_{\rho(x,y)} \mathcal{P} ( C^{a,m} ) = p \, ( 1- p )^{j-1}.
\end{equation}
Furthermore we observe that $\forall \eta \in C^{a,m}$,
the product over the transition probabilities 
of equation  (\ref{eq:tau2-1}) satisfies the following bound,
\begin{equation}
 \prod_{z \in [[  R_1^0 - 2 s_u , R_1^0- 2 s_u + j ]] } 
 T (    \eta^{2}_z=0  | \eta^1_{\mathcal{U}(z)} )   \geq B^{a,m},
\end{equation}
where
\begin{equation}
\label{eq:BCam}
B^{a,m} : =
\begin{cases}
(1-p)^{s_u-s_1+1}  & \mbox{ if }  0 \leq m \leq j - (s_u - s_1) - 1\\
(1-p)^{j-m}   	  & \mbox{ if }  j -  (s_u - s_1) \leq m \leq j -1\\
\end{cases}.
\end{equation}

\begin{figure}
\begin{center}
\begin{subfigure}[t!]{\linewidth}
\includegraphics[width=0.9\textwidth]{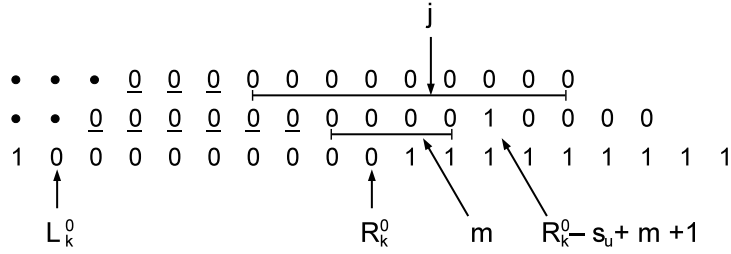}
\caption{}
\label{Subfig:c}
\end{subfigure}%
          
\begin{subfigure}[b]{\linewidth}
\includegraphics[width=0.9\textwidth]{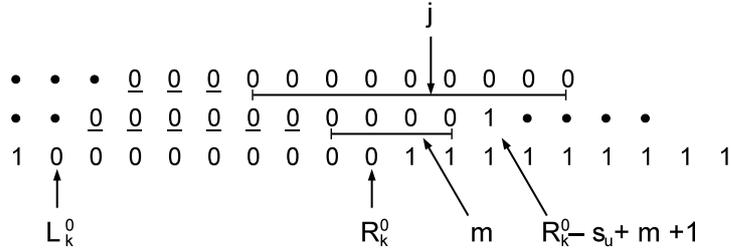}
\caption{}
\label{Subfig:d}
\end{subfigure}
          
\begin{subfigure}[b]{\linewidth}
\includegraphics[width=0.9\textwidth]{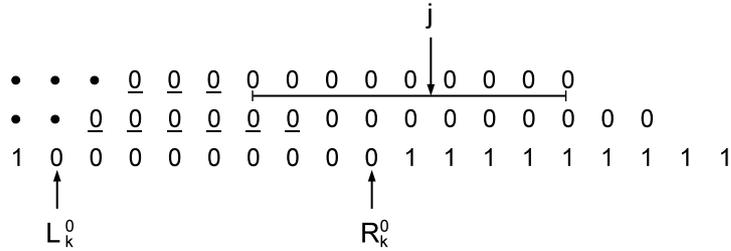}
\caption{}
\label{Subfig:e}
\end{subfigure}
\caption{In the figures above the neighbourhood is assumed to be $\mathcal{U}= \{  -1, 0, 1, 2  \}$. If the initial realisation of the Percolation PCA (first row) is the one represented in the figure, then the sites underlined by a short line on the second and third row have state zero almost surely. Figure (a): the second row represents a realisation $\eta^2$ belonging to the set $C^{a,m}$, $m=4$, $j=9$.
Figure (b): the second row from below represents a realisation $\eta^2$  belonging to the set $C^{b,m}$, $m=4$, $j=9$.
Figure (c): the second row from below represents a realisation $\eta^2$ belonging to the set $C^{c}$, $j=9$.}
\label{Fig3:computationbis}
\end{center}
\end{figure}
Then we introduce the cylinder sets $C^{b,m} \subset \Sigma$,
where $ 0 \leq m \leq j-2$.
\begin{multline}
\label{eq:PCbm}
\begin{split}
C^{b,m} := \{  \eta \in \Sigma \, \, \,  \mbox{ s.t.}  \, \, \, 
 & \forall z \in [[R_1^0 - s_u + 1, R_1^0 - s_u + m]],  \, \,  \eta_z=0, \\
& \eta_{R_1^0 - s_u + m + 1}=1, \\ &
\exists y \in [[R_1^0 - s_u + m +2, R_1^0 - s_u + j]]
 \mbox{ s.t. }   \eta_y = 1 \} 
 \end{split}
\end{multline}
(see also Figure \ref{Subfig:d}).
By using the definition of transition probability for the Percolation PCA we estimate the probability measure of this cylinder set
\begin{equation}
\label{eq:PCbm}
\delta_{\rho(x,y)} \mathcal{P} ( C^{b,m} ) = (1-p)^m \, p \,  [ 1- (1-p)^{j-m-1}] ,
\end{equation}
and we observe that $\forall \eta^1 \in C^{b,m}$
the following bound holds
\begin{equation}
\label{eq:TCbm}
\prod_{z \in [[  R_1^0 - 2 s_u , R_1^0- 2 s_u + j ]] }
T (    \eta^{2}_z=0  | \eta^1_{\mathcal{U}(z)} ) \geq (1-p)^{j-m} 
\end{equation}
Thus we define
\begin{equation}
\label{eq:BCbm}
B^{b,m} := (1-p)^{j-m}.
\end{equation}
The bound (\ref{eq:TCbm}) is obtained considering that
$T (\eta_z^2 = 0 \, | \, \eta_{\mathcal{U}(z)}) = 1 $
for all $z \in [[R_1^0 - 2 s_u + 1, R_1^0 - 2 s_u + m]]$
and $T (\eta_z^2 = 0 \, | \, \eta_{\mathcal{U}(z)}) \geq (1-p) $
for all $z \in [[R_1^0 - 2 s_u + 1 + m, R_1^0 - 2 s_u + j]]$.

Third, we define the cylinder set $C^c \subset \Sigma$,
\begin{equation}
C^c := \{ \eta^1 \in \Sigma \mbox{ s.t. }  \eta^1_z = 0  \, \, \, \, \forall
 z \in  [[R_1^0 -  s_u + 1, R_1^0 -  s_u + j]]\}.
\end{equation}
For this set, 
\begin{equation}
\label{eq:PCcm}
\delta_{\rho(x,y)} \mathcal{P} (C^c) = (1-p)^j,
\end{equation}
and $\forall \eta^1 \in C^c$,
\begin{equation}
\label{eq:TCcm}
\prod_{z \in [[  R_1^0 - 2 s_u , R_1^0- 2 s_u + j ]] }
T (    \eta^{2}_z=0  | \eta^1_{\mathcal{U}(z)} ) = 1 
\end{equation}
(see also Figure \ref{Subfig:e}).
Thus we define 
\begin{equation}
\label{eq:BCcm}
B^c := 1.
\end{equation}

Finally we recall the definition of $C^d$ provided in equation (\ref{eq:C^d}).
We observe that 
\begin{equation}
\label{eq:PCdm}
\delta_{\rho(x,y)} \mathcal{P} (C^d) = 1 - \sum\limits_{m=0}^{j-1}  \delta_{\rho(x,y)} \mathcal{P} (C^{a,m})  -  \sum\limits_{m=0}^{j-2} \delta_{\rho(x,y)} \mathcal{P} (C^{b,m} )  - 
\delta_{\rho(x,y)} \mathcal{P} (C^c)
\end{equation}
and that $\forall \eta^1 \in \Sigma$,
\begin{equation}
\label{eq:TCdm}
\prod_{x \in [[  R_1^0 - 2 s_u , R_1^0- 2 s_u + j ]] }
T (    \eta^{2}_x=0  | \eta^1_{\mathcal{U}(x)} )
\geq  (1-p)^j.
\end{equation}  
The inequality is obtained considering that from the definition (\ref{eq:transitionprob}) it follows that  $\forall z \in \mathbb{Z}$,
$T (    \eta^{2}_x=0  | \eta^1_{\mathcal{U}(x)} ) \geq (1-p)$.
Thus we define
\begin{equation}
\label{eq:BCdm}
B^d :=  (1-p)^j.
\end{equation}

We finally use the estimations
(\ref{eq:PCam}), (\ref{eq:BCam}), (\ref{eq:PCbm}), (\ref{eq:BCbm}),
(\ref{eq:PCcm}), (\ref{eq:BCcm}), (\ref{eq:BCdm}), (\ref{eq:BCdm}),
in (\ref{eq:tau2-5}) and we derive the lower bounds
(\ref{eq:tau2-estj}) and (\ref{eq:tau2-estj2}).
\end{proof}

\section{Convergence time of the finite process}
\label{sect:timeofconvergence}
In this section we prove Theorem \ref{theo:maintheorem2}.
In Section \ref{sect:ergodicity}
we describe the connection between
Percolation PCA and 
oriented percolation.
We mainly follow \cite{ToomDiscr, Uniform},
although propositions and statement
have been reformulated emphasising 
the differences between Percolation PCA
on a finite and infinite space.
In Section \ref{sect:percolationestimates}
we list some percolation estimates.
Some of these percolation estimates have been proved in 
\cite{DurretOr, Durret, Griffeath}
in the case of oriented bond percolation with
symmetric neighbourhood.
In this article we consider a similar model,
namely oriented site percolation with arbitrary neighbourhood.
The proofs of these estimates in our case are substantially the same of those
provided in \cite{Durret, DurretOr}. 
We sketch them illustrating the small differences.
In Section \ref{sect:proofTheo2} we
prove the theorem.
The proof of the right inequality of statement (a) 
of the theorem is an
application of the estimates presented in Section \ref{sect:percolationestimates}.
The proof of the left inequality can be found in
\cite{Stavskaja3}.
The proof of the right inequality of 
statement (b) is trivial.
The proof of the left inequality
uses some of the percolation estimates
and the estimation provided Proposition \ref{prop:inequality2}, which is stated in the same section.
The original contribution of the author
consists in the proof and the application of Proposition \ref{prop:inequality2}
to the proof of the statement (b), in the estimations based on path constructions used in the proof of statement (b) and
in the generalization of the percolation estimates to the proof of the statement (a).

\subsection{Relations with Oriented Percolation}
\label{sect:ergodicity}
In this section we describe a 
connection between the Percolation PCA and 
a certain percolation model.
This connection has been pointed out 
for the first time in \cite{Uniform}, as far as we know.
We consider a Percolation PCA with space
$S = \mathbb{S}_n$ or $S=\mathbb{Z}$,
as defined in Section \ref{sect:themodel}.
We define an auxiliary space $\Omega= {\{  0,1  \}}^V$,
we denote by $\omega \in \Omega$ its elements
and we introduce in this space the Bernoulli product measure $\mathbb{P}_p$.
Namely, the state of every component is $1$ with probability $p$ and 
$0$ with obability $1-p$ independently.
We declare a vertex $(x,y) \in V$ ``open'' if $w_{x,y} =1$ and ``closed'' otherwise.
The correspondence between the PCA and percolation consists in the fact that the probability that the state of the
site $x \in S$ is $1$ at time $t \in \mathbb{N}_0$ for
the probabilistic cellular automaton
equals the probability that the site
$(x,t) \in V$ is connected by a path of open vertices in
$\mathcal{G}_{\mathcal{U}}$ to the line $y=0$.
This is precisely the meaning of the statement of Proposition
\ref{prop:percolation}, which is stated below. 

In order to describe this connection rigorously,  we represent the Percolation 
PCA starting from an initial realisation $\eta^i \in \Sigma$ by introducing a deterministic mapping
$$\eta \, : \, \Omega \times \Sigma \longrightarrow \tilde{\Sigma}.$$
For every $(x,t) \in V$,  the component $\eta_x^t \, : \, \Omega \times \Sigma \rightarrow \{0,1\}$ of $\eta$ is defined as
\begin{multline}
\begin{split}
\label{eq:mapping}
&\eta^{t}_x : = 
\begin{cases}
\min \{   \omega_{x,t-1 }, 
\max_{k \in \mathcal{U}(x)} \{  \eta_k^{t-1}\,  \}	
	\}, &  \mbox{ if }  t \in \mathbb{Z}_+ \\
	\eta^0_x = \eta^i_x, & \mbox{ if }  t =0,
\end{cases}	
\end{split}
\end{multline}
where ${(\omega_{x,t})}_{x \in S, y \in \mathbb{N}}$ are elements of $\Omega$.
This mapping defines any $\eta_z^T$, $z \in  V$, $T \in \mathbb{Z}_+$ as a function of the variables $\omega_{x,y}$ associated to vertices belonging to the evolution cone of  $(z,T) \in V$, and of initial realisation $\eta_x^i$.
One should observe that, recalling (\ref{eq:transitionprob}) and using independence, for any $x \in S$, $t \in \mathbb{Z}_+$, $a \in \{0,1\}$, $\eta^{t-1}_{\mathcal{U}(x)}  \in \{0,1\}^{\mathcal{U}(x)}$,  $\eta^i \in \Sigma$,
\begin{equation}
\label{eq:correspondence}
\begin{split}
T_{x}( \, \eta^t_x  = a \, | \, \eta^{t-1}_{\mathcal{U}(x)}  \, ) = &  \mathbb{P}_p (\, \omega \in \Omega \mbox{ s.t. } \eta_x^t (\omega, \eta^i) = a \, | \, 
\omega \in \Omega \mbox{ s.t. } \eta^{t-1}_{\mathcal{U}(x)}(\omega, \eta^i) \, )
\\  : =  & \mathbb{P}^{\eta^i}_p ( \eta^t_x =a | \eta^{t-1}_{\mathcal{U}(x)}  ),
 \end{split}
\end{equation}
where in the last expression we rewrote the second quantity in a more compact form. This notation will be used also in the proof of the next proposition.
\begin{prop}
\label{prop:correspondence}
Consider the Percolation PCA with space $S \in \{ \mathbb{S}_n, \mathbb{Z}\}$,
represented by the operator $\mathcal{P} : \mathcal{M}(\Sigma) \rightarrow \mathcal{M}(\Sigma)$. Then, for any $\eta^i \in \Sigma$, $a \in \{0,1\}$,
\begin{equation}
\label{eq:correspondenceprop}
\delta_{\delta_{\eta^i}} \mathcal{P}^t (\eta_x=a)  = \mathbb{P}_p ( w \in \Omega \, \, \mbox{s.t.} \, \, \eta^t_x(\omega, \eta^i) =a).
\end{equation}
\end{prop}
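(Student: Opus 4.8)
The plan is to prove the identity by induction on $t$, strengthening it so that it controls the joint law of the configuration on an arbitrary finite window rather than the single site $x$; this is necessary because the synchronous update $\mathcal{P}$ couples each site with the sites in its neighbourhood. Precisely, I would prove that for every $t \in \mathbb{N}_0$, every finite $K \subset S$ and every $\eta^{\prime} \in \{0,1\}^K$,
\[
\delta_{\eta^i}\mathcal{P}^t(\overline{\eta^{\prime}}_K) = \mathbb{P}_p\bigl(\, \omega \in \Omega \text{ s.t. } \eta^t_x(\omega,\eta^i) = \eta^{\prime}_x \ \ \forall x \in K \,\bigr),
\]
the proposition being the special case $K = \{x\}$, $\eta^{\prime}_x = a$. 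The base case $t=0$ is immediate: $\delta_{\eta^i}\mathcal{P}^0 = \delta_{\eta^i}$ gives $\prod_{x\in K}\mathbbm{1}[\eta^i_x = \eta^{\prime}_x]$ on the left, while by (\ref{eq:mapping}) one has $\eta^0_x(\omega,\eta^i)=\eta^i_x$ deterministically, so the right-hand side is the same indicator. Since both spaces $S=\mathbb{Z}$ and $S=\mathbb{S}_n$ only ever involve finite windows and cylinder events (the value $\eta^t_x$ depends on the finitely many $\omega$ inside the evolution cone of $(x,t)$), the two cases are handled uniformly and no measurability difficulty arises.

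For the inductive step I would expand the left-hand side using the definition of the operator (\ref{eq:operator}) applied to the measure $\delta_{\eta^i}\mathcal{P}^{t-1}$,
\[
\delta_{\eta^i}\mathcal{P}^t(\overline{\eta^{\prime}}_K) = \sum_{\eta_{\mathcal{U}(K)} \in \{0,1\}^{\mathcal{U}(K)}} \delta_{\eta^i}\mathcal{P}^{t-1}(\eta_{\mathcal{U}(K)}) \prod_{x\in K} T_x(\eta^{\prime}_x \mid \eta_{\mathcal{U}(x)}),
\]
and, on the percolation side, partition according to the value taken by $\eta^{t-1}_{\mathcal{U}(K)}(\omega)$,
\[
\mathbb{P}_p\bigl(\eta^t_x = \eta^{\prime}_x \ \forall x\in K\bigr) = \sum_{\eta_{\mathcal{U}(K)}} \mathbb{P}_p(\eta^{t-1}_{\mathcal{U}(K)} = \eta_{\mathcal{U}(K)}) \; \mathbb{P}_p\bigl(\eta^t_x = \eta^{\prime}_x \ \forall x\in K \mid \eta^{t-1}_{\mathcal{U}(K)} = \eta_{\mathcal{U}(K)}\bigr).
\]
By the inductive hypothesis applied to the finite window $\mathcal{U}(K)$ at time $t-1$, the marginal factor $\mathbb{P}_p(\eta^{t-1}_{\mathcal{U}(K)} = \eta_{\mathcal{U}(K)})$ equals $\delta_{\eta^i}\mathcal{P}^{t-1}(\eta_{\mathcal{U}(K)})$, so it remains only to identify the conditional factor with $\prod_{x\in K} T_x(\eta^{\prime}_x \mid \eta_{\mathcal{U}(x)})$, after which the two sums coincide term by term.

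This last point is where the real work lies. By (\ref{eq:mapping}) the configuration $\eta^{t-1}(\omega)$ is a function of the variables $(\omega_{z,s})_{s\le t-2}$ strictly below layer $t-1$, whereas each $\eta^t_x$ is produced from $\eta^{t-1}_{\mathcal{U}(x)}$ together with the \emph{fresh} variable $\omega_{x,t-1}$ of layer $t-1$. Since $\mathbb{P}_p$ is a product measure, the family $(\omega_{x,t-1})_{x\in S}$ is mutually independent and independent of the $\sigma$-algebra generated by $\eta^{t-1}$; hence, conditionally on $\eta^{t-1}_{\mathcal{U}(K)} = \eta_{\mathcal{U}(K)}$, the events $\{\eta^t_x=\eta^{\prime}_x\}_{x\in K}$ are independent even though the neighbourhoods $\mathcal{U}(x)$ may overlap, and each conditional factor equals $\mathbb{P}^{\eta^i}_p(\eta^t_x=\eta^{\prime}_x \mid \eta^{t-1}_{\mathcal{U}(x)}) = T_x(\eta^{\prime}_x \mid \eta_{\mathcal{U}(x)})$ by the local correspondence (\ref{eq:correspondence}). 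The main obstacle is therefore not the single-site identity (\ref{eq:correspondence}), which is given, but the passage from it to the product over $x\in K$: one must argue carefully that the refreshing layer $\omega_{\cdot,t-1}$ is independent of the past configuration and mutually independent across sites, so that the synchronous update factorises exactly as the operator (\ref{eq:operator}) prescribes. Granting this, substituting back matches the two displayed sums and closes the induction.
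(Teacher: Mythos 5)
Your proof is correct and follows essentially the same route as the paper's: both expand the percolation probability one time-layer at a time by conditioning on the configuration of the previous layer, use the product structure of $\mathbb{P}_p$ (the freshness and mutual independence of the variables $\omega_{\cdot,\,t-1}$) to factorise the conditional law into the transition probabilities of (\ref{eq:correspondence}), and match the resulting nested sum with the definition (\ref{eq:operator}) of $\mathcal{P}$. Your organisation as an induction on $t$ with a strengthened hypothesis on joint cylinder probabilities over a finite window $K$ is simply a tidier, more explicit rendering of the paper's iterated expansion through $\mathcal{U}(x),\mathcal{U}^2(x),\ldots$, which asserts the same factorisation implicitly when it writes $\mathbb{P}_p(\eta^{t-1}_{\mathcal{U}(x)}\mid\eta^{t-2}_{\mathcal{U}^2(x)})=T(\eta^{t-1}_{\mathcal{U}(x)}\mid\eta^{t-2}_{\mathcal{U}^2(x)})$.
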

\begin{proof}
For any $x \in S$, $t \in \mathbb{Z}^+$, we define
$$
\mathcal{U}^{t}(x) = \overset{t}{\overbrace{ \mathcal{U}\,  \circ \,  \mathcal{U} \, \circ \, \ldots \, \circ \, \mathcal{U}}} \,   (x),
$$
By using equation (\ref{eq:correspondence}), we observe that the following equalities hold.
\begin{align}
\begin{split} 
\mathbb{P}_p ( \eta^t_x =a)  & = 
\sum_{\eta^{t-1}_{\mathcal{U}(x)} \in \{0,1\}^{\mathcal{U}(x)}}
\mathbb{P}_p ( \eta^t_x   =a | \eta^{t-1}_{\mathcal{U}(x)}  )   \mathbb{P}_p ( \eta^{t-1}_{\mathcal{U}(x)}) \\
& = 
\sum_{\eta^{t-1}_{\mathcal{U}(x)} \in \{0,1\}^{\mathcal{U}(x)}} \, 
T_{x}( \, \eta^t_x  = a \, | \, \eta^{t-1}_{\mathcal{U}(x)}  \, ) 
\mathbb{P}_p ( \eta^{t-1}_{\mathcal{U}(x)} ) \\
& = \sum_{\eta^{t-1}_{\mathcal{U}(x)} \in \{0,1\}^{\mathcal{U}(x)}}
\sum_{\eta^{t-2}_{\mathcal{U}^2(x)} \in \{0,1\}^{\mathcal{U}^2(x)}} 
T_{x}( \, \eta^t_x  = a \, | \, \eta^{t-1}_{\mathcal{U}(x)}  \, )  \\ 
& \times \mathbb{P}_p ( \eta^{t-1}_{\mathcal{U}(x)} \,  | \,    \eta^{t-2}_{\mathcal{U}^2(x)}  ) \, \mathbb{P}_p (\eta^{t-2}_{\mathcal{U}^2(x)}  ) \\
& = \sum_{\eta^{t-1}_{\mathcal{U}(x)} \in \{0,1\}^{\mathcal{U}(x)}}
\sum_{\eta^{t-2}_{\mathcal{U}^2(x)} \in \{0,1\}^{\mathcal{U}^2(x)}} 
T_{x}( \, \eta^t_x  = a \, | \, \eta^{t-1}_{\mathcal{U}(x)}  \, )  \\ 
& \times T ( \eta^{t-1}_{\mathcal{U}(x)} \,  | \,    \eta^{t-2}_{\mathcal{U}^2(x)}) \, \mathbb{P}_p (\eta^{t-2}_{\mathcal{U}^2(x)}   ) \\
 & =  \ldots 
\end{split}
\end{align}
By proceeding with the expansion, by writing the previous term as a sum over all states of the cylinder set of $(x,t)$ of transition probabilities (\ref{eq:transitionprob})
and by recalling definition (\ref{eq:operator}), one recognizes that this
term is equal to the left side of (\ref{eq:correspondenceprop}).
\end{proof}
The next proposition has been proved in \cite{Uniform} and it is stated and used also in \cite{ToomDiscr}. 
\begin{prop}
\label{prop:percolation}
The function  $ \eta^{t}_x  : \Omega \times \Sigma \mapsto \{0, 1 \}$ is such that $\eta^{t}_x = 1$ \textit{iff} there exists a sequence 
$\{ x_0, x_{1}, x_{2}, \ldots x_{t} \} \subset \mathbb{Z}$  satisfying the three following properties,
\begin{enumerate}
\item $x_t = x$ and $x_{i-1} \in \mathcal{U}(x_{i})$ for any $i \in \{ 1, 2,  \ldots t \}$,
\item $\omega_{i-1, x_i} = 1$ for any $i \in \{1, 2, \ldots t \}$,
\item $\eta^i_{x_0} = 1$.
\end{enumerate}
\end{prop}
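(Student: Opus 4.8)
The plan is to prove the equivalence by induction on $t$, unfolding the deterministic recursion (\ref{eq:mapping}) one time step at a time. The entire argument rests on two elementary Boolean identities: for $a,b \in \{0,1\}$ one has $\min\{a,b\} = 1$ iff $a=1$ and $b=1$, while $\max_{k \in \mathcal{U}(x)} \eta^{t-1}_k = 1$ iff there exists some $k \in \mathcal{U}(x)$ with $\eta^{t-1}_k = 1$. Substituting these into (\ref{eq:mapping}) yields the key one-step observation that, for every $t \in \mathbb{Z}_+$,
$$
\eta^t_x = 1 \iff \omega_{x,t-1} = 1 \ \text{ and } \ \exists\, k \in \mathcal{U}(x) \text{ with } \eta^{t-1}_k = 1,
$$
which is precisely one link of the claimed backward path construction.

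For the base case $t=0$ the required sequence reduces to a single point $\{x_0\}$ with $x_0 = x$: condition 1 is immediate, condition 2 is vacuous since its index set $\{1,\dots,0\}$ is empty, and condition 3 reads $\eta^i_{x_0} = \eta^i_x = 1$, which by (\ref{eq:mapping}) is exactly $\eta^0_x = 1$. Hence the equivalence holds trivially at $t=0$.

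For the inductive step I would assume the statement at time $t-1$ for every spatial site and establish it at time $t$. By the displayed one-step identity, $\eta^t_x = 1$ iff $\omega_{x,t-1}=1$ and $\eta^{t-1}_k = 1$ for some $k \in \mathcal{U}(x)$. Applying the inductive hypothesis to the site $k$ at time $t-1$, this latter condition is equivalent to the existence of a sequence $\{y_0,\dots,y_{t-1}\}$ with $y_{t-1}=k$ obeying conditions 1--3 at level $t-1$. Setting $x_t := x$ and $x_i := y_i$ for $0 \le i \le t-1$ (so that $x_{t-1}=k \in \mathcal{U}(x_t)$) produces a sequence $\{x_0,\dots,x_t\}$ for which condition 1 holds at every index, condition 2 holds for $i \le t-1$ by hypothesis and for $i=t$ because $\omega_{x_t,t-1}=\omega_{x,t-1}=1$, and condition 3 is inherited unchanged. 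Conversely, given any sequence realising the three conditions at time $t$, discarding the last vertex $x_t$ yields a witness for $\eta^{t-1}_{x_{t-1}}=1$ via the inductive hypothesis, and combined with $\omega_{x,t-1}=1$ the one-step identity forces $\eta^t_x = 1$. This closes the induction.

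I do not expect a genuine obstacle, since the content is purely the iterated unfolding of (\ref{eq:mapping}); the argument is essentially bookkeeping. The only points demanding care are the time offset in $\omega_{x,t-1}$ (the variable that gates the update of $\eta^t_x$ lives at time $t-1$, consistent with condition 2), the verification that the reindexing $x_i = y_i$ correctly aligns the neighbourhood constraint $x_{t-1}\in\mathcal{U}(x_t)$, and checking both directions of the equivalence rather than only one. An alternative would be to expand (\ref{eq:mapping}) fully into a nested $\min$/$\max$ expression ranging over all backward paths and read the claim off directly, but the inductive formulation keeps the index tracking cleanest.
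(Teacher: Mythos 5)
Your proof is correct and takes essentially the same route as the paper's: both reduce the claim to the one-step Boolean identity extracted from (\ref{eq:mapping}) (namely $\eta^t_x=1$ iff $\omega_{x,t-1}=1$ and some neighbour satisfies $\eta^{t-1}_k=1$) and then close by induction, the paper phrasing it as a backward extension of partial sequences while you induct on $t$. Your version is if anything slightly more complete than the paper's sketch, since you record the base case and both directions of the equivalence explicitly.
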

\proof We sketch the proof of the proposition.
Assume $\eta_x^t =1$ and assume that
properties 1, 2, 3 hold for a sequence of sites 
$x_{t-k}$, $x_{t-k+1}$, $\ldots$ $x_{t}$.
From (\ref{eq:mapping}) it follows that
$\eta^{t-k}_{x_{t-k}} = 1 \Leftrightarrow \omega_{x_{t-k-1}, t-k}=1$
and $\exists\,  x_{t-k-1} \in \mathcal{U}(x_{t-k})$
s.t. $ \eta^{t-k-1}_{x_{t-k-1}} = 1$.
This implies that there exists an element $x_{t-k-1} \in S$
such that properties 1, 2, 3 hold
for the sequence $x_{t-k-1}$, $x_{t-k}$, $\ldots$ $x_{t}$.
The proof of the proposition follows by induction.
\endproof
If we consider the case of infinite space, from the previous proposition it follows that ergodicity for the probabilistic cellular automaton is associated with the existence of an infinite path of open vertices in the auxiliary space.
Indeed, recall Definitions \ref{def:critprob} and \ref{def:ergodic} and observe that,
\begin{align}
\label{eq:ergodicpercolation}
p > p_c \implies \lim_{t \rightarrow \infty } \delta_{\mathbf{1}} \mathcal{P}^t (\eta_x=1) > 0  \\
p<p_c 	\implies  \lim_{t \rightarrow \infty } \delta_{\mathbf{1}} \mathcal{P}^t (\eta_x=1) =0,
\end{align}
Thus the probabilistic cellular automaton is non-ergodic if and only if the 
limit $t \rightarrow \infty$ of the probability that a vertex $(0,t)$ is connected to the line $y=0$ by an open path is positive.
 
If we consider the case of finite space with periodic boundaries, the previous proposition shows that there is a connection between the absorption time of the probabilistic cellular automaton and the existence of an open path in the auxiliary space. This connection is clarified in the next proposition.  Before its statement we introduce some more definitions.

From now on we use $\mathbb{P}_p^n( \cdot )$ to denote the Bernoulli product measure in the finite  space and  $\mathbb{P}_p( \cdot )$ to denote
the Bernoulli product measure in the infinite space.
\begin{mydef}
\label{def:opensitesandpaths}
Consider $S \in \{  \mathbb{S}_n, \mathbb{Z}\}$ and consider the event,
\begin{align}
\begin{split}
 \{ \omega \in \Omega \mbox{ s.t.} & \mbox{ there exists a path of open vertices in } \mathcal{G}_{\mathcal{U}}  \mbox { that connects } \\ 
& (x,t) \mbox{ to one of the vertices belonging to the line } y=0 \}.
\end{split}
\end{align}
If $S = \mathbb{Z}$ we denote this event by 
$ \{(s, t) \overset{\mathcal{G}_{\mathcal{U}}}{ \longrightarrow } S^0 \}$
and if $S = \mathbb{S}_n$ we denote this event by
$ \{(s, t) \overset{ \mathcal{G}_{\mathcal{U}}  (n) }{\longrightarrow} S^0 \}.$
\end{mydef}
Recall the definition of evolution measure (Definition \ref{def:evolutionmeasure}) and of 
absorption time (equation \ref{eq:absorbingtime}). 
Recall that $\tau_k$ can be considered as a function 
$\tau_k :  \Omega \times \Sigma \rightarrow \mathbb{N}$, 
as, from (\ref{eq:mapping}), $( \eta_x^t )_{x \in S, t \in \mathbb{N}}$ 
it is a mapping from $\Omega \times \Sigma$ to $\tilde{\Sigma}$.
\begin{prop}
\label{prop:taunconnection}
Consider the Percolation PCA on a finite space with periodic boundaries. For every $t \in \mathbb{N}_0$,
\begin{equation}
\mathcal{E}^n_{\delta_{\mathbf{1}}}
( \tau_n > t ) =
\mathbb{P}^n_p (\, \, \exists x \in [[-n, n-1]] \, \, \mbox{s.t.} \, \, 
(x,t) \overset{ \mathcal{G}_{\mathcal{U}}  (n) }
{\longrightarrow}  
S^0 \ ),
\end{equation}
where $S^0$ denotes the set of vertices of $V$ belonging to the line $y=0$.
\end{prop}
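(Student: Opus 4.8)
The plan is to reduce the absorption-time event to a single-time, all-sites event, and then transport that event into the percolation picture using Proposition \ref{prop:correspondence} and Proposition \ref{prop:percolation}.

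First I would exploit that $\delta_{\mathbf{0}}$ is absorbing. Since $T_x(\eta'_x = 1 \mid \eta_{\mathcal{U}(x)}) = 0$ whenever $\eta_{\mathcal{U}(x)} = \mathbf{0}$, once a realisation $\eta^s$ equals $\mathbf{0}$ it stays $\mathbf{0}$ at all later times, $\mathcal{E}^n_{\delta_{\mathbf{1}}}$-almost surely, so the set of times $s$ with $\eta^s = \mathbf{0}$ is an up-set. Consequently
\begin{equation}
\{ \tau_n > t \} = \{ \eta^s \neq \mathbf{0} \ \text{ for all } s \leq t \} = \{ \eta^t \neq \mathbf{0} \}.
\end{equation}
Recalling $\mathbb{S}_n = [[-n, n-1]]$, the rightmost event is exactly $\{ \exists x \in [[-n,n-1]] \ \text{s.t.}\ \eta^t_x = 1 \}$, and since it depends only on the time-$t$ slice, its probability under the evolution measure $\mathcal{E}^n_{\delta_{\mathbf{1}}}$ equals $(\delta_{\mathbf{1}} \mathcal{P}^t)(\exists x : \eta_x = 1)$.

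Second I would pass to the auxiliary space $\Omega$. Using the deterministic mapping (\ref{eq:mapping}) with initial realisation $\eta^i = \mathbf{1}$, the identity (\ref{eq:correspondence}) together with the independence of the variables $(\omega_{x,y})$ shows that the law of the slice $\eta^t(\cdot, \mathbf{1})$ under $\mathbb{P}^n_p$ coincides with $\delta_{\mathbf{1}} \mathcal{P}^t$; hence
\begin{equation}
(\delta_{\mathbf{1}} \mathcal{P}^t)(\exists x : \eta_x = 1) = \mathbb{P}^n_p\big( \, \exists x \in [[-n,n-1]] \ \text{s.t.}\ \eta^t_x(\omega, \mathbf{1}) = 1 \, \big).
\end{equation}
Third I would identify the right-hand event with the connection event. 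By Proposition \ref{prop:percolation}, $\eta^t_x(\omega, \mathbf{1}) = 1$ iff there is a sequence $\{x_0, \ldots, x_t\}$ satisfying its three conditions; because $\eta^i = \mathbf{1}$, condition 3 ($\eta^i_{x_0} = 1$) holds automatically for every choice of $x_0$, so the remaining two conditions say precisely that $(x,t)$ is joined to the line $y = 0$ by a path of open vertices in $\mathcal{G}_{\mathcal{U}}$ respecting periodicity, i.e. $(x,t) \overset{\mathcal{G}_{\mathcal{U}}(n)}{\longrightarrow} S^0$. Taking the union over $x \in [[-n,n-1]]$ and combining the three displayed equalities yields the claim.

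The only delicate point is the second step: Proposition \ref{prop:correspondence} is stated for single-site marginals, whereas here I need the joint law of the whole slice $\eta^t$. I would therefore record that the telescoping expansion used in the proof of that proposition — summing the product of transition probabilities (\ref{eq:transitionprob}) over all configurations of the successive neighbourhoods $\mathcal{U}^k(x)$ and recognising the result as $\delta_{\mathbf{1}}\mathcal{P}^t$ via (\ref{eq:operator}) — goes through verbatim for a finite set of sites sharing the common time $t$, since (\ref{eq:mapping}) writes each $\eta^t_x$ as a function of the independent $\omega$-variables attached to its evolution cone. This upgrades the marginal correspondence to the slice correspondence the argument needs; everything else is bookkeeping.
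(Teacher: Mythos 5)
Your proof is correct and follows essentially the same route as the paper: reduce $\{\tau_n>t\}$ to the event $\{\exists x\in[[-n,n-1]] : \eta^t_x=1\}$ and then identify this, via the coupling with the auxiliary space and Proposition \ref{prop:percolation}, with the existence of an open path from the line $y=t$ to $S^0$. The paper's proof is just two sentences and leaves implicit both the use of the absorbing property (to replace ``for all $s\le t$'' by the time-$t$ slice alone) and the upgrade of Proposition \ref{prop:correspondence} from single-site marginals to the joint law of the slice; you supply these details correctly.
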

\begin{proof}
By the definition of $\tau_n$ (see Definition \ref{def:absorbtime}), $\tau_n > t$ if and only if $\exists x \in [[-n, n-1]]$
such that $\eta_x^t =1$.
From Proposition \ref{prop:percolation}, it follows that $\eta_x^t = 1$ if and only if $ (x,t) \overset{ \mathcal{G}_{\mathcal{U}}  (n) } {\longrightarrow}  S^0$.
\end{proof}
\begin{figure}
\begin{center}
\includegraphics[scale=0.3]{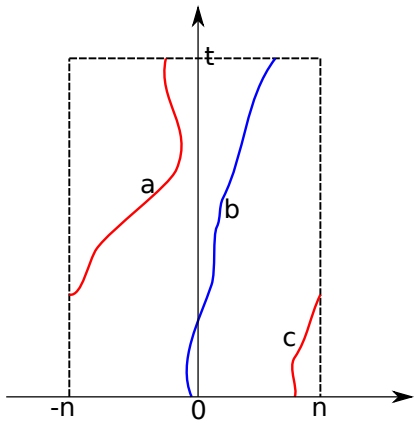}
\caption{The event $\{ \tau_n > t \}$  (recall Definition \ref{def:absorbtime})   
occurs if at least one open path joins one of the sites $(x,t)$ such that $x \in [[-n, n-1]]$ to one of the sites $(y,0)$, $y \in S$.
If the neighbourhood is periodic, then the path can leave from one the two vertical lines $x=-n$ or $x=n-1$ and re-appear at the same
high on the other line (e.g. see the path $a \circ c$). }
\label{Fig:events1}
\end{center}
\end{figure}
\noindent \textbf{Remark.} Recall the definition of the neighbourhood in the case of finite space with periodic boundaries, provided in equation (\ref{eq:neighborperiodic}). 
As boundaries are periodic, the site $(x,t)$ is connected to the line $y=0$ also if the 
path of open vertices leaves one of the vertical boundaries ($x = -n$ or $x= n-1$) from one side and it re-appears at the same high on the other side (see for example the path $a \circ c$ in Figure \ref{Fig:events1}).

\subsection{Percolation estimates}
\label{sect:percolationestimates}
In this section we list some properties
involving the cluster of vertices belonging to an open path
in $\mathcal{G}_{\mathcal{U}}$ starting from $(0,t)$. 
These properties have been proved in 
\cite{Durret, DurretOr, Griffeath} in case of a bond percolation model 
with symmetric neighbourhood of two elements.
In this article we consider a slightly different percolation model,
as sites instead of bonds can be open or closed and the neighbourhood
is an arbitrary (translation invariant) finite set.
The proofs of these propositions in the case considered in this
article are similar to those provided in \cite{Durret, DurretOr, Griffeath}.
We sketch their proof describing the small differences.

We start with some definitions. From now on we will consider $S = \mathbb{Z}$.
For every $t, m \in \mathbb{N}$ we define the sets,
\begin{equation}
\begin{split}
\xi^t_{m} &= \{   x \in \mathbb{Z} \, : \,  (0,t)   \overset{\mathcal{G}_{\mathcal{U}}}{\longrightarrow} \,  (x,t-m)  \}, \\
{\overline{ \xi }}^t_{m} &=  \{    x \in\mathbb{Z} \, : \,   \exists \, z \, \leq 0
\, \, \mbox{s.t.}  \,  (z,t)   \overset{\mathcal{G}_{\mathcal{U}}}{\longrightarrow} (x,t-m)  \}, \\
\overline{ \chi }^t_{m} &=  \{   x  \in \mathbb{Z} \, : \, \exists \, z \, \geq 0
\, \, \mbox{s.t.}  \, \,  (z,t)   \overset{\mathcal{G}_{\mathcal{U}}}{\longrightarrow} (x,t-m)  \}, \\ 
\end{split}
\end{equation}
Note that $\xi^t_{m} \subset \{ s_1 \, m, \, \, s_1 \, m +1, \,  s_1 \, m +2, \ldots, \,  s_u \, m \}$.
We define then the variables,
\begin{equation}
\label{eq:definitionperc}
\begin{split}
r^t_{m} \, & = \, \sup  \{ \xi_m^t \}, \\
\ell^t_{m} \, & = \, \inf  \{ \xi_m^t \}, \\
\overline{r}^t_{m} \, & = \, \sup  \{ \overline{\xi}_m^t \}, \\
\overline{\ell}^{t}_m \, &  = \, \inf  \{ \overline{\chi}_m^t \}, \\
\end{split}
\end{equation}
and we set $r^t_{m} = - \infty$, $\ell^t_{m} = \infty$ if $\xi^t_{m} = \varnothing$.
As the distributions of $r^t_{m}$, $\ell^t_{m}$,
$\overline{r}^t_{m}$, $\overline{\ell}^t_{m}$,
 $\xi^t_{m}$,  $\overline{\xi}^t_{m}$ and $\overline{\chi}^t_{m}$ depend only on 
the difference $t-m$, from now on
we will omit the dependence on $t$, that will be some positive integer. 
Furthermore we consider the space 
$\mathcal{G}_{\mathcal{U}}$
as before, but
with vertices $\mathbb{Z} \times \mathbb{Z}$
instead of $\mathbb{Z} \times \mathbb{N}$.
In the former case, if we consider only paths starting from $(0,t)$, we allow $(0,t)$ to belong to an infinite open path.
Thus we recover the notation
of \cite{DurretOr} ($r_m$, $\ell_m$, $\overline{r}_m$, $\xi_m$), with the difference
that in this article paths are oriented from up to down.

We observe that for every $t$, $m$, the probability that $\overline{\xi}^{m} = \varnothing$
is zero, as every vertex in $\{ (x,y) :  \mbox{s.t. }  y=t, \, x \geq 0\}$ has a non-zero
probability of being connected to  $S^0$ by an open path in $\mathcal{G}_{\mathcal{U}}$.
The same holds for the event $\overline{\chi}^{m,t} = \varnothing$.
By definition,
\begin{equation}
\begin{split}
\label{eq:rnrbarn}
r_m \leq  \overline{r}_m, \\
{\ell}_m \geq \overline{\ell}_m.  
\end{split}
\end{equation}
The following relations hold,
\begin{align}
\label{eq:proprl1}
\xi_m & = \overline{\xi}_m \cap [\ell_m, + \infty) = \overline{\chi}_m^t \cap (-\infty, r_m], \\ 
\label{eq:proprl2}
~~~~~~~~~~~~ \mbox{ on } \{\xi_m \neq \varnothing \}, ~~~~~~~ r_m & = \overline{r}_m, \\
\label{eq:proprl3}
~~~~~~~~~~~~ \mbox{ on } \{\xi_m \neq \varnothing\}, ~~~~~~~ \ell_m & = \overline{\ell}_m.
\end{align}
\begin{proof}
Equation (\ref{eq:proprl1}) is a corollary of equations 
(\ref{eq:proprl2}) and (\ref{eq:proprl3}). 
We sketch an argument for (\ref{eq:proprl3}),
that can be also found in \cite[Section 3]{DurretOr}. 
By reflection the same argument holds also for (\ref{eq:proprl2}).
It is trivial from the definition that $\xi_m \subset \overline{\xi}_m$
and that $\xi_m \subset (-\infty, r_m]$. We have to show that 
$\overline{\xi}_m \cap (-\infty, r_m] \subset \xi_m$.
In this case it is clear from Figure \ref{Fig:rnrml} that if there is a path from
some site $(y,t)$, $y >0$ to $(x, t-m)$, $x\leq r_m$,
then there is also a path from $(0,t)$ to $(x,t-m)$.
Then $x \in \xi_m$.
\end{proof}
\begin{figure}
\begin{center}
\includegraphics[scale=0.35]{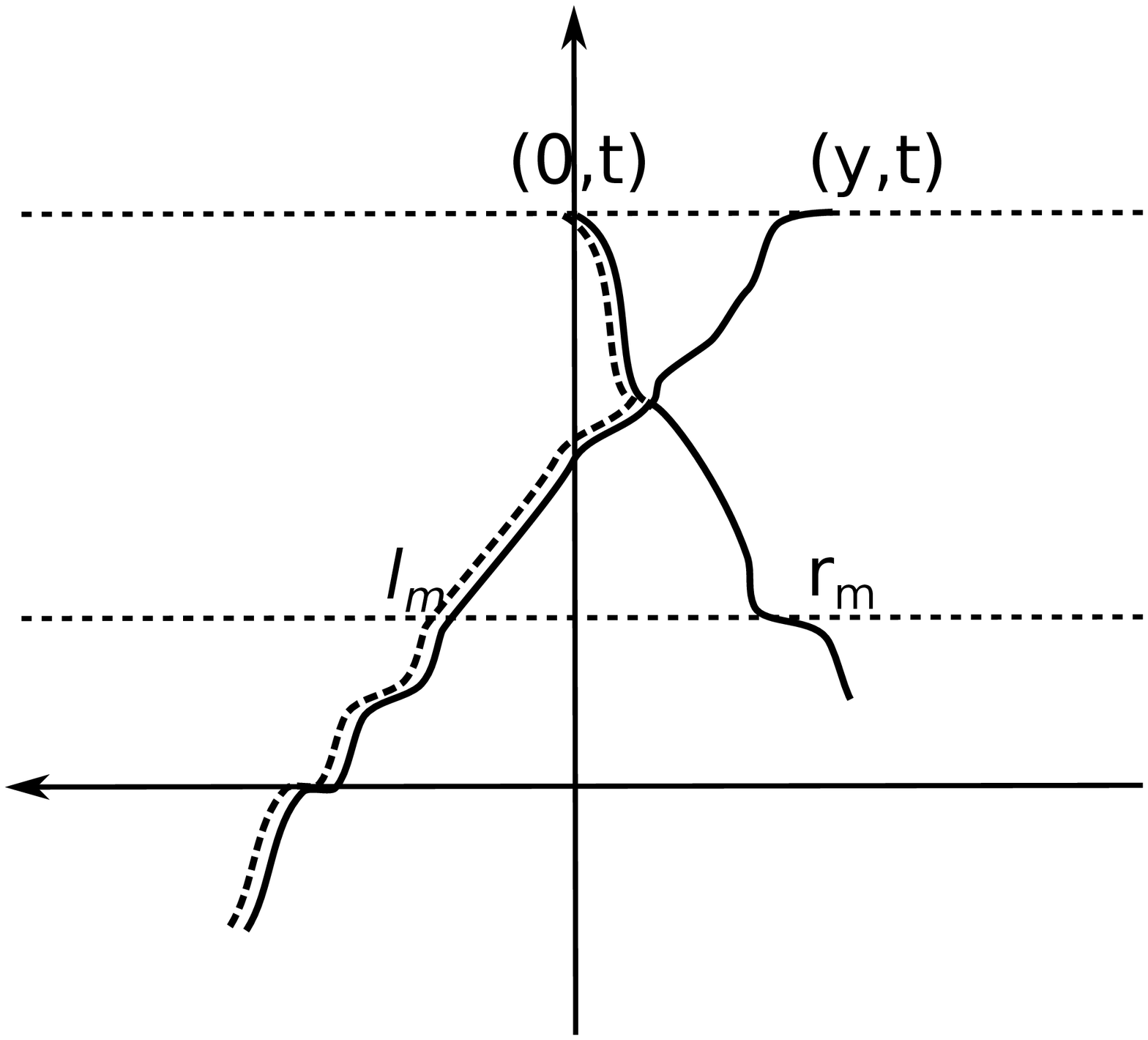}
\caption{  }
\label{Fig:rnrml}
\end{center}
\end{figure}
We introduce the following quantities, for all integers $n \geq m \geq 0$,
\begin{equation}
\begin{split}
\overline{r}_{m,n} & = \sup \{ x - \overline{r}_m \, : x \in \mathbb{Z} \, \mbox{ and }  \,
 \exists z \, \in \mathbb{Z}  \mbox{ s.t. }   \\ & z \leq \overline{r}_m \mbox{ and } (z, t-m) \overset{\mathcal{G}_{\mathcal{U}}}{\longrightarrow}  \, (x, t-n)    \}.
\end{split}
\end{equation}
\begin{equation}
\begin{split}
\overline{\ell}_{m,n} & = \inf \{ x - \overline{\ell}_m \, : x \in \mathbb{Z} \, \mbox{ and }  \,
 \exists z \, \in \mathbb{Z}  \mbox{ s.t. }   \\ & z \geq \overline{\ell}_m \mbox{ and } (z, t-m) \overset{\mathcal{G}_{\mathcal{U}}}{\longrightarrow}  \, (x, t-n)    \}.
\end{split}
\end{equation}
The following relations holds.
\begin{equation}
\label{eq:subadditive}
\overline{r}_{m} + \overline{r}_{m,n} \geq \overline{r}_{n}.
\end{equation}
\begin{equation}
\label{eq:subadditiveL}
\overline{\ell}_{m} + \overline{\ell}_{m,n} \leq \overline{\ell}_{n}.
\end{equation}
\begin{proof}
We prove (\ref{eq:subadditive}) and a similar argument holds for (\ref{eq:subadditiveL}).
One should observe that 
$\overline{r}_{m} +  \overline{r}_{m,n}$
is the rightmost point on the line $y = t- n$
which can be reached from any of the points $(x, t- m)$
with $x \leq \overline{r}_m$.
Instead $\overline{r}_{n}$ is the rightmost point
on the line $y = t- n$
which can be reached from any of the points $(x, t- m)$
with  $x \leq \overline{r}_m$ and with the additional restriction
that there exists an open path in $\mathcal{G}_{\mathcal{U}}$ from $(z, t)$
to $(x, t- m)$ for some $z \leq 0$. See also Figure \ref{Fig:rnrm}.
\end{proof}
\begin{figure}
\begin{center}
\includegraphics[scale=0.35]{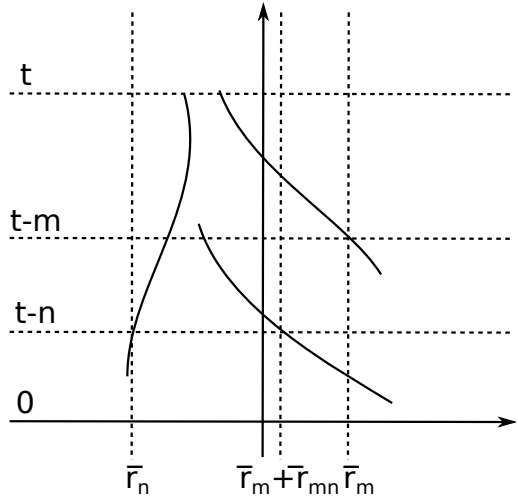}
\caption{Curves represent open paths.}
\label{Fig:rnrm}
\end{center}
\end{figure}
The next proposition involves the random variables defined above
and it corresponds to \cite{Durret}[Theorem 2.1]. It holds 
for a class of model called \textit{growth processes} that is more general than the class of models considered here. We refer to  \cite{Durret} for its proof,
which is based on the subadditivity property of (\ref{eq:subadditive}) and some arguments
similar to those used  in the proof of the Kingman's Subadditive Ergodic Theorem.
\begin{prop}
\label{prop:convergencesubadditive}
Let $\overline{r}_m$ and $\overline{\ell}_m$ be the quantities defined above. Then
there exist two constants $\alpha \in [-\infty, s_u]$ and   $\beta \in [s_1, +\infty]$ such that,
\begin{equation}
\label{eq:convergencesubadditive4}
\overline{r}_m / m \rightarrow \alpha \, \, \, \mbox {almost surely},
\end{equation}
\begin{equation}
\label{eq:convergencesubadditivel}
\overline{\ell}_m / m \rightarrow \beta \, \, \, \mbox {almost surely}.
\end{equation}
\end{prop}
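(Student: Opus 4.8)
The plan is to recognise $(\overline{r}_m)_{m \geq 0}$ as a subadditive stochastic process and to apply Kingman's subadditive ergodic theorem, in the form that permits the limit to be $-\infty$. The inequality (\ref{eq:subadditive}) is exactly the subadditivity that is required: writing $X_{m,n} := \overline{r}_{m,n}$ and noting that $\overline{r}_n = \overline{r}_{0,n}$, relation (\ref{eq:subadditive}) reads $X_{0,n} \leq X_{0,m} + X_{m,n}$ for all integers $0 \leq m \leq n$, which is precisely the subadditive bound in Kingman's hypotheses.

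First I would check the structural hypotheses. Because $\mathbb{P}_p$ is a Bernoulli product measure, it is invariant and ergodic under the time-shift $(x,y) \mapsto (x, y-1)$ of $V$. The variable $\overline{r}_{m,n}$ is a function only of the states $\omega_{x,y}$ with $t-n \leq y \leq t-m$, read off starting from the (measurable) barrier $\overline{r}_m$; hence $\overline{r}_{m,n}$ has the same law as $\overline{r}_{n-m}=\overline{r}_{0,n-m}$, the array $\{X_{m+1,n+1}\}$ has the same joint distribution as $\{X_{m,n}\}$, and the diagonal sequence $(X_{m,m+1})_m$ is stationary and ergodic. This is the only place where the difference between our model (oriented \emph{site} percolation with an arbitrary finite neighbourhood) and the bond model of \cite{Durret, DurretOr} enters, and it is harmless: only translation invariance of $\mathcal{G}_{\mathcal{U}}$ in both space and time is used.

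Next I would verify the integrability condition. Since in a single step a path can move to the right by at most $s_u$, we have $\overline{r}_1 \leq s_u$ deterministically, so $E[X_{0,1}^{+}] = E[\overline{r}_1^{+}] \leq \max(s_u,0) < \infty$. Kingman's theorem then yields $\overline{r}_m/m \to \alpha$ almost surely, where $\alpha = \lim_{m} E[\overline{r}_m]/m = \inf_m E[\overline{r}_m]/m \in [-\infty,+\infty)$, the last equality being Fekete's lemma applied to the subadditive sequence $E[\overline{r}_m]$. Iterating the one-step bound gives $\overline{r}_m \leq m\, s_u$, whence $\alpha \leq s_u$, matching the range claimed in (\ref{eq:convergencesubadditive4}); no lower bound on $\alpha$ is needed, as the statement already allows $\alpha = -\infty$.

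Finally, the statement for $\overline{\ell}_m$ follows by the same argument applied to $-\overline{\ell}_m$: relation (\ref{eq:subadditiveL}) says $\overline{\ell}_n \geq \overline{\ell}_m + \overline{\ell}_{m,n}$, i.e. $(-\overline{\ell}_m)$ is subadditive, and $-\overline{\ell}_1 \leq -s_1$ supplies the required integrability; alternatively one invokes the reflection symmetry of the construction. This gives $\overline{\ell}_m/m \to \beta$ almost surely with $\beta \geq s_1$, as in (\ref{eq:convergencesubadditivel}). The main obstacle is not any single computation but the careful bookkeeping showing that $\{\overline{r}_{m,n}\}$ is genuinely a stationary and ergodic subadditive array with respect to the time-shift, so that Kingman's theorem applies verbatim; once this is in place the almost sure convergence is immediate, and one may simply cite \cite{Durret} for the abstract statement.
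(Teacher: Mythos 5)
Your proof is correct and follows essentially the same route as the paper, which defers to \cite{Durret} for an argument based on the subadditivity relations (\ref{eq:subadditive})--(\ref{eq:subadditiveL}) and a Kingman-type subadditive ergodic theorem. Your verification of stationarity of the array $\{\overline{r}_{m,n}\}$ under the time shift, the integrability bound $\overline{r}_1 \leq s_u$, and the reduction of the $\overline{\ell}_m$ case to $-\overline{\ell}_m$ supply exactly the details the paper leaves to the reference.
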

Let  $E^t \subset \Omega$ be the following event,
$$
E^t := 
\{ \mbox{``there exists an infinite open path starting from (0,t)}'' \}.
$$
Then, if $p > p_c$, conditioning on $E^t$, from Proposition \ref{prop:convergencesubadditive}
and from equations (\ref{eq:proprl1}), (\ref{eq:proprl2}), (\ref{eq:proprl3})
the following properties hold,
\begin{equation}
\label{eq:r_n}
\lim_{m \rightarrow \infty} r_m / m = \alpha \, \, \, \mbox{almost surely},
\end{equation}
\begin{equation}
\label{eq:r_n}
\lim_{m \rightarrow \infty} \ell_m / m = \beta \, \, \, \mbox{almost surely},
\end{equation}
\begin{equation}
\label{eq:r_n}
\beta \leq \alpha.
\end{equation}
\begin{proof}
If $p>p_c$ then the event $E^t$ occurs with positive probability. Conditioning on $E^t$, for all $m \geq 0$ $r_m \geq \ell_m$. Furthermore,
from equations (\ref{eq:proprl2}) and (\ref{eq:proprl3}) it follows that $r_m = \overline{r}_m$ and $\ell_m = \overline{\ell}_m$.
\end{proof}
We define now the variable,
\begin{equation}
\label{eq:gamma}
\gamma := \alpha - \beta,
\end{equation}
which plays the role played by $\alpha$ in \cite{DurretOr}.
The proof of the next proposition can be found in \cite[Section 3]{DurretOr},
in case of bond percolation with symmetric neighbourhood.
As the statement is needed for the proof of Theorem \ref{theo:maintheorem2},
we sketch its proof, adapting it to the model considered in this article.
\begin{prop} 
\label{prop:pcgamma2}
Let $\gamma$ be the variable defined in equation (\ref{eq:gamma}). Then,
\begin{equation}
\label{eq:pcgamma0}
p_c = \inf \{p \, : \, \gamma(p) >0\}.
\end{equation}
\end{prop}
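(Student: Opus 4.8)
The plan is to write $\tilde p := \inf\{p : \gamma(p) > 0\}$ and to prove $p_c = \tilde p$ by means of the two implications \textbf{(A)} $p > p_c \Rightarrow \gamma(p) > 0$ and \textbf{(B)} $\gamma(p) > 0 \Rightarrow p \ge p_c$. Granting these, (A) gives $(p_c,1] \subseteq \{p : \gamma(p) > 0\}$ and hence $\tilde p \le p_c$, while (B) gives $\{p : \gamma(p) > 0\} \subseteq [p_c,1]$ and hence $\tilde p \ge p_c$, so that $p_c = \tilde p$. Before doing this I would record, using the standard monotone coupling of the Bernoulli fields $\mathbb P_p$ across $p$, that $\overline r_m$ is non-decreasing and $\overline\ell_m$ non-increasing in $p$, so that by Proposition \ref{prop:convergencesubadditive} the map $p \mapsto \gamma(p) = \alpha(p) - \beta(p)$ is non-decreasing; this makes $\tilde p$ a genuine threshold and is convenient although not strictly necessary.

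I would treat (B) first, as it is the more constructive direction. Assume $\gamma(p) > 0$, i.e. $\alpha > \beta$. By Proposition \ref{prop:convergencesubadditive}, $\overline r_m - \overline\ell_m \ge \tfrac{\gamma}{2}m$ for all large $m$ almost surely, so with probability one the rightmost vertex $\overline r_m$ reachable from the left half-line $\{z \le 0\}$ lies far to the right of the leftmost vertex $\overline\ell_m$ reachable from the right half-line $\{z \ge 0\}$, whereas at the top line $y=t$ the two sources sit in the opposite order. Using planarity of $\mathcal G_{\mathcal U}$, a downward open path attaining $\overline r_m$ and one attaining $\overline\ell_m$ must share a common open vertex $v$, from which both $\overline r_m$ and $\overline\ell_m$ are reachable; hence with probability one there is, at every scale $m$, an open cluster of width at least $\tfrac{\gamma}{2}m$. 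These wide clusters can then be used as the renormalised open sites of a block construction: one shows that the event ``a wide cluster crosses a given space-time block'' has probability close to one and that such events for well-separated blocks are essentially independent, so that the renormalised process percolates and yields an infinite open path with positive probability. Thus $\mathbb P_p(E^t) > 0$, i.e. $p \ge p_c$. The crossing lemma and the block construction are exactly the arguments of \cite[Section 3]{DurretOr}, which I would restate for the planar graph $\mathcal G_{\mathcal U}$ and for site (rather than bond) percolation with an arbitrary finite neighbourhood; both survive this change because $\mathcal G_{\mathcal U}$ is planar and the model retains monotonicity and the finite-energy property.

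The implication (A) is the delicate one and is where I expect the real obstacle to lie. Conditioning on $E^t$, which has positive probability since $p > p_c$, the limits established above give $r_m/m \to \alpha$, $\ell_m/m \to \beta$ and $\beta \le \alpha$, so that $\gamma \ge 0$ comes for free; the entire content of (A) is the \emph{strict} inequality $\gamma > 0$, equivalently the exclusion of an intermediate regime in which the origin sustains an infinite cluster while its width $r_m - \ell_m$ grows only sublinearly. Following \cite[Section 3]{DurretOr}, I would prove the contrapositive $\gamma(p) \le 0 \Rightarrow p \le p_c$ by showing that if the two edge speeds fail to separate then the cluster of the origin is confined and dies out almost surely: the key input is that, in the absence of strictly positive width growth, the right edge $\overline r_m$ of the process issued from the left half-line recedes (has non-positive speed), which by a subadditivity and exponential-decay argument forces $\mathbb P_p(E^t)=0$. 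Adapting this step to an asymmetric neighbourhood is the main difficulty, since the symmetry $\ell_m = -r_m$ of \cite{DurretOr} is no longer available and one must control the two speeds $\alpha$ and $\beta$ separately; I would handle this by running the subadditive edge argument independently on the right edge (governing $\alpha$) and on the left edge (governing $\beta$) and combining the two conclusions through $\gamma = \alpha - \beta$.
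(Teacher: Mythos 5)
Your decomposition into \textbf{(A)} $p>p_c \Rightarrow \gamma(p)>0$ and \textbf{(B)} $\gamma(p)>0 \Rightarrow p\ge p_c$ is logically sufficient, and \textbf{(B)} is in substance the paper's step: note, though, that the paper obtains survival directly from the crossing identity $\tau^{[-M,M]}=\inf\{m:\, r_m^{(-\infty,M]}<\ell_m^{[-M,\infty)}\}$ together with two events each of probability at least $0.51$, with no renormalisation at all; the ``block construction with crossing probability close to one'' you invoke is the much heavier machinery of Proposition \ref{prop:inequality2}, is not needed here, and your jump from ``wide clusters exist at every scale'' to ``the renormalised process percolates'' is not justified as stated. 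The genuine gap, however, is in \textbf{(A)}. You propose to prove the contrapositive $\gamma(p)\le 0\Rightarrow p\le p_c$ by showing that when the edge speeds fail to separate the cluster ``dies out almost surely'' via ``a subadditivity and exponential-decay argument''. That argument (Proposition \ref{prop:subcritical} here, Section 7 of \cite{DurretOr}) dominates $\overline r_{mN}-\frac{\alpha+\beta}{2}mN$ by a random walk $S_m$ with $E[S_1]<0$, and the \emph{strict} negativity of the drift is exactly what yields $P(S_m\ge 0)\le\varphi(\theta_0)^m$ with $\varphi(\theta_0)<1$. When $\gamma=0$ one has $\alpha=\beta=\frac{\alpha+\beta}{2}$, the drift is zero, and no extinction conclusion follows; indeed ``$\gamma=0\Rightarrow$ a.s.\ extinction'' is essentially the assertion that the process does not survive at the edge of the supercritical window, which this circle of ideas cannot reach directly.

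The paper avoids ever deciding what happens at a parameter with $\gamma(p)=0$. It proves only the weak implications $\gamma<0\Rightarrow p\le p_c$ and $\gamma>0\Rightarrow p\ge p_c$, which sandwich $p_c$ between $\sup\{p:\gamma(p)<0\}$ and $\inf\{p:\gamma(p)>0\}$, and then shows that $\{p:\gamma(p)=0\}$ is at most a single point, using the strict monotonicity estimate $\alpha(p)-\alpha(p')\ge p-p'$ (proved by the uniform-variable coupling of \cite[Section 3]{DurretOr}) together with the monotonicity of $\beta$ in $p$. Your implication \textbf{(A)} is in fact true, and your plan can be repaired by importing exactly this lemma: given $p>p_c$ choose $p'$ with $p_c<p'<p$; the weak implication gives $\gamma(p')\ge 0$, and then $\gamma(p)\ge\gamma(p')+(p-p')>0$. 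Without this strict-monotonicity input, your contrapositive argument for \textbf{(A)} does not close.
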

\begin{proof}
Observe that equation (\ref{eq:r_n}) implies that,
\begin{equation}
\label{eq:pcgamma1}
 \alpha < \beta \implies p \leq p_c.
\end{equation}
Then, to prove equation (\ref{eq:pcgamma0}), first it is necessary to show that,
\begin{equation}
\label{eq:pcgamma2}
\gamma > 0 \implies p >p_c.
\end{equation} 
Indeed, equations (\ref{eq:pcgamma1}) and (\ref{eq:pcgamma2}) imply 
that 
\begin{equation}
\label{eq:pcgamma3}
\sup  \{p \, : \, \gamma(p) < 0\} \leq p_c \leq \inf \{p \, : \, \gamma(p) >0\}.
\end{equation}
Hence, it remains to exclude the possibility that the interval $\{p \, : \gamma(p) = 0 \}$ has positive length. 
This fact is a consequence of the following property,
\begin{equation}
\label{eq:pcgamma4}
 p > p^{\prime} \mbox{ and } \alpha(p^{\prime}) > - \infty \implies 
 \alpha(p) - \alpha(p^{\prime})  \geq (p - p^{\prime}).
\end{equation}
and of the fact that $\beta(p)$ is non-decreasing with $p$. For the proof of 
(\ref{eq:pcgamma4}) we refer to \cite[Section 3]{DurretOr},
as the symmetry of the neighbourhood does not play any role in the proof.
The proof is based on the construction of two systems with parameter $p$ and 
$p^{\prime}$ on the same space by assigning an independent random variable
$U_{x,y}$ to each vertex $(x,y) \in V$ which is uniformly distributed on $(0,1)$. 
The vertex is open if $U_{x,y} < $ the parameter value and closed otherwise. The only difference from \cite{DurretOr} is that there these random variables are assigned to bonds and that the set of vertices of the graph is different, i.e. $\{(x,y)$   s. t. $x + y $ is even$\}$.  

In the remaining part of the proof  we prove equation (\ref{eq:pcgamma2}). 
Observe that if $\gamma > 0$, then $\overline{r}_m - \alpha m + \frac{\gamma}{2} m =
\overline{r}_m - \frac{\alpha + \beta}{2} m  \longrightarrow \infty$ and 
$\overline{\ell}_m - \beta m  - \frac{\gamma}{2} m = \overline{\ell}_m - \frac{\alpha + \beta}{2} m  \longrightarrow -\infty$
almost surely. 
Then there exists an integer $M < \infty$ such that,
\begin{align}
\label{eq:pcgamma5}
\mathbb{P}_p ( \forall m,  \, \, \overline{r}_m > \frac{\alpha + \beta}{2} m - M ) \geq 0.51, \\
\label{eq:pcgamma6}
\mathbb{P}_p ( \forall m, \, \, \overline{\ell}_m < \frac{\alpha + \beta}{2} m + M ) \geq 0.51.
\end{align}
Secondly we introduce the following notation. If $A \subset (-\infty, + \infty)$,
then we let 
\begin{align}
\label{eq:notation1}
\xi_m^A:& = \{ x \, : \, \exists y \in A \, \mbox{ s.t } (y,t) \rightarrow (x, t-m) \}, \\
\label{eq:notation2}
r_m^A : &=  \sup \xi_m^A, \\
\label{eq:notation3}
\ell_m^A :& = \inf \xi_m^A, \\
\label{eq:notation4}
\tau^A :&=\inf\{m \, : \, \xi^A_m = \varnothing \}.
\end{align}
Repeating the proof of (\ref{eq:proprl1}), (\ref{eq:proprl2}), (\ref{eq:proprl3})
(see also \cite[Section 3, equations 10]{DurretOr}),
it follows that
\begin{equation}
\begin{split}
\label{eq:pcgamma7}
\tau^{[-M, M]} & = \inf\{m \, : \, r_m^{[-M, M]} < \ell_m^{[-M, M]} \} \\
& = \inf\{m \, : \, r_m^{[-\infty, M]} < \ell_m^{[-M, \infty]} \} .
\end{split}
\end{equation}
The previous equality implies that,
$$ 
\{ \tau^{[-M, M]} = \infty \} \supset \{ \ell_m^{[-M, \infty)} \leq \frac{\alpha + \beta}{2}
m \leq r_m^{(-\infty, M]}, \, \, \, \forall m \}.
$$
As 
$$
\mathbb{P}_p( r_m^{(-\infty, M]} > \frac{\alpha + \beta}{2} m, \, \, \, \forall m) =
\mathbb{P}_p( r_m^{(-\infty, 0]} > \frac{\alpha + \beta}{2} m - M,\, \, \, \,  \forall m),
$$
and 
$$
\mathbb{P}_p( \ell_m^{[-M, +\infty)} < \frac{\alpha + \beta}{2} m, \, \, \, \forall m) =
\mathbb{P}_p( \ell_m^{[0, + \infty)} > \frac{\alpha + \beta}{2} m + M,\, \, \, \,  \forall m),
$$
it follows that
$$
\mathbb{P}_p( \xi_m^{[-M, M]} \neq \varnothing,\, \, \,  \forall m ) \geq 0.02.
$$
Since, $\forall M > 0$,
$$
\mathbb{P}_p( \xi^0_M \supset \mathbb{Z} \cap [-M,M]) > 0,
$$
it follows that $\mathbb{P}_p( E^t) >0$. Then $p>p_c$.
\end{proof}
The next estimates have been proved in \cite{Griffeath}. The proof can be found also in \cite[Section 7, equations (1) and (2)]{DurretOr}.  In particular equation (\ref{eq:exponentialspeed}) holds for a wide class of percolation models in the subcritical regime (see \cite{Aizenmann} for a proof in a very general setting).
\begin{prop}
\label{prop:subcritical}
Recall Definition \ref{def:opensitesandpaths}.
For every $p$, let  $a(p) > \alpha(p)$ and $b(p) < \beta(p)$.
If  $p < p_c$ there exist some positive constants $h, h_2, h_3, C_2, C_3$ (dependent on $p$) such that,
\begin{align}
\label{eq:exponentialspeed}
& \mathbb{P}_p (  (0,m) \overset{ \mathcal{G}_{\mathcal{U}}}{\longrightarrow} S^0  )  \leq \exp ( - h \, m), \\
\label{eq:exponentialspeed2}
& \mathbb{P}_p (  \overline{r}_m > a\, m )  \leq  C_2 \exp ( - h_2\,  m),  \\
\label{eq:exponentialspeed3}
& \mathbb{P}_p (  \overline{\ell}_m < b\, m )  \leq  C_3 \exp ( - h_3 \, m).
\end{align}
\end{prop}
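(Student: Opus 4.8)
The plan is to prove the two edge large–deviation bounds (\ref{eq:exponentialspeed2}) and (\ref{eq:exponentialspeed3}) first, and then to deduce the survival bound (\ref{eq:exponentialspeed}) from them. The reason for this order is that the edge estimates follow from a soft subadditivity-plus-Chernoff argument that does not use subcriticality at all, whereas (\ref{eq:exponentialspeed}) is the place where the hypothesis $p<p_c$ genuinely enters, via Proposition \ref{prop:pcgamma2}.

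For (\ref{eq:exponentialspeed2}) I would argue as follows. Equation (\ref{eq:subadditive}) gives $\overline{r}_n \le \overline{r}_m + \overline{r}_{m,n}$ for $n\ge m$, where by construction $\overline{r}_{m,n}$ has the same law as $\overline{r}_{n-m}$ and is independent of the configuration on the levels used to define $\overline{r}_m$. Since one downward step moves the coordinate of a path by at most $s_u$, one has the deterministic bound $\overline{r}_m \le s_u\, m$, so every exponential moment $\mathbb{E}[e^{\lambda \overline{r}_m}]$ is finite. Because $a>\alpha=\inf_m \mathbb{E}[\overline{r}_m]/m$ (the limit in Proposition \ref{prop:convergencesubadditive} being an infimum by Fekete's lemma applied to the subadditive sequence $\mathbb{E}[\overline{r}_m]$), I can fix a block length $m_0$ with $\mathbb{E}[\overline{r}_{m_0}]/m_0 < a$. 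Iterating (\ref{eq:subadditive}) along blocks of length $m_0$ and using the independence shows that $\overline{r}_{k m_0}$ is stochastically dominated by a sum $\sum_{i=1}^k Y_i$ of i.i.d.\ copies $Y_i$ of $\overline{r}_{m_0}$, each bounded and of mean strictly below $a\,m_0$. Cram\'er's theorem (or Hoeffding's inequality, using boundedness) then gives $\mathbb{P}(\sum_{i=1}^k Y_i > a\,m_0 k)\le e^{-c k}$, and absorbing the bounded remainder for $m$ not a multiple of $m_0$ yields (\ref{eq:exponentialspeed2}). The estimate (\ref{eq:exponentialspeed3}) follows by the identical argument applied to $-\overline{\ell}_m$, which is subadditive by (\ref{eq:subadditiveL}) and bounded above by $-s_1 m$.

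To obtain (\ref{eq:exponentialspeed}), note that with $t=m$ the event $\{(0,m)\to S^0\}$ coincides with $\{\xi_m \neq \varnothing\}$. On this event equations (\ref{eq:proprl2}) and (\ref{eq:proprl3}) give $r_m=\overline{r}_m$ and $\ell_m=\overline{\ell}_m$, while $\xi_m\neq\varnothing$ forces $r_m \ge \ell_m$; hence $\{\xi_m \neq \varnothing\}\subset \{\overline{r}_m \ge \overline{\ell}_m\}$. Since $p<p_c$, Proposition \ref{prop:pcgamma2} gives $\gamma=\alpha-\beta<0$, so the midpoint $c=(\alpha+\beta)/2$ satisfies $\alpha<c<\beta$. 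If both $\overline{r}_m < c\,m$ and $\overline{\ell}_m > c\,m$ held, then $\overline{r}_m<\overline{\ell}_m$, a contradiction; therefore
\[
\{\overline{r}_m \ge \overline{\ell}_m\} \subset \{\overline{r}_m \ge c\,m\} \cup \{\overline{\ell}_m \le c\,m\}.
\]
Applying (\ref{eq:exponentialspeed2}) with $a=c$, (\ref{eq:exponentialspeed3}) with $b=c$, and a union bound produces the exponential decay asserted in (\ref{eq:exponentialspeed}).

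I expect the main obstacle to be the careful verification of the regeneration/independence structure behind (\ref{eq:subadditive})--(\ref{eq:subadditiveL}) in the present setting of oriented \emph{site} percolation with an \emph{asymmetric} neighbourhood: one must check that $\overline{r}_{m,n}$ really is independent of $\overline{r}_m$ with the claimed law, and that $\alpha=\inf_m\mathbb{E}[\overline{r}_m]/m$ with finite exponential moments, so that the block--Chernoff step is legitimate. A secondary delicate point is confirming that $p<p_c$ yields the \emph{strict} inequality $\alpha<\beta$ rather than merely $\alpha\le\beta$, which is what opens the gap $(\alpha,\beta)$ used in the last display; this is precisely the information extracted from the proof of Proposition \ref{prop:pcgamma2}, where the set $\{p:\gamma(p)=0\}$ is shown to have zero length. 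The remaining differences from \cite{DurretOr} (sites versus bonds, absence of reflection symmetry) affect only constants and are cosmetic.
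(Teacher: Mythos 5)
Your proposal is correct and follows essentially the same route as the paper: both arguments rest on the block decomposition from the subadditivity relations (\ref{eq:subadditive})--(\ref{eq:subadditiveL}), a Chernoff/exponential-Chebyshev bound for the resulting negative-mean random walk of bounded increments, and the inclusion $\{\xi_m \neq \varnothing\} \subset \{\overline{r}_m \geq \tfrac{\alpha+\beta}{2}m\} \cup \{\overline{\ell}_m \leq \tfrac{\alpha+\beta}{2}m\}$ together with the strict gap $\alpha < \beta$ supplied by Proposition \ref{prop:pcgamma2}. The only cosmetic difference is that you establish (\ref{eq:exponentialspeed2})--(\ref{eq:exponentialspeed3}) first and specialize to $a=b=\tfrac{\alpha+\beta}{2}$, whereas the paper runs the same block--Chernoff computation directly at that midpoint and remarks that the general edge estimates are analogous.
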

\begin{proof}
We sketch the proof of (\ref{eq:exponentialspeed}), which is similar to the proof of
\ref{eq:exponentialspeed2}) and \ref{eq:exponentialspeed3}).
If $p< p_c$, then from equation (\ref{eq:pcgamma0}) $\alpha < \frac{\alpha + \beta}{2} < \beta$. Thus there exists an $N$ large enough such that $E[\overline{r}_{0,N}] < \frac{\alpha + \beta}{2} N$,  $E[\overline{\ell}_{0,N}] > \frac{\alpha + \beta}{2} N$.
By using the subadditivity property of $\overline{r}_{m,n}$ and $\overline{\ell}_{m, n}$ one can see that,
\begin{equation}
\begin{split}
\overline{r}_{mN} - \frac{\alpha +\beta}{2} m N & \leq S_m := \overline{r}_{0, N} - \frac{\alpha +\beta}{2} N + 
\overline{r}_{N, 2N} - \frac{\alpha +\beta}{2} N \\ & + \ldots + \overline{r}_{(m-1)N, mN} - \frac{\alpha +\beta}{2} N ,\\
\frac{\alpha +\beta}{2} m N  - \overline{\ell}_{mN}  & \leq S^{\prime}_m := \frac{\alpha +\beta}{2} N - \overline{\ell}_{0, N}  + 
\frac{\alpha +\beta}{2} N - \overline{\ell}_{N, 2N} \\ &  \ldots + \frac{\alpha +\beta}{2} N - \overline{\ell}_{(m-1)N, mN} .
\end{split}
\end{equation}
The right hand side of the two previous inequalities is a random walk with expectation
respectively $E[S_1] < 0 $, $E[S_1^{\prime}] < 0 $.
As $S_1 \leq s_u N$,  $S_1^{\prime}  \leq s_u N$,
then $\varphi(\theta) : = E[\exp(\theta S_1) ] < \infty$ 
and $\varphi^{\prime}(\theta) : = E[\exp(\theta S^{\prime}_1) ] < \infty$ 
for all $\theta > 0$.
From the considerations in \cite{DurretOr}
it follows that we can pick $\theta_0 >0$ with $\varphi(\theta_0) < 1$
and $\varphi^{\prime}(\theta_0) < 1$  such that,
$$
\mathbb{P}_p ( S_m \geq 0) \leq E[\exp(\theta_0 S_m ) ] = \varphi(\theta_0)^m,
$$
$$
\mathbb{P}_p ( S^{\prime}_m \geq 0) \leq E[\exp(\theta_0 S_m ) ] = \varphi^{\prime}(\theta_0)^m.
$$
This implies that $\mathbb{P}_p( \overline{r}_{mN} \geq \frac{\alpha + \beta}{2} m N) \longrightarrow 0$ and $\mathbb{P}_p( \overline{\ell}_{mN} \leq \frac{\alpha + \beta}{2} m N) \longrightarrow 0$ exponentially fast.
Observe also that as $\mathbb{P}_p (\xi_m = \varnothing ) \geq \mathbb{P}_p ( \overline{r}_m < \frac{\alpha + \beta}{2} m < \overline{\ell}_m )$, then
$$\mathbb{P}_p ( \xi_m \neq \varnothing) \leq 
\mathbb{P}_p ( \overline{r}_m \geq \frac{\alpha + \beta}{2} m ) + 
\mathbb{P}_p ( \overline{\ell}_m \leq \frac{\alpha + \beta}{2} m ).$$
This implies (\ref{eq:exponentialspeed}).
\end{proof}
We end this section recalling a property proved in \cite{Mityushin}.
As the reference is in Russian, we sketch its proof below.
\begin{prop}
\label{prop:Mityushin}
Recall Definition \ref{def:opensitesandpaths}. For every $t, n \in \mathbb{N}$,
\begin{equation}
\label{eq:ineqfininf}
\mathbb{P}^n_p ( (0,t)  \overset{ \mathcal{G}_{\mathcal{U}}(n)}{\longrightarrow} S^0 ) \leq  
\mathbb{P}_p ( (0,t) \overset{ \mathcal{G}_{\mathcal{U}}}{\longrightarrow} S^0  ).
\end{equation}
\end{prop}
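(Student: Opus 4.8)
The plan is to build the infinite configuration as a $2n$-periodic lift of the finite one, and then to compare this periodic product measure with the genuinely independent measure for the (increasing) connection event. Write $\pi(x)=|x+n|_{2n}-n$ for the canonical reduction $\mathbb{Z}\to\mathbb{S}_n$; by \eqref{eq:neighborperiodic} this makes $\mathcal{G}_{\mathcal{U}}$ a covering of the finite graph, in the sense that $k\in\mathcal{U}(x)$ forces $\pi(k)$ to be a periodic neighbour of $\pi(x)$, and conversely every periodic neighbour of $\pi(x)$ lifts to some offset $s_j$. First I would sample i.i.d.\ Bernoulli$(p)$ variables $(V_{r,s})_{r\in\mathbb{S}_n,\,s\in\mathbb{N}}$ and extend them periodically, $\tilde\omega_{x,s}:=V_{\pi(x),s}$. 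Lifting any finite open path from $(0,t)$ to $S^0$ (choosing at each step an offset $s_j$ realising the periodic neighbour) produces an open path of $\mathcal{G}_{\mathcal{U}}$ in $\tilde\omega$ from $(0,t)$, and projecting by $\pi$ any open path of $\mathcal{G}_{\mathcal{U}}$ returns a finite open path; hence the two connection events coincide and
\[
\mathbb{P}^n_p\big((0,t)\overset{\mathcal{G}_{\mathcal{U}}(n)}{\longrightarrow}S^0\big)=\tilde{\mathbb{P}}^n_p\big((0,t)\overset{\mathcal{G}_{\mathcal{U}}}{\longrightarrow}S^0\big),
\]
where $\tilde{\mathbb{P}}^n_p$ denotes the law of $\tilde\omega$. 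Both sides depend only on the finitely many sites of the backward light cone $C_t:=\{(x,s):0\le s\le t,\ (t-s)s_1\le x\le (t-s)s_u\}$, so everything below is finite-dimensional.

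It then remains to prove $\tilde{\mathbb{P}}^n_p(A)\le\mathbb{P}_p(A)$ for the increasing event $A=\{(0,t)\overset{\mathcal{G}_{\mathcal{U}}}{\longrightarrow}S^0\}$, regarded as a subset of $\{0,1\}^{C_t}$. On $C_t$ the only difference between the two laws is that $\tilde{\mathbb{P}}^n_p$ forces the variables inside each \emph{fibre} $F_{r,s}:=\{(x,s)\in C_t:\pi(x)=r\}$ (a fixed residue $r\in\mathbb{S}_n$ at a fixed time $s$) to share one common value $V_{r,s}$, whereas under $\mathbb{P}_p$ they are independent. I would interpolate by folding one fibre at a time: set $\mu_0=\mathbb{P}_p|_{C_t}$ and obtain $\mu_i$ from $\mu_{i-1}$ by replacing the independent variables of the $i$-th fibre with a single common Bernoulli$(p)$ variable, so that the fully folded measure is $\tilde{\mathbb{P}}^n_p|_{C_t}$. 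The goal reduces to the one-step bound $\mu_i(A)\le\mu_{i-1}(A)$.

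The key structural point, which I expect to be the crux, is that all sites of a fibre lie at the \emph{same} time level, while an oriented path of $\mathcal{G}_{\mathcal{U}}$ meets each time level exactly once; hence no path can use two sites of the same fibre. Consequently, after conditioning on the configuration $\rho$ outside a fibre $F$, the event $A$ becomes a pure disjunction of the fibre's site variables,
\[
\mathbbm{1}_A=c\ \vee\ \bigvee_{w\in F'}\mathbbm{1}\{\,w\ \text{open}\,\},
\]
where $c\in\{0,1\}$ records whether $A$ already holds through a path avoiding $F$, and $F'\subseteq F$ is the ($\rho$-determined) set of sites that are simultaneously top-reachable and bottom-connected. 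For such a disjunction, independent variables give $\mathbb{P}(A\mid\rho)=1-(1-p)^{|F'|}$ when $c=0$, while identifying them to one common variable gives $p$; since $p\le 1-(1-p)^{|F'|}$ for $|F'|\ge1$ (with equality when $|F'|\le1$, and both equal $1$ when $c=1$), folding the fibre cannot increase the conditional probability. Averaging over $\rho$ yields $\mu_i(A)\le\mu_{i-1}(A)$, and iterating over all fibres gives $\tilde{\mathbb{P}}^n_p(A)\le\mathbb{P}_p(A)$, which together with the identity of the first paragraph is exactly \eqref{eq:ineqfininf}. I would emphasise that orientedness is essential: it is precisely the one-site-per-time-level property that turns $A$ into a disjunction in each fibre, so that merging variables can only hurt; for un-oriented connectivity the same folding could instead raise the probability.
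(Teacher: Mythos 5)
Your proposal is correct and is essentially the paper's own argument (due to Mityushin): your ``folding of a fibre'' is exactly the paper's ``synchronization'' of the variables $z_{x+2kn,y}$ at a fixed time level, and your observation that an oriented path meets each time level once --- so the event becomes a disjunction over the fibre and $p\le 1-(1-p)^{|F'|}$ --- is precisely the paper's point that the polynomial $Z$ is linear in each synchronized group, giving $P(Z_1>0)\ge P(Z_2>0)$.
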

\begin{proof}
Observe that in $\mathbb{Z} \times \mathbb{Z}$ all paths of length $t$ starting from $(0,t)$ lie within $\Delta = [s_1 t, s_u t] \times [0,t-1]$. At each point we have a random variable $\omega_{x,y}$ that is equal to $1$ with probability $p$ and to $0$ with probability $1-p$ and these random variables are mutually independent.
We consider the same set $\Delta$ but with a different set of random variables $z_{x,y}$. Each $z_{x,y}$ is equal to $1$ with probability $p$ and $0$ with probability $1-p$, but these random variables are not independent. Namely, for all $(x,y)$, the random variables $z_{x + 2kn, y}$ for all integers $k$ such that $(x + 2kn,y) \in \Delta$ 
have the same outcome (i.e. they are ``synchronized'').
This model is equivalent to the model on the cylinder $\Delta_n \times [0, t-1]$ (i.e. with periodic boundaries), where $\omega_{x,y}$ are independent, because in these two models their probabilistic spaces and sets of open paths starting at $(0,t)$ are isomorphic.

Let then $\mathcal{W}$ be the set of all possible paths of length $t$ from $(0,0)$. We will show that ``synchronization'' does not increase the probability of the existence of an open path of length $t$ on $\Delta$.

Let then $\theta_{x,y}$ be some random variables with values $0$ or $1$ associated with $(x,y) \in \Delta$. Consider the function $Z$, with arguments 
$\theta_{x,y}$,
$$
Z = \sum_{h \in \mathcal{W}} \prod_{(x,y) \in h} \theta_{x,y}.
$$
Then $Z \geq 0$ and $Z > 0$ if and only if there exists an open path. Suppose that at the beginning $\theta_{x,y} = \xi_{x,y}$,
for all $(x,y) \in \Delta$ and at each step we ``synchronize'' the variables $\theta_{a + 2kn, b}$ for a certain $(a,b)$ until we get $\theta_{x,y} = z_{x,y}$ for all $(x,y) \in \Delta$. We will show that each synchronization step does not increase $Z$.
To do this, we write 
$$
Z = \sum\limits_{k : (a + 2kn, b) \in \Delta} \theta_{a + 2kn, b} \, \, f_k (\tilde{\theta}) + g (\tilde{\theta}),
$$
where $\tilde{\theta}$ is the set of all $(x,y) \neq (a +2kn, b)$, i.e. they are independent from the group $\theta_{a + 2kn, b}$.
The $f_k$ and $g$ are some functions with non-negative integer values. Here we use the fact that a path can contain only one point of the form $(a + 2kn, b)$, so different
$\theta_{a + 2kn, b}$ don't multiply.
Before the ``synchronization'' step,
$$
Z = Z_1 = \sum\limits_{k : (a + 2kn, b) \in \Delta} \omega_{a + 2kn, b} \, \, f_k (\tilde{\theta}) + g(\tilde{\theta}),
$$
and after it,
$$
Z = Z_2 = z_{a,b} \cdot \sum\limits_{k : (a + 2kn, b) \in \Delta} f_k (\tilde{\theta}) + g(\tilde{\theta}).
$$
It is easy to show that, fixing any value of the set $\tilde{\theta}$, $P(Z_1 > 0) \geq P(Z_2 >0)$. Hence, the same is true when $\tilde{\theta}$ is not fixed.
\end{proof}

\subsection{Proof of Theorem \ref{theo:maintheorem2}}
\label{sect:proofTheo2}
Recall the definitions provided just before the statement of the theorem.
Along the whole proof we denote by $\mathbb{P}^n ( \, \cdot \, )$ the Bernoulli product measure in $\Sigma$, where the space is finite, and by $\mathbb{P}( \, \cdot \, ) $ the Bernoulli product measure in $\Sigma$, where the space is infinite. The proof is based on the estimation of $\mathcal{E}^n_{{\delta}_{\mathbf{1}}}  ( \tau_n > t )$,  
which gives the expectation,
\begin{equation}
\label{eq:time}
\mathbb{E}^{(n)}_{{\delta_1}} [ \tau_n ]  =  \sum_{ t =0 }^{\infty } \mathcal{E}^n_{{\delta}_{\mathbf{1}}} ( \tau_n > t ),
\end{equation}

We prepare the reader to the proof of the left inequality of the statement (b). The proofs of the other inequalities do not need an introduction, as they are simpler. The proof is based on the estimation of the probability of the event $\{ \tau_n > t \}$.
In order to provide this estimation, first we define the event $\mathcal{D}_{n,t,a}$,
whose probability is less than the probability of  $\{ \tau_n > t \}$.  
The event occurs if a path connects $[[-n, n-1]] \times \{t\}$ to the line $y=0$ without crossing the diagonal sides of a parallelogram (a rigorous definition is given later). This allows to reduce the estimation of $\{ \tau_n > t \}$ to the estimation of the probability of an event that is simpler to study, as periodic boundaries play no role.

As the neighbourhood of the model is in general non symmetric, the cluster of vertices belonging to an open path starting from $(0,t)$ (which is infinite with positive probability, as $p > p_c$) will
typically have a drift. Indeed, recall Proposition \ref{prop:convergencesubadditive}
and the fact that $\overline{r}_t  \sim  t (\frac{\alpha + \beta}{2}  + \frac{\gamma}{2})$ and $\overline{\ell}_t \sim t (\frac{\alpha + \beta}{2} - \frac{\gamma}{2})$, $\beta \leq \alpha$.  Thus, as $p$ is slightly larger than $p_c$, then typically the cluster of vertices will be centred around $\sim \frac{\alpha + \beta}{2} t$.
Hence, the diagonal sides of the parallelogram is chosen in such a way that in the limit $t \rightarrow \infty$ the cluster has typically a non-empty intersection 
with the parallelogram. With this choice, the probability of the event $\mathcal{D}_{n,t,a}$ does not go to zero too fast as $t$ grows.

Later we introduce a change of coordinates $T_b^t$ that allows to simplify the notation,
by transforming the graph in a new graph, where the cluster of vertices connected by an open path starting from $(0,t)$ (namely in the new graph $\frac{\alpha + \beta}{2} = 0$). We provide a lower bound for the probability of $\mathcal{D}_{n,t,a}$ by introducing a new event $\mathcal{H}_n$ and by using the FKG inequality to bound the probability of  $\mathcal{D}_{n,t,a}$ with a product of probabilities of events $\mathcal{H}_n$.

In the last part of the proof we estimate the probability of the event $\mathcal{H}_n$,
showing that it goes to $1$ fast enough with $n$ for any $p >p_c$. This estimation is stated in Proposition \ref{prop:inequality2}.  

\paragraph{Proof of part (a).} 
The proof of the left inequality of the statement (a) can be found in
\cite[Section 2]{Stavskaja3} together with an estimation of the constants, so we do not provide it here.
Indeed, the left inequality holds for any $p \in [0,1]$.
The proof of the right inequality of statement (a) is an application 
of the estimates presented in Section \ref{sect:percolationestimates}.
Starting from (\ref{eq:time}),
\begin{equation}
\label{eq:partea}  
\begin{split}
\mathbb{E}^{(n)}_{{\delta_1}} [ \, \tau_n   \, ]
&= \sum_{ t =1 }^{\infty } \mathbb{P}_p^n
 ( \bigcup_{s= -n}^{n-1} \{ (s,t) \overset{ \mathcal{G}_{\mathcal{U}}(n)}{\longrightarrow}  S^0 \}) \\
& \leq \sum_{ t =1 }^{\infty } \min \{ 1, 2  n \, \mathbb{P}^n(  (0,t) \overset{ \mathcal{G}_{\mathcal{U}}(n)}{\longrightarrow}  S^0 )  \} \\
& \leq  \sum_{ t =1 }^{\infty } \min \{ 1, 2  n \exp(- h t)  \}  \\
& \leq  \frac{\log( 2\, n)}{h} + K,
\end{split}
\end{equation}
where $K$ is some positive constant. 
In the first equality we used Proposition \ref{prop:taunconnection}, in the second 
inequality we used the union bound 
and translation invariance, in the second-last inequality we used (\ref{eq:exponentialspeed}) and (\ref{eq:ineqfininf}).
The algebraic tricks of (\ref{eq:partea}) have been used also in the proof of \cite[Proposition 8.6]{ToomDiscr}.
\paragraph{Proof of part (b).}
The proof of the right inequality of the statement (b) is trivial. We 
define a new process $(q_x^t)_{x \in \mathbb{S}_n}$ where
every $q_x^t$ is $1$ with probability $p$ and $0$ with probability $1-p$ independently.
Observe that for all $x \in \mathbb{S}_n$, $t \in \mathbb{Z}_+$,
$q_x^t \geq \eta_x^t$, as long as the two processes are driven by the same random process. Hence, the expected convergence time for the former is larger than the convergence time for the latter. By a simple computation, the expected convergence time for the system $q^t$ is $\frac{1}{1-p}^{2n}$. This implies the inequality.

We start with the proof of the left inequality.
For every $a \in \mathbb{R}$ we define the event,
\begin{equation}
\begin{split}
\mathcal{D}_{n ,t, a} :=  \{ & \exists \,   x \in [[-n , n-1]], \, \, 
\mbox{ such that } (x, t) \mbox{ is connected to } S^0   \\ &
\mbox{ by an open path in } 
\mathcal{G}_{\mathcal{U}}  \mbox{ that never crosses
the lines } \\ & y  =     n -1  - a (x-t),   \, \, 
y  =     - n - a (x-t)\, \, \,  x \in \mathbb{R}  \},
\end{split}
\end{equation}
which is a subset of $\Omega$,
recalling that $S^0$ denotes the set of vertices belonging to the line $y=0$.
See also Figure \ref{Fig4-5:Dn} - up for a representation.
\begin{figure}
\begin{center}
\includegraphics[scale=0.50]{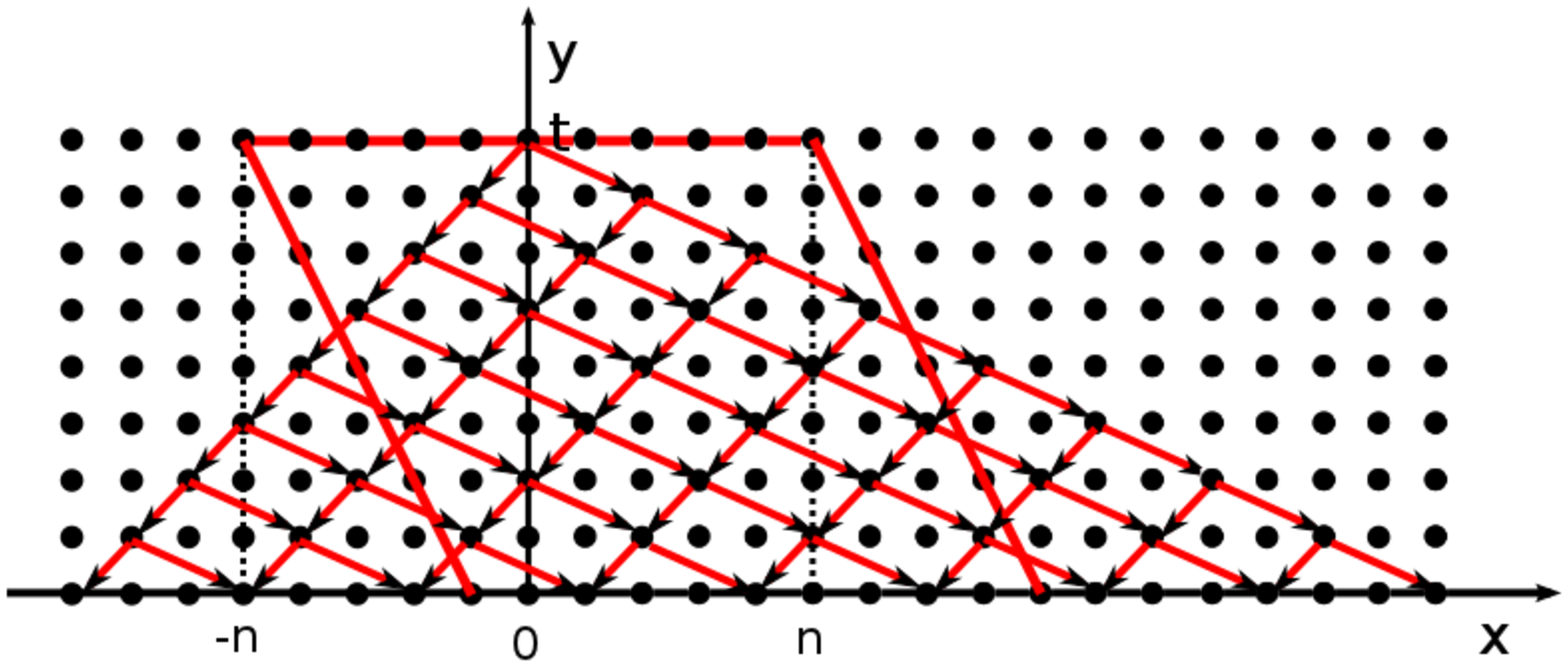}
\includegraphics[scale=0.55]{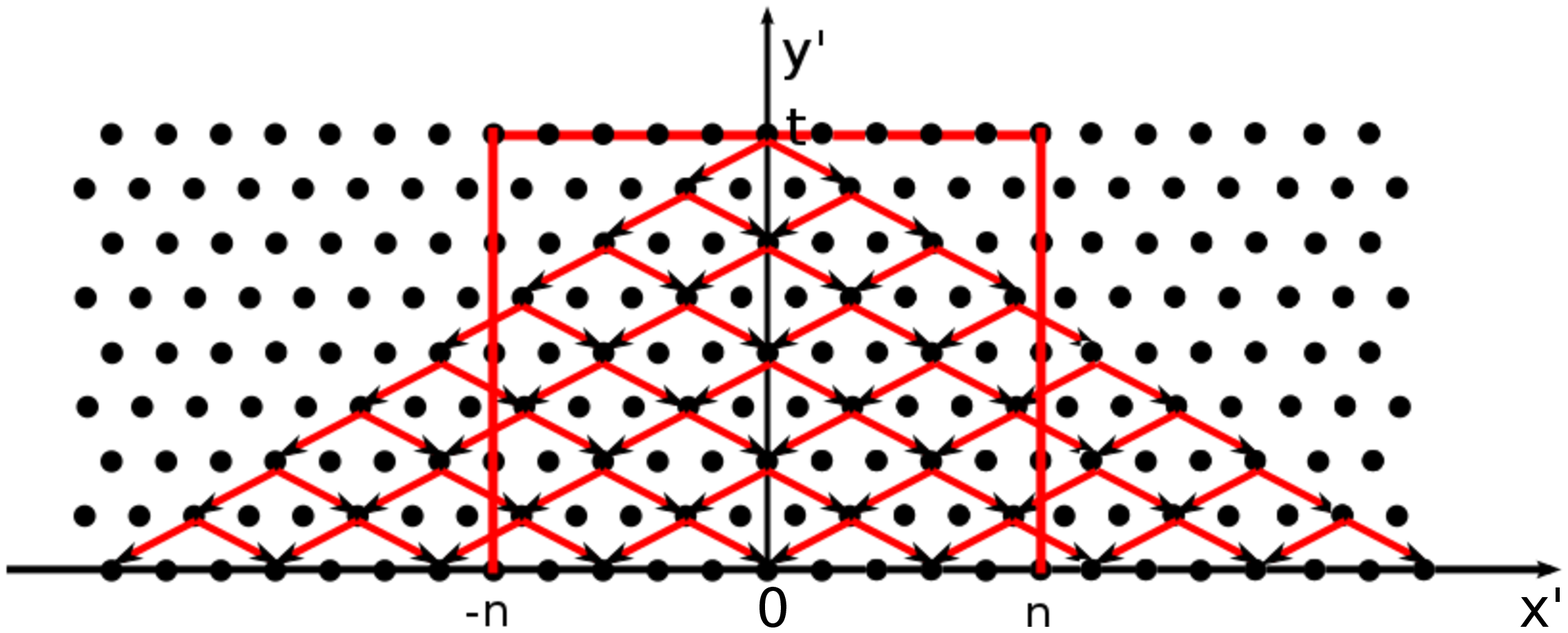}
\caption{\textit{Up}: Representation of $\mathcal{G}_{\mathcal{U}}$ in case of neighbourhood $\mathcal{U} = \{ -1, 2 \}$. For graphical reasons
only edges belonging to the evolution cone of $(0, t)$ have been drawn. In the figure $a = \frac{s_1 + s_u}{2}$. \textit{Down}: the same graph of the figure above, transformed via (\ref{eq:transf}) with parameter $b = a$.}
\label{Fig4-5:Dn}
\end{center}
\end{figure}
Recall Definition \ref{def:opensitesandpaths} and observe that, 
\begin{equation}
\label{eq:ineq}
\mathbb{P}_p ( \mathcal{D}_{n,t, a}  ) \leq
 \mathbb{P}^n_p (\, \, \exists x \in [[-n, n-1]] \, \, \mbox{s.t.} \, \, 
(x,t) \overset{ \mathcal{G}_{\mathcal{U}}  (n) }{\longrightarrow} S^0) =
\mathcal{E}^n_{\delta_{\mathbf{1}}} ( \tau_n > t  ).
\end{equation}
Observe that the quantity on the left is defined in the infinite system and the quantities
in the middle and on the right are defined on the finite system with periodic boundaries.
We provide a proof of the statement below.
\begin{proof}[Proof of (\ref{eq:ineq})]
Consider two graphs, $\mathcal{G}_{\mathcal{U}}^i$ and $\mathcal{G}_{\mathcal{U}}^f$. The former is defined on the infinite space $\mathbb{Z}\times \mathbb{N}_0$ and the latter on the finite space $\mathbb{S}_n \times \mathbb{N}_0$ with periodic boundaries, as defined in Section \ref{sect:themodel}. Let $Q_{a,t} \subset \mathbb{Z} \times \mathbb{N}_0$ be the region inside the parallelogram identified by the points $(-n, t)$, $(n-1, t)$, $(-n + at, 0)$, $(n-1 + at, 0)$ (see Figure \ref{Fig:proof}).
The event $\mathcal{D}_{n,a,t} \subset \{0,1\}^{\mathbb{Z} \times \mathbb{N}_0 }$ 
occurs if an open path connects $[-n, n-1] \times \{t\}$ to 
$[-n+at, n-1 + a \, t] \times \{ 0 \}$ without ever crossing the diagonal sides of the parallelograms. 
We couple the model on the finite space and the model on the infinite space in the following way. Namely, call $\omega_{x,y} $, for all $(x,y) \in \mathbb{Z} \times \mathbb{N}_0$, and $z_{x,y}$, for all $(x,y) \in \mathbb{S}_n \times \mathbb{N}_0$,
the random variables taking values $0$ or $1$ independently. The coupling is such that for all $(x,y) \in Q_{a,t}$, $\omega_{x,y} = z_{x^{\prime}, y^{\prime}}$,
where $x^{\prime} = | x + n |_{2n} -n$, $y^{\prime} = y$,
where $|x|_n$ denotes $x \mod n$. The random variables $\omega_{x,y}$ associated to sites $(x,y)$ not contained in $Q_{a,t}$ are not coupled.
Observe that for every $(x,y) \in Q_{a,t}$ there exists a unique
$(x^{\prime}, y^{\prime})$ in $\mathbb{S}_n$ and vice versa.
Recalling that boundaries of $\mathcal{G}_{\mathcal{U}}^f$ are periodic, one can observe from Figure \ref{Fig:proof} that, as long as there exists an open path in $\mathcal{G}_{\mathcal{U}}^i$ connecting the top to the bottom of $Q_{a,t}$ and never crossing its diagonal sites (e.g. the path represented by a continuous curve in the figure),
there exists also an open path in $\mathcal{G}_{\mathcal{U}}^f$ connecting
$[-n, n-1] \times \{t\}$ to $[-n, n-1] \times \{0\}$ (i.e. the path represented by a dashed curve in the figure). This implies the statement.
\begin{figure}
\begin{center}
\includegraphics[scale=0.35]{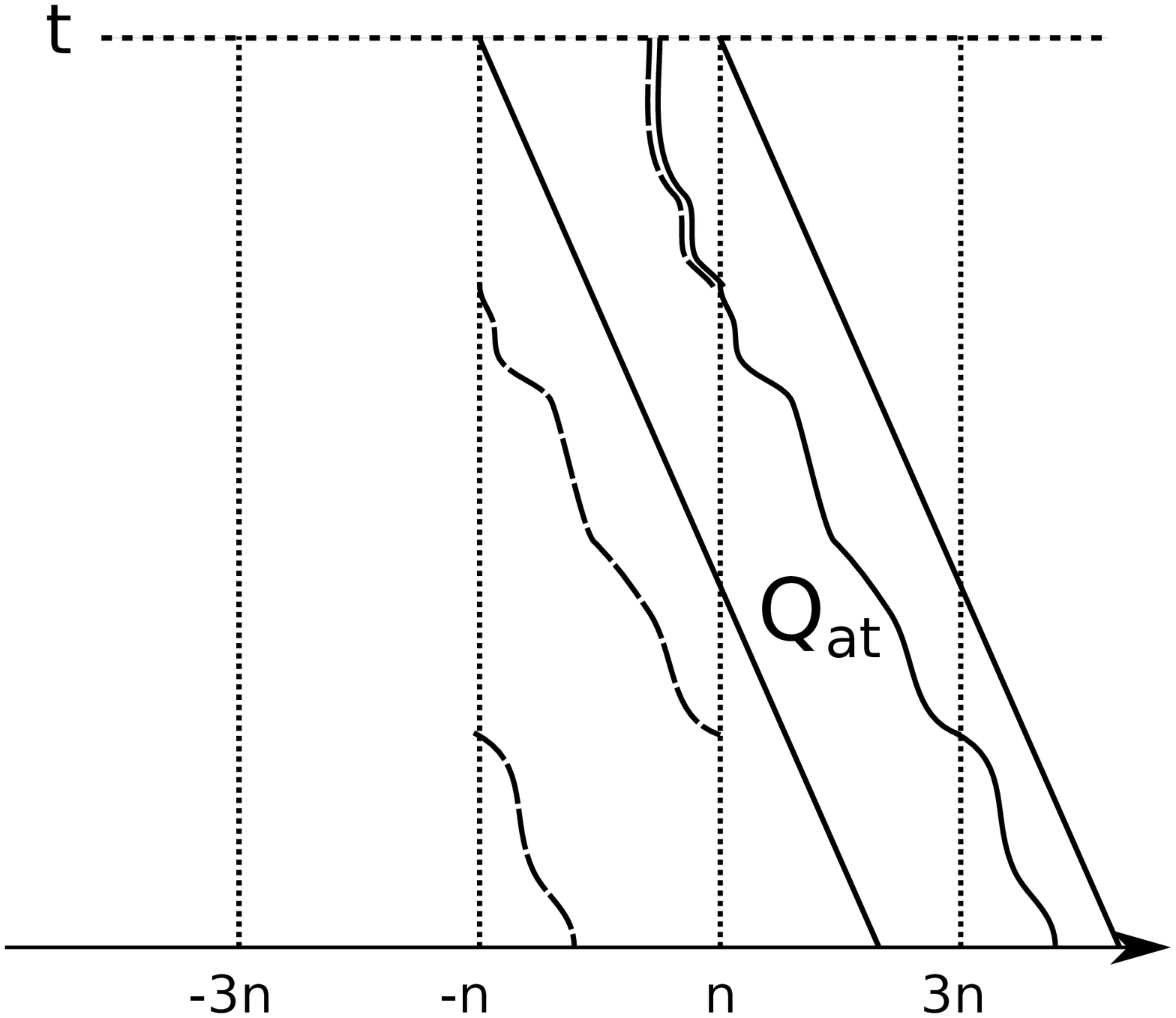}
\caption{ }
\label{Fig:proof}
\end{center}
\end{figure}
\end{proof}
Consider now the following change of coordinates,
\begin{equation}
\label{eq:transf}
\left\lbrace 
\begin{array}{cc}
\begin{split}
x^{\prime} & =    x - b (t - y)  \\
y^{\prime} & = y \\
\end{split}
\end{array}
\right.  ,
\end{equation}
under which the graph $\mathcal{G}_{\mathcal{U}}$ is transformed
into the new graph $T^t_b \,  \mathcal{G}_{\mathcal{U}}$.
We denote by $T^t_b \,  \mathcal{D}_{n, t, a}$ 
the event $\mathcal{D}_{n, t, a}$, defined for the graph 
$T^t_b \,  \mathcal{G}_{\mathcal{U}}$,
(i. e. replace $\mathcal{G}_{\mathcal{U}}$ with $T^t_b \,  \mathcal{G}_{\mathcal{U}}$
in the definition of the event above). See Figure \ref{Fig4-5:Dn} for an example.  
 The following equation holds,
\begin{equation}
\label{eq:transformation}
\mathbb{P}_p ( T^t_b \,  \mathcal{D}_{n, t, a} ) = \mathbb{P}_p (  \,  \mathcal{D}_{n, t, a-b} ),
\end{equation}
as the change of coordinates preserves connection between vertices.
Now we introduce the event $\mathcal{H}_n$,
\begin{equation}
\label{eq:evH}
\begin{split}
\mathcal{H}_n =  \{ &\exists \, y\,  , \, y^{\prime} \, \,  \mbox{s.t.} \, \,   y \in [[4n, 6n]], \, y^{\prime} \in [[0, 2n]]   \\
						      & \mbox{and }  (-n, y) \overset{\mathcal{G}_{\mathcal{U}}}{\longrightarrow}  (n, y^{\prime}) \},
\end{split}
\end{equation}
which is represented in Figure \ref{Fig6-7:events}-right. 
The following proposition is about this event. 
\begin{prop}
\label{prop:inequality2}
For any $p > p_c$ there exist  positive constants $A, b$ (dependent on $p$) such that
for any $t \in \mathbb{N}$ and for $n$ large enough,
\begin{equation}
\label{eq:inequality2}
\mathbb{P}_p(T_{\frac{ \alpha + \beta}{2}}^t \mathcal{H}_n )  \geq  \, 1 - A \,  \exp(  - b \, n).
\end{equation}
\end{prop}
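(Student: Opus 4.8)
The plan is to carry out the whole estimate in the sheared coordinates produced by $T^t_{\frac{\alpha+\beta}{2}}$, in which the cluster of an open path is drift-free. Since the shear preserves connections between vertices (the property already used to obtain (\ref{eq:transformation})), it amounts to recentring the neighbourhood, so that in the new graph the right and left edge velocities of Proposition \ref{prop:convergencesubadditive} become $\alpha - \frac{\alpha+\beta}{2} = \frac{\gamma}{2}$ and $\beta - \frac{\alpha+\beta}{2} = -\frac{\gamma}{2}$. Because $p>p_c$, Proposition \ref{prop:pcgamma2} gives $\gamma>0$, so in these coordinates the reachable interval $[\overline{\ell}_m,\overline{r}_m]$ genuinely widens, at linear speed $\tfrac{\gamma}{2}$ in each direction. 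In this picture $T^t_{\frac{\alpha+\beta}{2}}\mathcal{H}_n$ is precisely the event that the supercritical cluster crosses a box of linear size of order $n$ from its left wall $\{x=-n\}$ to its right wall $\{x=n\}$, the height windows $[[4n,6n]]$ and $[[0,2n]]$ furnishing the vertical room of order $n$ over which the advancing edge can cover the horizontal displacement $2n$. Thus the statement reduces to a quantitative box-crossing estimate for supercritical oriented site percolation.

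First I would reduce the crossing to the behaviour of the edge processes. Running a half-line process downward from the left wall, the rightmost reachable site after $m$ steps is a translate of $\overline{r}_m$, and on the relevant survival event the reachable set fills the whole interval between $\overline{\ell}_m$ and $\overline{r}_m$, by equations (\ref{eq:proprl1})--(\ref{eq:proprl3}); in particular $r_m=\overline{r}_m$ there. Hence the crossing occurs as soon as this edge passes $x=n$ while the cluster is still connected to the starting segment. I would therefore bound $\mathbb{P}_p(T^t_{\frac{\alpha+\beta}{2}}\mathcal{H}_n)$ from below by the probability of the conjunction of \textbf{(i)} a lower-deviation event for the edge, $\overline{r}_m\ge \tfrac{\gamma}{2}m-o(m)$ with $m=\Theta(n)$, and \textbf{(ii)} a non-extinction/connectivity event linking the starting segment to the right wall, combining the two by the FKG inequality, exactly in the spirit of the product bound assembled for $\mathcal{D}_{n,t,a}$ in the main argument.

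The substance is to show that each of these two events fails with probability at most $A\exp(-bn)$. For the edge large deviation \textbf{(i)} I would invoke the supercritical velocity estimates of \cite{DurretOr} (the counterpart, above $p_c$, of Proposition \ref{prop:subcritical}), adapted to oriented \emph{site} percolation with an arbitrary finite neighbourhood as in Section \ref{sect:percolationestimates}: the subadditive decomposition (\ref{eq:subadditive}) expresses $\overline{r}_{mN}$ as a random walk with mean $\tfrac{\gamma}{2}N$ per step and increments bounded by $s_u N$, so an exponential Markov bound in the number $\Theta(n)$ of blocks yields the required decay. For the connectivity event \textbf{(ii)}, which is the genuinely hard part, I would use the dual lattice construction of \cite{Uniform}: the crossing fails only if a dual ``blocking'' path of closed sites separates the left wall from the right wall, and the probability of such a blocking path, whose length is of order $n$, must be shown to be exponentially small. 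This is where the generalized dynamic-block (renormalization) argument of \cite{DurretOr} enters: since $p>p_c$, one chooses a block scale on which a block is ``good'' with probability exceeding the threshold for the renormalized percolation, so that a blocking dual path of bad blocks across the box has probability at most $A\exp(-bn)$.

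The main obstacle is exactly this last exponential-in-$n$ control in the supercritical regime. A naive single-site Peierls count is useless near $p_c$, where closed sites are far from rare, so a blocking contour of length $n$ is not a priori improbable; the renormalization is essential, passing first to a coarse-grained lattice on which good blocks percolate with parameter close to $1$ and only then running the contour estimate. Carrying out this coarse-graining for a \emph{non-symmetric} neighbourhood with more than two elements, and verifying that the dual construction of \cite{Uniform} indeed certifies non-crossing, is the technical heart of the argument; the remaining ingredients, namely the edge large deviations and the FKG combination, are routine given the estimates of Section \ref{sect:percolationestimates}.
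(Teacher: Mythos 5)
Your proposal is correct and follows essentially the same route as the paper: a dynamic-block renormalization (adapted from \cite{DurretOr} to the asymmetric, multi-element neighbourhood) that makes coarse-grained sites open with probability arbitrarily close to $1$ when $p>p_c$, followed by a Peierls/contour estimate on the dual lattice of \cite{Uniform}, and you correctly diagnose that a naive single-site contour count fails near $p_c$ so that the renormalization is indispensable. The only cosmetic difference is your preliminary FKG splitting into an edge large-deviation event and a connectivity event, which the paper does not need as a separate step: the edge-speed estimates of Proposition \ref{prop:convergencesubadditive} and the deviation bound (\ref{eq:exponentialspeed2}) are instead absorbed into the verification that a renormalized block is good with probability exceeding $1-\epsilon$ once the scale $L$ is chosen large enough.
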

As before, the event $T_{\frac{ \alpha + \beta}{2}}^t \mathcal{H}_n$ denotes the occurrence
of $\mathcal{H}_n$ in the graph $T_{\frac{ \alpha + \beta}{2}}^t \mathcal{G}_{\mathcal{U}}$.
We recall that $\alpha$ and $\beta$ are defined in Section \ref{sect:percolationestimates}.
We first use Proposition \ref{prop:inequality2} to conclude the proof of the theorem and later we prove the Proposition \ref{prop:inequality2}.
Define then the new event $\mathcal{F}_{n,t}$,
which is represented in Figure \ref{Fig6-7:events}.
$\mathcal{F}_{ n, t }$ occurs \textit{iff} (a) and (b) hold:
\begin{enumerate}
\baselineskip=17pt
\item[(a)] 
for every odd $j \in [[0, \frac{t}{ 2 n }  ]]$
there is a vertex 
$(-n  , y)$, with $y \in [[    2 n j ,    2 n (j+1) ]]$,
connected to $(  n , y^{\prime})$
by an open path in 
$\mathcal{G}_{\mathcal{U}}$,
with $y^{\prime}  \in [[   2 n (j-2) ,  2 n (j-1)   ]]$,
\item[(b)] for any even $j \in [[0, \frac{t}{ 2 n} ]]$
there is a vertex $(n, y)$,
with $y \in [[  2 n j , 2 n (j+1) ]]$,
connected by an open path in 
$\mathcal{G}_{\mathcal{U}}$
to $( - n, y^{\prime})$,
with $y^{\prime} \in [[  2 n  (j-2) , 2 n (j-1) ]]$.
\end{enumerate}
Note first that,
\begin{equation}
\label{eq:ineq22}
\mathbb{P}_p (T_{\frac{ \alpha + \beta}{2}}^t \, \mathcal{F}_{n, t}) \leq \mathbb{P}_p ( T^t_{\frac{ \alpha + \beta}{2}} \, \mathcal{D}_{n, t, 0 }),
\end{equation}
because if  $\mathcal{F}_{n, t }$ occurs, then the top of the box $2n \times t$
is connected to the bottom by a path
that never goes out from the box (compare figures \ref{Fig6-7:events}-left and \ref{Fig6-7:events}-middle).
Secondly, we observe that the event $T_{\frac{\alpha + \beta}{2}}^t \mathcal{F}_{n, t}$
equals the intersection of $\lfloor \frac{t}{n} \rfloor$
events of type $T_{\frac{\alpha + \beta}{2}}^t\,  \mathcal{H}_{n} $, represented in Figure
\ref{Fig6-7:events}-right.
\begin{figure}
\begin{center}
\includegraphics[scale=0.40]{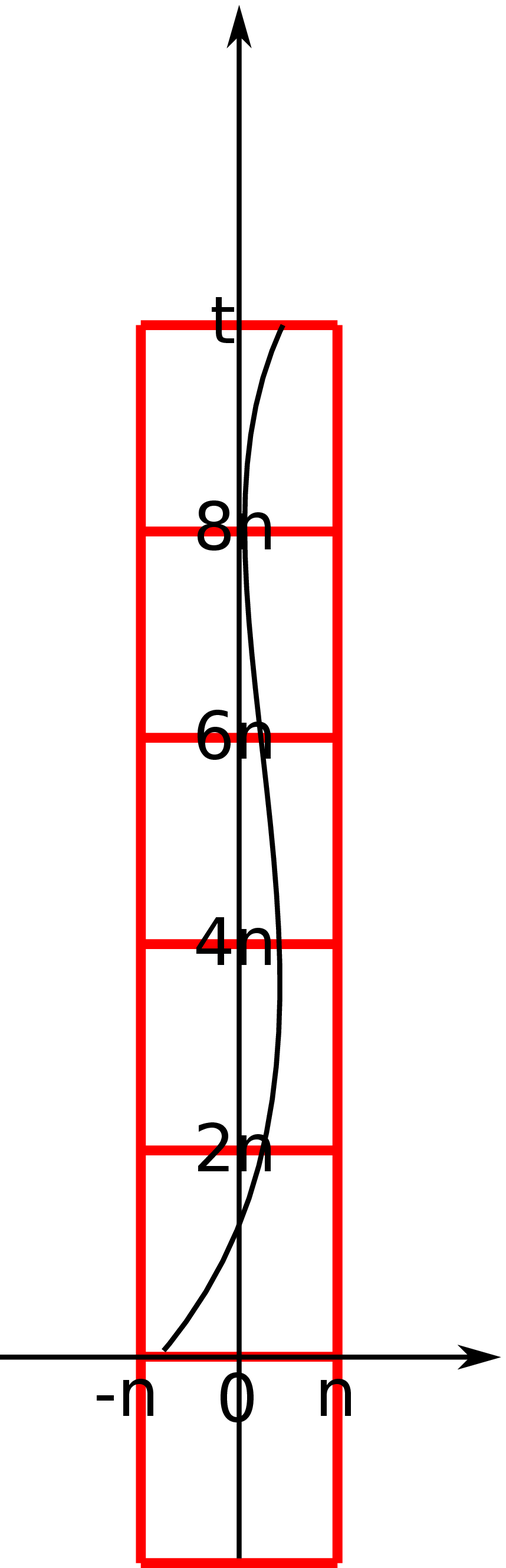}
\includegraphics[scale=0.40]{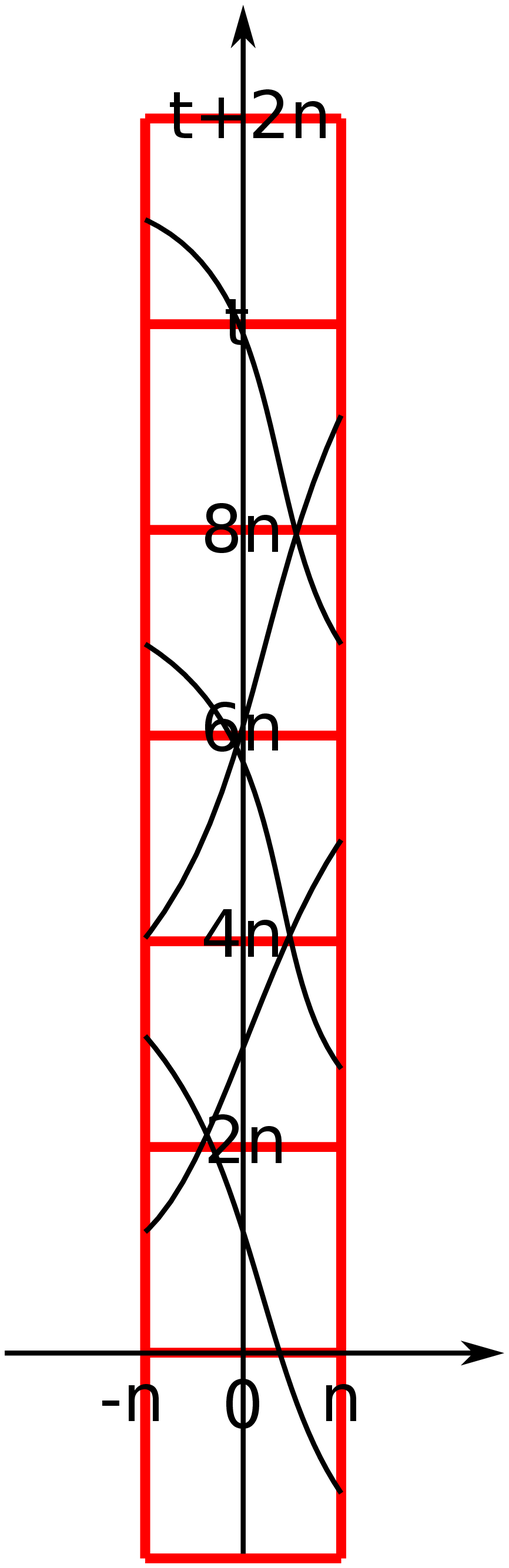}
\includegraphics[scale=0.40]{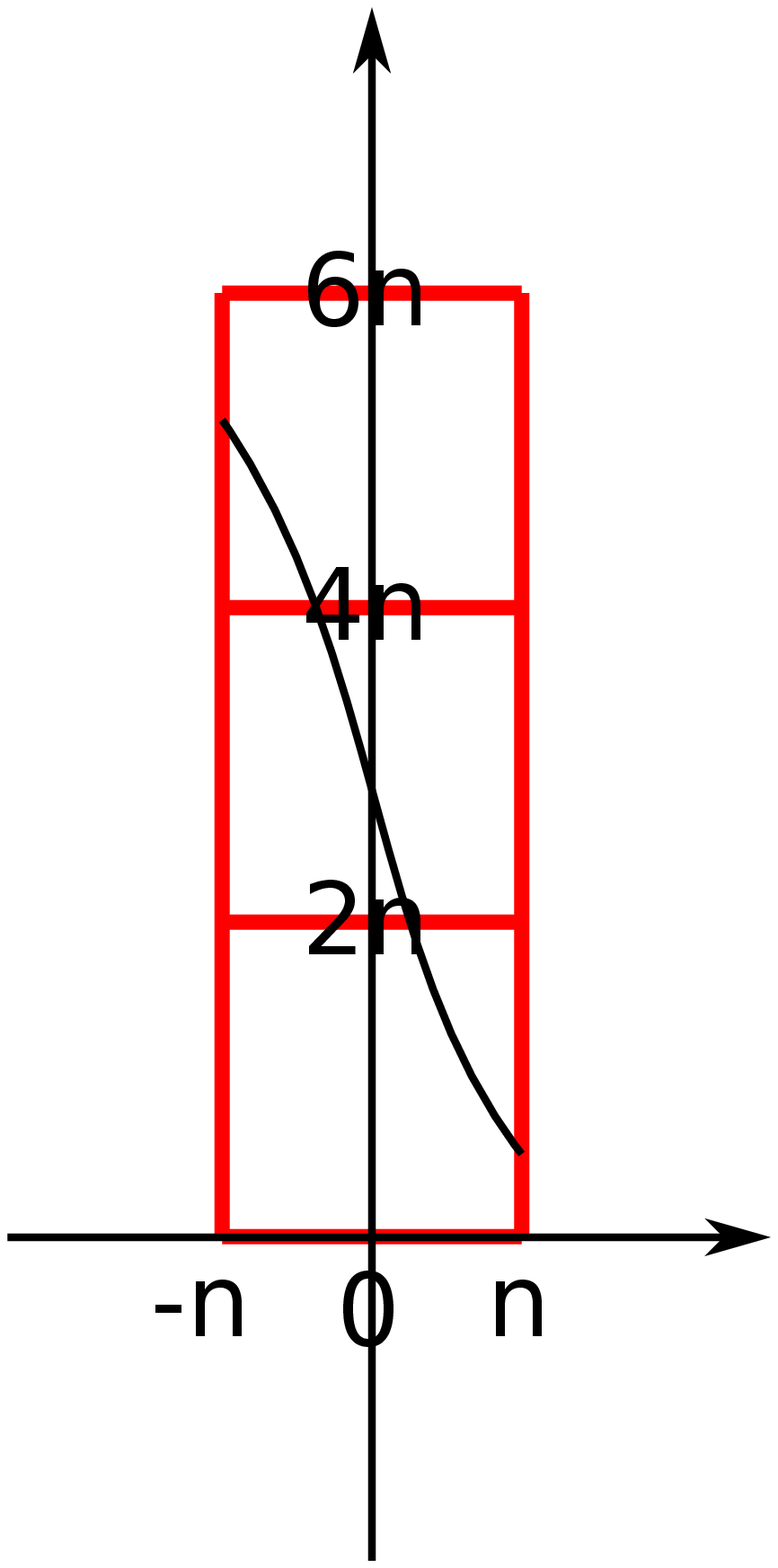}
\caption{\textit{Left}: representation of the event $\mathcal{D}_{n,t,0}$. \textit{Center}: representation of the event $\mathcal{F}_{n, t}$. \textit{Right}: representation of the event $\mathcal{H}_n$. In all figures the details on the structure of the graph have been omitted.}
\label{Fig6-7:events}
\end{center}
\end{figure}
As the event $\mathcal{H}_n$ increasing, 
the FKG inequality is applicable, i.e.
\begin{equation}
\label{eq:FKG}
  { \mathbb{P}_p ( T_{\frac{\alpha + \beta}{2}}^t \, \mathcal{H}_n )} ^{\lfloor \frac{t}{n} \rfloor}  \leq \mathbb{P}_p ( T_{\frac{\alpha + \beta}{2}}^t \, \mathcal{F}_{n, t} )
\end{equation}
Then using (\ref{eq:inequality2}) finally we get,
\begin{equation}
\label{eq:FKG2}
\mathbb{P}_p(  T_{\frac{\alpha + \beta}{2}}^t\,  \mathcal{F}_{n, t}  ) \geq \ {( 1 -A \cdot n \exp(  - n b  ) )}^{ \lfloor \frac{t}{n} \rfloor }
\end{equation}
Then, from (\ref{eq:time}) and for $n$ large enough,
\begin{equation}
\begin{split}
 \mathbb{E}^{(n)}_{{\delta_{\mathbf{1}}}}[\tau_n] 
 &  \geq \sum_{ t =1 }^{\infty } { \mathbb{P}_p ( T_{\frac{\alpha + \beta}{2}}^t \, \mathcal{D}_{n, t, 0}) } \\
  &  \geq \sum_{ t =1 }^{\infty } { \mathbb{P}_p( T_{\frac{\alpha + \beta}{2}}^t\,  \mathcal{F}_{n, t})   ) } \\
 &  \geq \sum_{ t =1 }^{\infty } { ( 1 - A \cdot \exp (- b n   ) )}^{\lfloor \frac{t}{ n } \rfloor} \\
 &  \geq j  (1 - \frac{A \cdot e^{-b n} j }{n} ),
\end{split} 
\end{equation}
where  $j$ is an arbitrary integer. In the previous expression we have used Proposition
\ref{prop:taunconnection}, (\ref{eq:ineq})
and (\ref{eq:transformation})
for the first inequality, (\ref{eq:ineq22}) for the second inequality
and (\ref{eq:FKG2}) for the third one. 
Choosing finally  $j = \lfloor \frac{ n e^{b n}}{2A} \rfloor $, the part (b) of the theorem follows.

\begin{proof}[\textbf{Proof of Proposition \ref{prop:inequality2}}]
We prepare the reader to the proof of the proposition and later we present the proof.
We consider two graphs,
$T_{ \frac{\alpha + \beta}{2} }^t \, \mathcal{G}_{\mathcal{U}} = ( V^{1}, \vec{E}_{\mathcal{U}}^{1})$ and
$T_{\frac{s_1 + s_u}{2}}^t \, \mathcal{G}_{\mathcal{U}} = ( V^{2}, \vec{E}_{\mathcal{U}}^{2 })$, recalling the definitions of $\alpha$ and $\beta$ in Section \ref{sect:percolationestimates}
and  the definition of the transformation $T_{\,\cdot\,}^t$ provided in (\ref{eq:transf}).
Observe that vertices of both graphs could take non integer positions.
The proof is divided in two parts. 

In the \textbf{first part}  we generalize the dynamic-block argument presented in \cite{DurretOr} to the percolation model considered in this article. The idea of the construction is the same of \cite{DurretOr}, although parameters of the construction have been adapted to the lack of symmetry. The lack of symmetry involves the structure of the graph $\mathcal{G}_{\mathcal{U}}$ and the slope of $\overline{r}_m$ and $\overline{\ell}_m$, as in general $\alpha \neq -\beta$. Two different spatial transformations have been used in order to recover the symmetric setting and simplify the construction, namely $T_{ \frac{s_1 + s_2}{2} }^t$ and $T_{ \frac{\alpha + \beta}{2} }^t$. 

The argument is based on a coupling between realisations of the graph
$T_{\frac{s_1 + s_u}{2}}^t \, \mathcal{G}_{\mathcal{U}}$
and those in $T_{ \frac{\alpha + \beta}{2} }^t \, \mathcal{G}_{\mathcal{U}}$.
The construction depends on a rescaling parameter $L$ and it is such that the realisation on $T_{\frac{s_1 + s_u}{2}}^t$ is a function of the realisation on $T_{ \frac{\alpha + \beta}{2} }^t \, \mathcal{G}_{\mathcal{U}}$. In  $T_{ \frac{\alpha + \beta}{2} }^t \, \mathcal{G}_{\mathcal{U}}$ every site is open with probability $p$ or closed with probability $1-p$ independently. On the contrary, the states of sites in $T_{\frac{s_1 + s_u}{2}}^t$ are not independent.
The construction is such that if the event $\mathcal{H}_n$ occurs in the former graph, then the event $\mathcal{H}_{Ln}$ occurs in the latter graph. Secondly, if $p>p_c$, then for every $\epsilon$, by choosing $L$ is large enough, every site in $T_{\frac{s_1 + s_u}{2}}^t$ is open with probability larger than $1-\epsilon$.

The \textbf{second part} we define a sub-graph 
of $T_{\frac{s_1 + s_u}{2}}^t \, \mathcal{G}_{\mathcal{U}}$,
that we call $\mathcal{L}$, for which it is easy to construct a dual graph. 
We use Peierls argument for the dual graph and we show that  $\mathbb{P}_p ( \mathcal{H}^{\mathcal{L}}_{n}) \geq 1 - A \,  \exp(  - b \, n)$. As far as we know, this estimation has not been provided in other works. The dual graph construction can be found in \cite{Uniform}.
This implies that 
$\mathbb{P}_p ( T_{ \frac{s_1 + s_u}{2} }^t \,  \mathcal{H}_{n}) \geq 1 - A \,  \exp(  - b \, n)$. Recalling the properties of the construction, it follows that
$\mathbb{P}_p ( T_{ \frac{\alpha + \beta}{2} }^t \,  \mathcal{H}_{Ln}) \geq 1 - A \,  \exp(  - b \, n)$. By rearranging the constants, the statement of the proposition follows.

We start now with the proof of the proposition.
\subparagraph{Part 1: Dynamic blocks construction}
We divide $T_{ \frac{\alpha + \beta}{2} }^t \, \mathcal{G}_{\mathcal{U}}$
into \textit{macro-regions} $R_{x,y}$ centred around the point
$C_{x, y}$, where $(x,y) \in V^{2}$ and 
\begin{equation}
\label{eq:coordinates}
\begin{split}
C_{x, y} &= ( x   \frac{\gamma}{s_u - s_1} (1-\delta) , y L    ),\\
R_{x, y} &= C_{x, y} + [ (-1 - \delta ) \frac{\gamma}{2} L, (1 + \delta) \frac{\gamma}{2} L] \times [0, -  (1 + \delta) L].
\end{split}
\end{equation}
We recall that from equation (\ref{eq:pcgamma0}) $\gamma = \alpha - \beta >0$ 
for all $p>p_c$.
 The constants $\delta$ and $L$ are positive and have to be properly chosen.
In order the argument to work rigorously,
$(1 - \delta) \gamma L$ and $L$ should be even integers.
To not complicate the exposition here we ignore these details, the same as in \cite{DurretOr}.
Each vertex $(x,y) \in V^{2}$
is associated to a random variable  $\varphi_{x,y}$ which takes value 
$1$ if a certain event $\mathcal{B}_{x,y}$ occurs in the region $R_{x,y}$ of $(V^{1}, \vec{E}_{\mathcal{U}}^{1})$ or $0$ otherwise.
In order to define such event we introduce the following points in space (see also Figure \ref{Fig8:SketchRen}),
for every $s \in \mathcal{U}$,
\begin{equation}
\label{eq:coordinates}
\begin{split}
u  &= (\frac{\delta \gamma L}{2}\, \, ,\, \,  0 ) , \\
v  &= (\frac{ 3 \delta \gamma L }{4}\, \, , \, \, 0 )  , \\
-u  &= (-\frac{\delta \gamma L}{2}\, \, ,\, \,  0 ) , \\
-v  &= (-\frac{ 3 \delta \gamma L }{4}\, \, , \, \, 0 )  , \\
u_s^R &=    ( \frac{\delta \gamma L}{2} +  ( s - \frac{s_1 + s_u}{2} ) \cdot \frac{  (1 - \delta) \gamma L}{s_u - s_1}\, \, , \, \, -L (1 + \delta) ), \\
v_s^R &=     ( \frac{ 3 \delta \gamma L }{4} +  ( s - \frac{s_1 + s_u}{2} ) \cdot \frac{  (1 - \delta) \gamma L}{s_u - s_1}\, \,  , \, \, -L (1 + \delta)),  \\
u_s^L &=   ( - \frac{\delta \gamma L}{2}  + ( s - \frac{s_1 + s_u}{2} ) \cdot \frac{  (1 - \delta) \gamma L}{s_u - s_1}\, \, , \, \, -L (1 + \delta) ) , \\
v_s^L &= ( - \frac{ 3 \delta \gamma L }{4} + (s - \frac{s_1 + s_u}{2} ) \cdot \frac{  (1 - \delta) \gamma L}{s_u - s_1}\, \, , \, \,  -L (1 + \delta) ) , \\  
\end{split}
\end{equation}
and for every $s \in \mathcal{U} \setminus \{ s_1, s_u \}$,
\begin{equation}
\label{eq:coordinates}
\begin{split}
u_s^U &= ( - \frac{\delta \gamma L}{2}  + ( s - \frac{s_1 + s_u}{2} ) \cdot \frac{  (1 - \delta) \gamma L}{s_u - s_1} + \frac{\gamma}{2} (1 + \delta ) L\, \, , \, \, 0), \\
v_s^U &= ( - \frac{ 3 \delta \gamma L }{4} + (s - \frac{s_1 + s_u}{2} ) \cdot \frac{  (1 - \delta) \gamma L}{s_u - s_1} + \frac{\gamma}{2} (1 + \delta ) L\, \, , \, \,  0) ,
\end{split}
\end{equation}
\begin{figure}
\begin{center}
\includegraphics[scale=0.43]{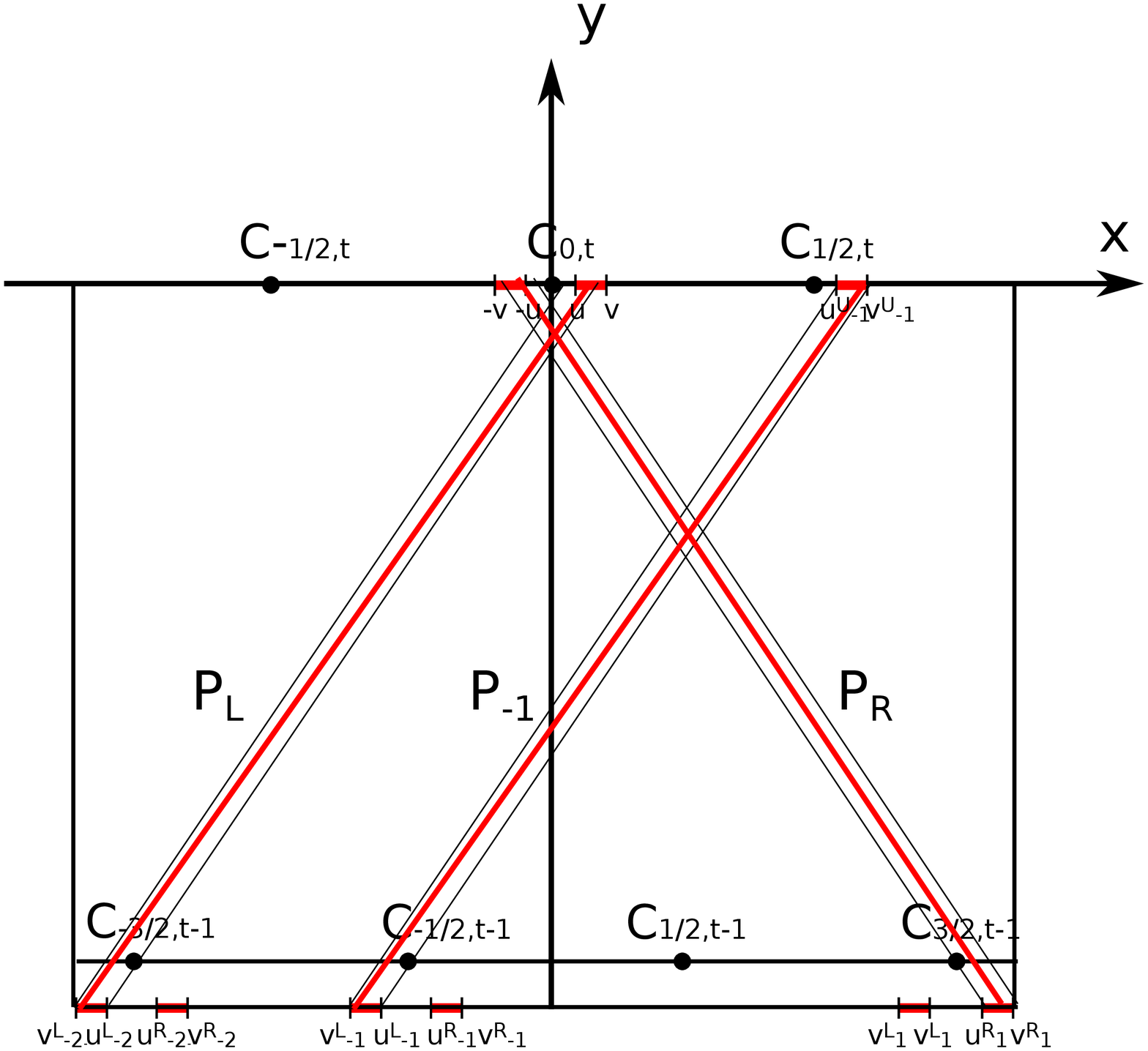}
\includegraphics[scale=0.43]{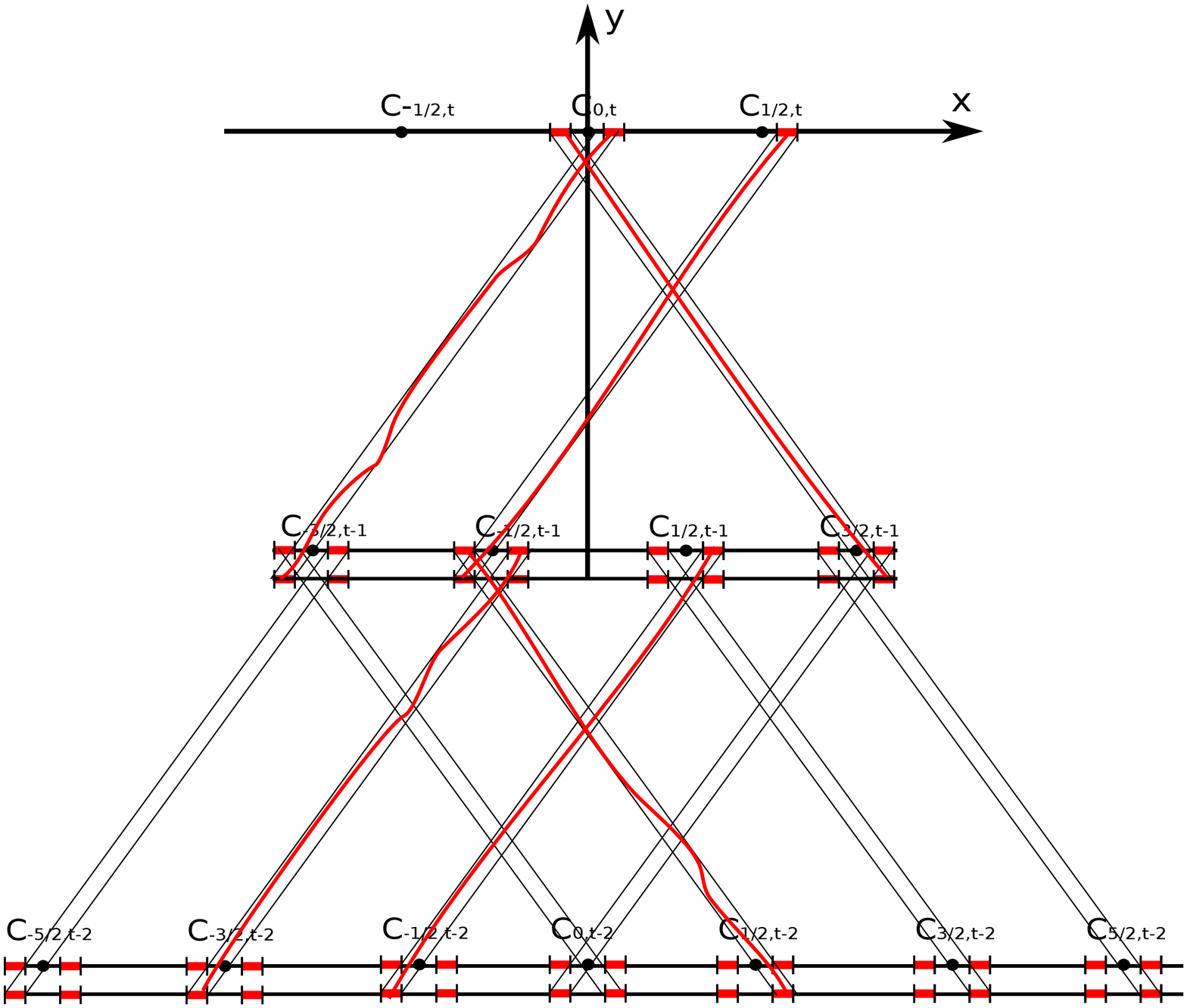}
\caption{\textit{Up}: The rectangle in the figure represents the region $R_{0,t}$ of the graph $T^t_{\frac{s_u + s_1}{2}} \, \mathcal{G}_{\mathcal{U}}$, for some positive integer $t$, in case $\mathcal{U} = \{-2, -1, 1\}$. }
\label{Fig8:SketchRen}
\end{center}
\end{figure}
As one can see in the example in Figure \ref{Fig8:SketchRen},
these points identify some \textit{target zones}
on the right and on the left of points $C_{x,y}$, $(x, y) \in V^2$.
Consider now the parallelograms obtained connecting the following quadruplets of points
 (see also Figure \ref{Fig8:SketchRen}),
\begin{equation}
\begin{split}
\label{eq:quad1}
& P_R =  (-v, \, -u, \,  u_{s_u}^R,\,  v_{s_u}^R),  \\
& P_L = (u, \,  v,\,  u_{s_1}^L, \, v_{s_1}^L) \\
& P_s = (u_s^L, v_s^L, u_s^U, v_s^U ),
\end{split}
\end{equation}
for all $s \in \mathcal{U} \setminus \{ s_1, s_u \}$.
Define the translated parallelograms  $P_R(x,y) = P_R + C_{x,y}$,
$P_L(x,y) = P_R + C_{x,y}$, $P_s(x,y) = P_s + C_{x,y}$
for all $s \in \mathcal{U} \setminus \{ s_1, s_u \}$.
\begin{mydef}
The event $\mathcal{B}_{x,y}$ occurs if and only if the top of all parallelograms
$P_R(x,y)$,  $P_L(x,y)$ and $P_s(x,y)$, for all $s \in \mathcal{U} \setminus \{ s_1, s_u \}$, is connected to the the bottom side by an open path in $T^t_{\frac{\alpha + \beta}{2}} \, \mathcal{G}_{\mathcal{U}}$
that remains always inside the parallelogram.
\end{mydef}
This event is represented in Figure \ref{Fig8:SketchRen}.
This construction is such that the following properties are satisfied. Namely,
\begin{enumerate}
\item the random variables $\varphi_{x,y} $ are $s_u - s_1$-dependent. 
With this we mean that 
$\varphi_{x,y}$ and $\varphi_{x^{\prime},y^{\prime}}$, with
$(x,y)$, $(x^{\prime}, y^{\prime}) \in V^2$,
are independent if $| x - x^{\prime} | > s_u - s_1$ or $| y-y^{\prime}| >1$.
\item Denote by $z_1 \ldots z_m$ the vertices of a path in
$T^t_{\frac{s_1+s_u}{2}} \, \mathcal{G}_{\mathcal{U}}$
and assume that the path is open, i.e. 
$\varphi_{z_i}=1$ for all $i \in \{1, 2, \ldots m \}$.
Then there exists an open path in $T^t_{\frac{\alpha + \beta}{2}} \, \mathcal{G}_{\mathcal{U}}$
that connects a vertex in the $C_{z_1} + [-v, v]$
to a vertex in $C_{z_m} + [-v, v]$
and which remains always inside the parallelograms
that connect  $C_{z_i} + [-v, v]$ to $C_{z_{i+1}} + [-v, v]$,
for all $i \in \{1, 2, \ldots m \}$
(note that  $C_{z} + [-v, v]$
denotes  the segment $[-v, v]$ translated by $C_{z}$).
\item if $\delta, \epsilon >0$ and $p>p_c$,  we can pick $L$
large enough so that for any $(x,y) \in V^2$, $\mathbb{P}_p (\varphi_{x,y} =1 )  > 1 - \epsilon$.
\end{enumerate}
\begin{proof}[Proof of the properties]
We sketch the proof of the three properties above. The proof can be found also in \cite[Section 9]{DurretOr} in the case of bond percolation and symmetric neighbourhoods.
Property 1 follows from the fact that if $R_{x,y}$ and $R_{x^\prime, y^\prime}$ have empty intersection, then the variables $\varphi_{x,y}$ and $\varphi_{x^{\prime}, y^{\prime}}$ are independent.
Property 2 follows by construction (see Figure \ref{Fig8:SketchRen}).
In the example in the figure we represent the graph $T^t_{\frac{s_u + s_1}{2}} \mathcal{G}_{\mathcal{U}}$ assuming $\mathcal{U} = \{-2, -1, 1\}$
as a neighbourhood. One should observe that if the events
 $\mathcal{B}_{0,t}$ and $\mathcal{B}_{-1,t-1}$ occur, then 
at least one vertex belonging to the interval $C_{0,t} + [-v, v]$
is connected to at least one of the vertices belonging to any of the intervals
$C_{-\frac{3}{2},t-2} + [-v, v]$,
$C_{-\frac{1}{2},t-2} + [-v, v]$,
$C_{\frac{1}{2},t-2} + [-v, v]$.

We now prove the third property.  Recall that Proposition \ref{prop:convergencesubadditive} implies that in the transformed graph $T^t_{\frac{\alpha + \beta}{2}} \, \mathcal{G}_{\mathcal{U}}$,
$\overline{r}_n /n \overset{n \rightarrow \infty}{\longrightarrow} \gamma/2$ a.s. and 
$\overline{\ell}_n /n \overset{n \rightarrow \infty}{\longrightarrow}  -\gamma/2$ a. s. 
We will prove that $\forall \epsilon > 0$, the probability that in all the parallelograms in the box there is a connection from the top to the bottom that never crosses the diagonal sides is larger than $1 - \epsilon$.
Let then $e$ be the number of parallelograms in the box $R_{0,0}$. This number depends on the neighbourhood $\mathcal{U}$. We consider the parallelogram $P_R$ and we prove that for every $\epsilon$ there exists $L$ large enough such that the probability that there is no such open path in the parallelogram is less than $\frac{\epsilon}{e}$. As this probability is the same for all parallelograms, this implies that the probability that such open path is present in all parallelograms is $> 1 - \epsilon$.

Consider the parallelogram $P_R$ defined above and recall then the definitions provided in equations (\ref{eq:notation1} - \ref{eq:notation4}).  
Let then $\tilde{r}_n := \sup \xi_n^{(-\infty,  -0.7 \delta \gamma L ]}$
and observe that  $-0.7 \delta \gamma L  \in [-v, -u]$.
Let $\overline{r}_n := \sup \xi_n^{(-\infty,  0 ]}$ and observe that
$ \{ \tilde{r}_n + 0.7 \delta \gamma L:  n \geq 0 \} =_d \{
\overline{r}_n: n \geq 0 \}$.
As $\overline{r}_n / n \rightarrow \frac{\gamma}{2}$ a.s. in the transformed graph 
$T^t_{\frac{\alpha+ \beta}{2}}$, then we can pick $L$ large enough such that with probability $\geq 1 - \frac{\epsilon}{2 e}$ we have that,
\begin{equation}
\begin{split}
\label{eq:proofprop1}
\tilde{r}_{(1+\delta) L} & > - 0.7 \delta \gamma L + ( 1 + 0.98 \delta) \frac{\gamma}{2} L \\ & = -0.71 \delta \gamma L + (1 + \delta) \frac{\gamma}{2} L,
\end{split}
\end{equation}
and for all $m \leq (1 + \delta) L$,
\begin{equation}
\tilde{r}_{m} \leq -0.6 \delta \gamma L + m \frac{1 + 1.08 \delta}{1 + \delta} \frac{\gamma}{2}.
\end{equation}
The two previous equations imply that there is an open path path
from $(- \infty, -0.7 \delta \gamma L] \times \{0\}$ to
$[  -0.71 \delta L \gamma + (1 + \delta) L \frac{\gamma}{2}, 
-0.56 \ \delta L  \gamma + (1 + \delta) L \frac{\gamma}{2}]
\times \{- (1 + \delta) L \}$ which does not cross the line 
$[-u, v_{s_u}^R]$. 
It remains to show that this path does not cross the line $[-v, u_{s_u}^R]$.

We observe that in order a path to travel from the line  $[-v, u^R_{s_u}] $
to $[-0.7 \delta L \gamma + \frac{\gamma}{2} (1 +\delta) L , \infty  ) \times \{-(1 + \delta) L \}$ a path must have an average slope $a  > \frac{\gamma}{2}$.
Thus recall equation (\ref{eq:exponentialspeed2}) and observe that in the transformed graph $T^t_{\frac{\alpha+ \beta}{2}}\mathcal{G}_{\mathcal{U}}$,
$$
\mathbb{P}_p ( \overline{r}_m > a m) \leq C e^{-h_2 m}.
$$
Consider then $M$ large enough such that,
$$
\sum_{m=M}^{\infty} C \exp(-h_2 m) \leq \frac{\epsilon}{4e}.
$$
The probability that one of the points on $[-v, u^R_{s_u}]$ with $ -(1 + \delta) L + M \leq y \leq 0$ is connected to  $[-0.7 \delta L \gamma +  (1 +\delta) \frac{\gamma}{2}  L ,  \infty)  \times \{- (1+\delta) L \}$ is then $\leq  \frac{\epsilon}{4e}$.
Furthermore, observe that the number of points on $[-v, u^R_{s_u}]$ with $ -(1 + \delta) \leq y \leq -(1+\delta) L + M$ does not depend on $L$ and that the distance of any of them from the set  $[-0.7 \delta L \gamma +  (1 +\delta) \frac{\gamma}{2}  L ,  \infty)  \times \{- (1+\delta) L \}$ is proportional to $L$. Thus we can pick $L$ large enough so that the probability that there exists an open path connecting any of these points to $[-0.7 \delta L \gamma +  (1 +\delta) \frac{\gamma}{2}  L ,  \infty]$ is less then $\frac{\epsilon}{4e}$. Combining the two estimations, we conclude that the probability that the line $[-v, u^R_{s_u}]$ is connected by an open path
to $[-0.7 \delta L \gamma +  (1 +\delta) \frac{\gamma}{2}  L ,  \infty]$ is less than $\frac{\epsilon}{2e}$. 

Summarising, we showed that with probability $\geq 1 - \frac{\epsilon}{e}$,
there is an open path from $(- \infty, -0.7 \delta \gamma L] \times \{0\}$ to
$[  -0.71 \delta L \gamma + (1 + \delta) L \frac{\gamma}{2}, 
-0.56 \ \delta L  \gamma + (1 + \delta) L \frac{\gamma}{2}]
\times \{- (1 + \delta) L \}$ which does not cross the line 
$[-u, v_{s_u}^R]$ and the line $[-v, u^R_{s_u}]$ is not connected by an open path to
$[-0.7 \delta L \gamma +  (1 +\delta) \frac{\gamma}{2}  L ,  \infty] \times \{-(1 + \delta) L \} $.
This implies that the probability that there exists a path joining the top to the bottom of $P_R$ without ever crossing its diagonal lines is $\geq 1 - \frac{\epsilon}{e}$.
Repeating the argument for all parallelograms in the box, we conclude that if $L$ is large enough then with probability at least $1 - \epsilon$ the event $\mathcal{B}_{0,0}$
occurs.
\end{proof}

\subparagraph{Part 2: Peierls argument}
Now we use the Peierls argument
for the $(s_u-s_1)$-dependent oriented percolation model
to prove that there exists $p_1 > p_c$ and positive constants $A^{\prime}$, $b^{\prime}$ (dependent on $p$) such that
for all $p \in (p_1, 1]$,
\begin{equation}
\label{eq:inequality3}
\mathbb{P}_p ( T^t_{\frac{s_1 + s_u}{2}} \, \mathcal{H}_n ) \geq 1 - A^{\prime} \cdot n e^{ -b^{\prime} n}.
\end{equation}
Let us explain first why this is sufficient to prove the proposition. Later we prove (\ref{eq:inequality3}).

Recall the third property of the dynamic-block construction presented above and observe that if $p> p_c$, then we can pick $L$ large enough such that, for every
$(x,y)$ belonging to the set of vertices of $T^t_{\frac{s_1 + s_2}{2}}\mathcal{G}_{\mathcal{U}}$, $\mathbb{P}_p( \mathcal{B}_{x,y} ) > p_1$.
Recall that the state of sites belonging to $T^t_{\frac{s_1 + s_2}{2}}\mathcal{G}_{\mathcal{U}}$ is a function of the realization in the graph $T^t_{\frac{\alpha + \beta}{2}} \mathcal{G}_{\mathcal{U}}$.
From the second property of the dynamic-block construction, if such sites are open with probability $> p_1$, then 
(\ref{eq:inequality3}) implies that with probability not less than $1 - A^{\prime} \cdot n e^{ -b^{\prime} n}$ the event  $\mathcal{H}_n$ occurs in $T^t_{\frac{s_1 + s_2}{2}}$.
Hence, from the second property of the dynamic-block construction, the event $\mathcal{H}_{L n}$ occurs in the graph 
$T^t_{\frac{\alpha + \beta}{2}} \, \mathcal{G}_{\mathcal{U}}$
with probability not less than $1 - A^{\prime} \cdot n e^{ -b^{\prime} n}$. 
One can rearrange the value of $b^{\prime}$ getting rid of the factor $n$, for $n$ large enough. Finally, by defining new constants $A = A^{\prime} /  L$ and $b = b^{\prime} L$,
the statement of Proposition \ref{prop:inequality2} follows.

We start proving (\ref{eq:inequality3}). We define a new graph $\mathcal{L}$,
that is a sub-graph of $T_{\frac{s_1 + s_u}{2}}\,  \mathcal{G}_{\mathcal{U}}$, whose 
vertices $(x,y)$ are,
\begin{equation}
\label{eq:Vprime}
V^{\prime} = \{ (x, y)  : \, \, x  = (s_u - s_1) z -  (y-t) \frac{s_u - s_1}{2}, \, \, z \in \mathbb{Z}, y \in \mathbb{Z}\},
\end{equation}
and whose edges connect vertices $(x,y)$ to $(x \pm \frac{s_u - s_1}{2}, y-1)$.
The reason shy we introduce $\mathcal{L}$ is that, as every site has only two neighbours, it is easier to construct its dual graph. The new graph $\mathcal{L}$ is represented in the example in Figure \ref{Fig:transformatonpeierls} on the right.
\begin{figure}
\begin{center}
\includegraphics[scale=0.35]{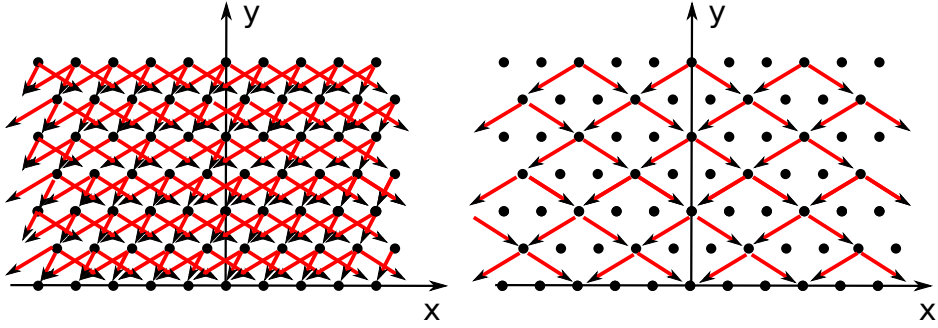}
\caption{\textit{Left}: representation of $T^t_{ \frac{s_1 + s_u}{2}} \mathcal{G}_{\mathcal{U}}$,
in case of $\mathcal{U}= \{ -2, - 1, 1 \}$.  \textit{Right}: points correspond to vertices
of $T^t_{ \frac{s_1 + s_u}{2}} \mathcal{G}_{\mathcal{U}}$, arrows represent 
edges of $\mathcal{L}$, points connected by an arrow correspond to vertices of $\mathcal{L}$.
The graph $\mathcal{L}$, defined in the text, is a subset of $T^t_{ \frac{s_1 + s_u}{2}} \mathcal{G}_{\mathcal{U}}$.}
\label{Fig:transformatonpeierls}
\end{center}
\end{figure}
As $\mathcal{L}$ is a sub-graph of $T_{\frac{s_1 + s_u}{2}}\,  \mathcal{G}_{\mathcal{U}}$,
the following inequality holds,
\begin{equation}  
\label{eq:ineq14}
\mathbb{P}_p ( \mathcal{H}^{\mathcal{L}}_{n}  ) \leq \mathbb{P}_p ( T^t_{\frac{s_u + s_u}{2}} \, \mathcal{H}_{n}).
\end{equation}
In the previous expression, the superscript $^{\mathcal{L}}$ is used to denote that event 
$\mathcal{H}_{n}$, defined in (\ref{eq:evH}),
occurs on the graph $\mathcal{L}$. 
Call then $\mathcal{L}_{D}$
the dual graph of $\mathcal{L}$.
The graph is represented on the right of Figure \ref{Fig9-10:EventsHn} and its costruction is due to
\cite{ToomDiscr, Uniform}.
\begin{figure}
\begin{center}
\includegraphics[scale=0.27]{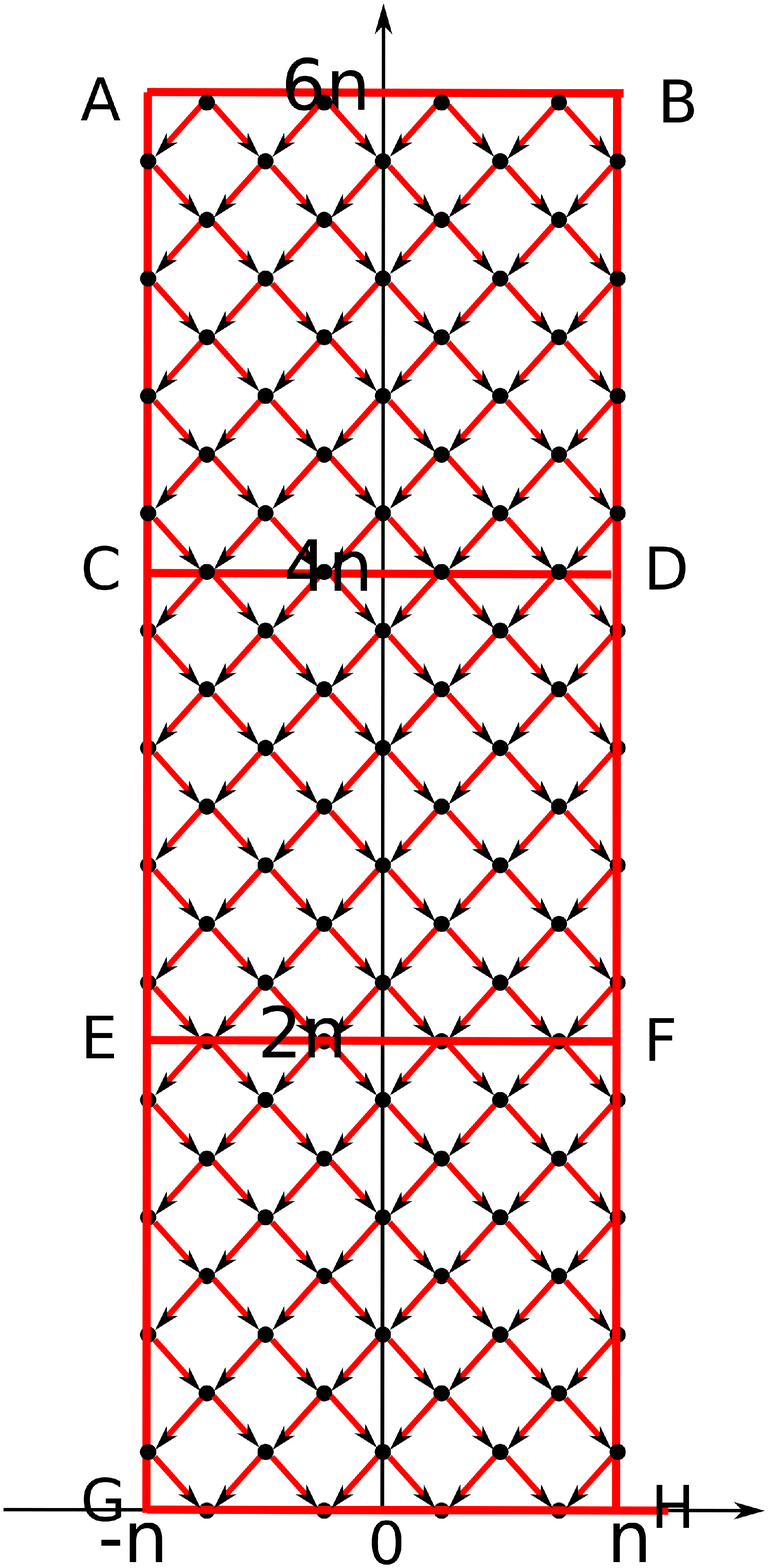}
\includegraphics[scale=0.27]{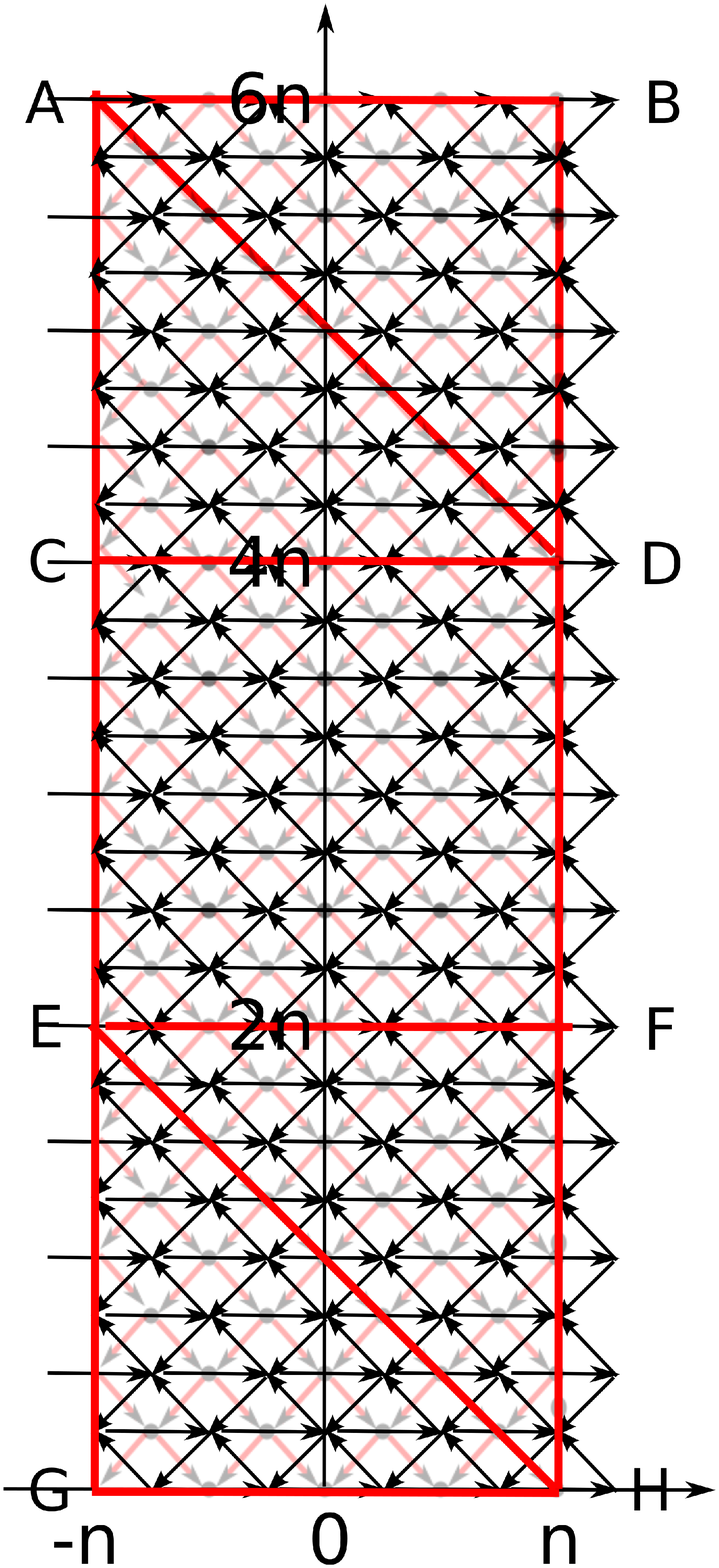}
\caption{Note: the horizontal axis has been rescaled by $\frac{s_u - s_1}{2}$ in both graphs.
$A= (-n,6n), B = (n, 6n), C= (-n, 4n), D= (n, 4n), E= (-n, 2n), F= (n, 2n), G= (-n, 0), H= (n,0)$.
\textit{Left}: representation of the 
graph $\mathcal{L}$. The event $\mathcal{H}_n$ occurs iff the side 
$AB$ is connected to the side $GH$ by an open path in $\mathcal{L}$
that does not cross the sides $AG$ and $BH$.
\textit{Right}: representation of the graph $\mathcal{L}^D$, as defined in the text.
The event $\mathcal{H}_n$ does not occur \textit{iff} one of the sides
$CE$ or $EH$ is connected to one of the sides $AD$ or $DF$ by an open path in the dual lattice.}
\label{Fig9-10:EventsHn}
\end{center}
\end{figure}
The dual graph is composed of three types of edges, namely
edges pointing down-left, those pointing up-left and those pointing right.
Every edge pointing right is positioned over a vertex of the original graph $\mathcal{L}$.
Edges down-left and up-left are always \textit{open},
edges pointing right are \textit{open} if and only if the corresponding vertex of the original graph is \textit{closed}.
A path in the dual graph is \textit{open} if and only if all its edges are open.
The following proposition connects the occurrence of the event $\mathcal{H}_n$ in $\mathcal{L}$
with the occurrence of a second event on the dual lattice.
\begin{prop}
\label{prop:percdual}
Consider Figure \ref{Fig9-10:EventsHn}. For every $n \in \mathbb{N}$, there exists an open path in $\mathcal{L}$
connecting $AC$ to $FH$
\textit{iff} there is no open  path in the dual lattice
connecting one of the sides $CE$ or $EH$ to one of the sides $AD$ or $DF$.
\end{prop}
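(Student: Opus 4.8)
The plan is to prove the two implications of the equivalence separately, exploiting the planarity of $\mathcal{L}$ together with the defining property of its dual $\mathcal{L}^D$: a rightward edge of $\mathcal{L}^D$ is open precisely when the primal vertex it sits over is closed, whereas the down-left and up-left edges are always open. After rescaling the horizontal axis by $\frac{s_u-s_1}{2}$ as in Figure \ref{Fig9-10:EventsHn}, the graph $\mathcal{L}$ becomes the standard oriented lattice in which every vertex $(x,y)$ is joined to $(x\pm 1,y-1)$, and the eight boundary points $A,\dots,H$ split the boundary of the box into four consecutive arcs which, read cyclically, alternate between the two arcs $AC$ and $FH$ relevant to the primal connection and the two arcs $CE\cup EH$ and $AD\cup DF$ relevant to the dual connection.

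First I would rule out the simultaneous existence of both connections. Suppose there is an open path $\pi$ in $\mathcal{L}$ from $AC$ to $FH$ and, for contradiction, an open dual path $\pi^{*}$ joining $CE\cup EH$ to $AD\cup DF$. Because the four arcs alternate around the boundary of the box, the endpoints of $\pi$ and of $\pi^{*}$ interleave, so by a Jordan-curve argument $\pi$ and $\pi^{*}$ must cross. A crossing forces a rightward dual edge of $\pi^{*}$ to lie over a vertex of $\pi$; but every vertex of $\pi$ is open, while such a dual edge is open only over a closed vertex, a contradiction. Hence the existence of the primal connection excludes the dual connection.

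For the converse I would argue by contraposition, constructing the blocking dual path when the primal connection fails. Let $W$ be the set of vertices of $\mathcal{L}$ inside the box reachable from the arc $AC$ by an open path, and assume $W\cap FH=\varnothing$. Following the dual construction of \cite{Uniform, ToomDiscr}, I would trace the interface in $\mathcal{L}^{D}$ separating $W$ from its complement, starting on the part of the boundary lying between $FH$ and $AC$ and turning at each step according to whether the next primal vertex belongs to $W$. Every rightward step of this interface crosses above a vertex that is not reachable from $AC$, and one checks such a vertex must be closed (otherwise $W$ would extend past the interface), so the interface is dual-open; the down-left and up-left steps are open by definition. Since $W$ meets $AC$ but avoids $FH$, the interface must terminate on the two complementary arcs $CE\cup EH$ and $AD\cup DF$, yielding the desired dual path.

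The main obstacle is precisely the bookkeeping in this last construction: I must make the interface-tracing rule fully explicit for the oriented lattice $\mathcal{L}$, verify that each of its rightward edges genuinely lies over a closed primal vertex (this is where the orientation of the edges and the definition of $W$ interact), and confirm that the interface begins and ends on the correct boundary arcs rather than closing up on itself. Once the three dual edge types are matched to the local structure of $\partial W$ as in \cite{Uniform}, the topological conclusion follows and the equivalence is established.
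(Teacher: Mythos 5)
Your proposal is correct in strategy and is in fact more carefully structured than the paper's own argument. The paper gives a purely ``graphical proof'': its two sentences (primal crossing excludes dual crossing; dual crossing excludes primal crossing) are contrapositives of one another, so strictly speaking the written proof only establishes that the two connections cannot coexist, and the completeness half of the equivalence --- that the \emph{absence} of an open path in $\mathcal{L}$ from $AC$ to $FH$ forces an open dual path from $CE\cup EH$ to $AD\cup DF$ --- is delegated entirely to Figure \ref{Fig9-10:EventsHn} and to the construction of \cite{Uniform, ToomDiscr}. You correctly isolate this as the nontrivial direction and sketch the standard contour argument: trace the interface of the set $W$ of vertices reachable from $AC$, observe that each rightward dual edge on the interface sits over a vertex not in $W$ (hence closed, since an open vertex adjacent to $W$ along an oriented edge would belong to $W$), and conclude that the interface is a dual-open path landing on the complementary boundary arcs. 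Your exclusion direction coincides with the paper's: interleaved endpoints force a crossing, and a crossing would put an open rightward dual edge over an open primal vertex, which is impossible. What your route buys is an actual proof skeleton for the ``only if'' implication that the paper leaves implicit; what remains, as you acknowledge, is the local case analysis showing the traced interface is well defined on the oriented lattice $\mathcal{L}$, does not close up on itself, and terminates on $CE\cup EH$ and $AD\cup DF$ rather than on $AC$ or $FH$ --- precisely the bookkeeping the paper also omits. Since this verification is routine for the two-neighbour graph $\mathcal{L}$ (which is exactly why the author passes to the sub-graph $\mathcal{L}$ before dualizing), I would accept your outline as a proof modulo that finite check.
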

\begin{proof}
We provide a graphical proof. Consider Figure \ref{Fig9-10:EventsHn}.
On the left we have represented the graph $\mathcal{L}$ and on the right we have represented its dual. Consider a realisation in the auxiliary space $\Omega$ 
and recall that if a site is open in $\mathcal{L}$, then the corresponding horizontal
edge is closed in the dual graph and vice versa.
The reader should observe that, as long as there is an open path connecting $AC$ to $FH$ in $\mathcal{L}$, no open path in the dual graph connecting one of the sides $CE$ or $EH$ to one of the sides $AD$ or $DF$ can exist. On the other hand, as long there
exists an open path in the dual graph connecting one of the sides $CE$ or $EH$
to one of the sides $AD$ or $DF$, no open path in $\mathcal{L}$ connecting $AC$ to $FH$ can exist.
\end{proof}
Both the proposition and the dual construction are analogous to the one presented in
\cite{Uniform}. We use this proposition to provide a lower bound for $\mathbb{P}_p (\mathcal{H}^{\mathcal{L}}_n)$.
Consider then a vertex $z$ on $CE$ or on $EH$.
Call $C_{z,h}$ the set of paths connecting the vertex $z$
to one of the sides $AD$ or $DF$ and having $h$ edges pointing to the right.
Call $N_{z,h}$ the total number of such paths.
Consider one of these paths and call $dl$ the number of its edges pointing down-left
and $ul$ the number of edges pointing up-left.
As the last edge of the path cannot be on the left of the first edge,
 $2h -  u l - d l \geq 0$.
This implies that for each of these paths sum $h + u l  + d l $
is bounded from above by $3h$.
As there are only $3$ different types of steps,
for any vertex $z$ located on $CE$ or on $EH$,
$N_{z,h} \leq 3^{3h}$.
Thus $N_{z,h} \leq 3^{3h}$ for every $z$.
Denote by $\overline{ \mathcal{H}_n^{\mathcal{L}}}$  the complementary of $\mathcal{H}_n^{\mathcal{L}}$.
Recall Proposition \ref{prop:percdual} and observe the fact that, in order $CG$ to be connected to $AD$ or to $DH$,
at least $\lfloor \frac{2n}{s_u  - s_1} \rfloor$ horizontal steps to the right are needed. Then,
$\mathbb{P}_p( \overline{ \mathcal{H}_n^{\mathcal{L}}} ) = 
\mathbb{P}_p( \bigcup\limits_{z \in {CE \cup EH}} \bigcup\limits_{h=2n}^{\infty} \bigcup\limits_{c \in C_{z,h}}   \{ c \mbox{ is open } \})$.
Observe also that, given a path $c \in C_{z,h}$, $\mathbb{P}_p(  c \mbox{ is open }) \leq (1-p) ^\frac{h}{2}$,
considering only the state of one every two edges to the right, as states of edges located over non-neighbour sites are independent.
By using the union bound, we determine an upper bound for $\mathbb{P}_p( \overline{ \mathcal{H}_n^{\mathcal{L}}} )$,
\begin{equation}
\label{eq:ineq4}
\mathbb{P}_p( \overline{ \mathcal{H}_n^{\mathcal{L}}}  ) 
\leq \sum_{z \in CE \cup EH}  \sum_{h=\lfloor \frac{2n}{s_u  - s_1} \rfloor}^{\infty}  N_{z,h} \,  (1-p)^{h/{2}}\leq A^{\prime} \cdot  n \exp ( - b^{\prime}  n  ),
\end{equation}
where the second inequality is true with $A^{\prime}, b^{\prime}$ positive constants
if  $p > 1 - \frac{1}{3}^6$. 
\end{proof}

\section{Appendix: Numerical Simulatios}
We consider Percolation PCA with space $\mathbb{S}_n$ and periodic boundaries.
We run the process $R$ times and we define 
$$P( p) : = N( R, T, n, p) / R,  $$
where $N(R,T,n, p)$ is the number of times the origin has state $1$ at time $T$.
In Figure \ref{Fig:simulations} we plot
the function $P(p)$ for different choices of the neighbourhood for a small range of $p$.
The parameters considered are $n=100000$, $T=100000$, $R=2000$
 and $n=500000$, $T=500000$ and $R=200$. 
\begin{figure}[!]
\begin{center}
\includegraphics[width=0.48\linewidth]{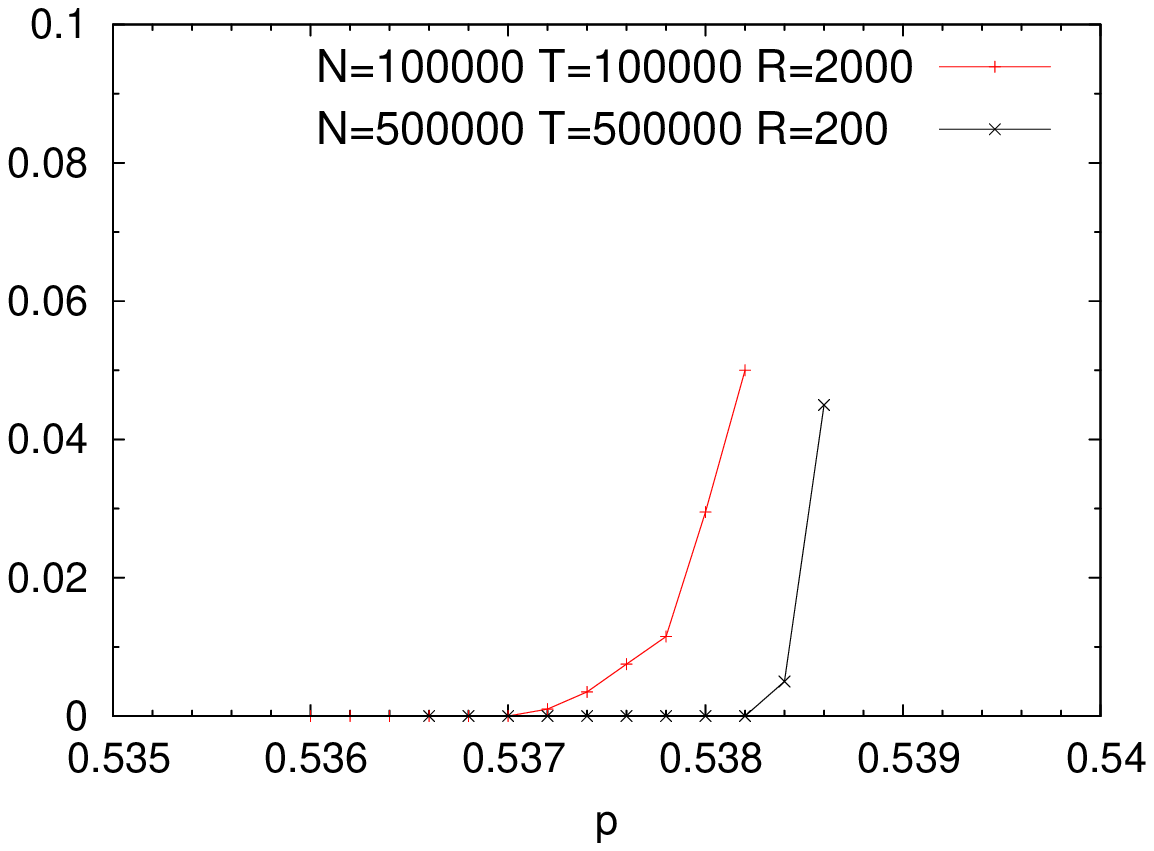}
\includegraphics[width=0.48\linewidth]{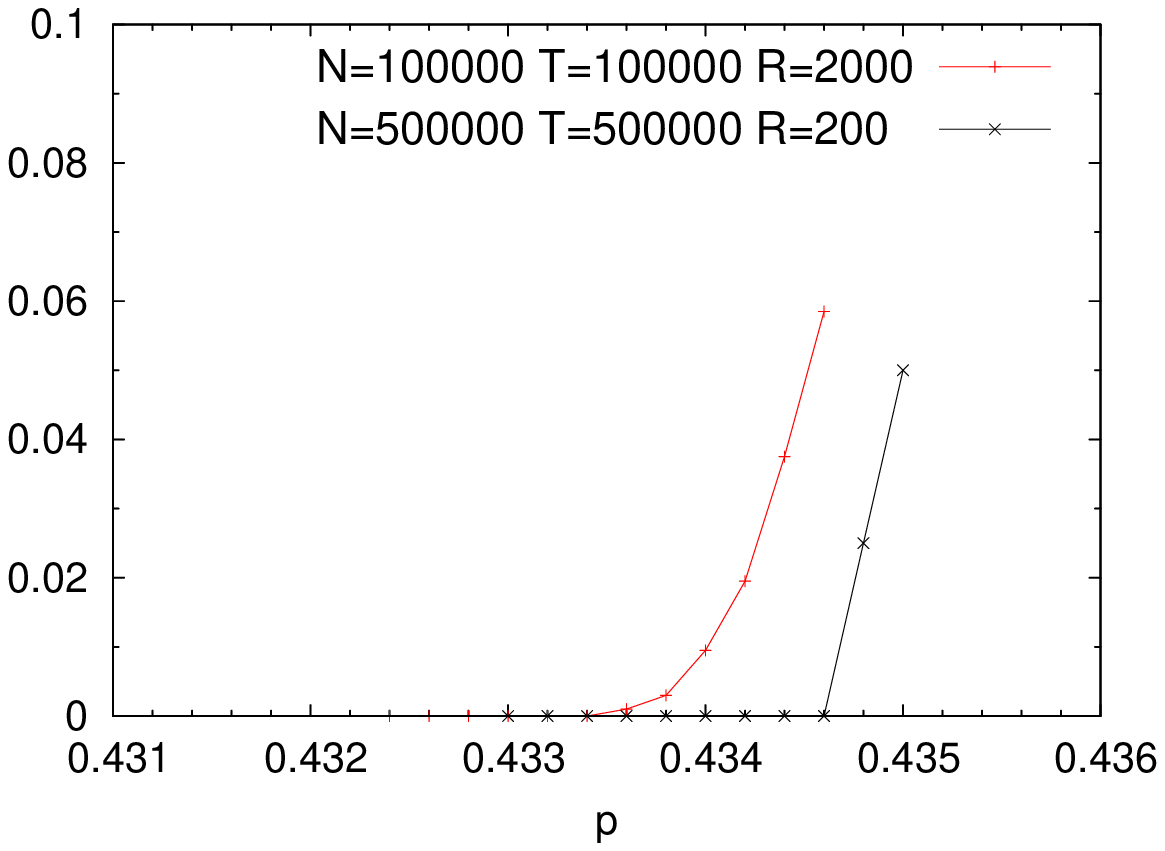}
\includegraphics[width=0.48\linewidth]{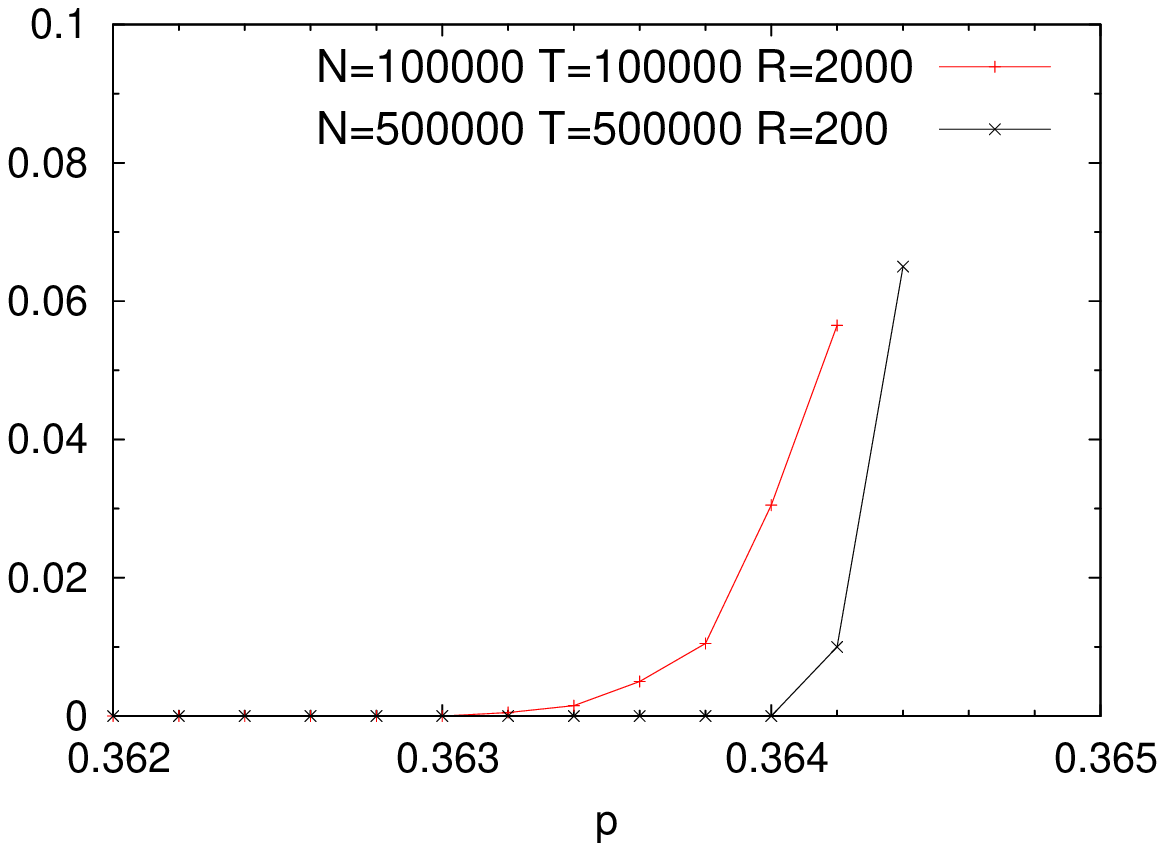}
\includegraphics[width=0.48\linewidth]{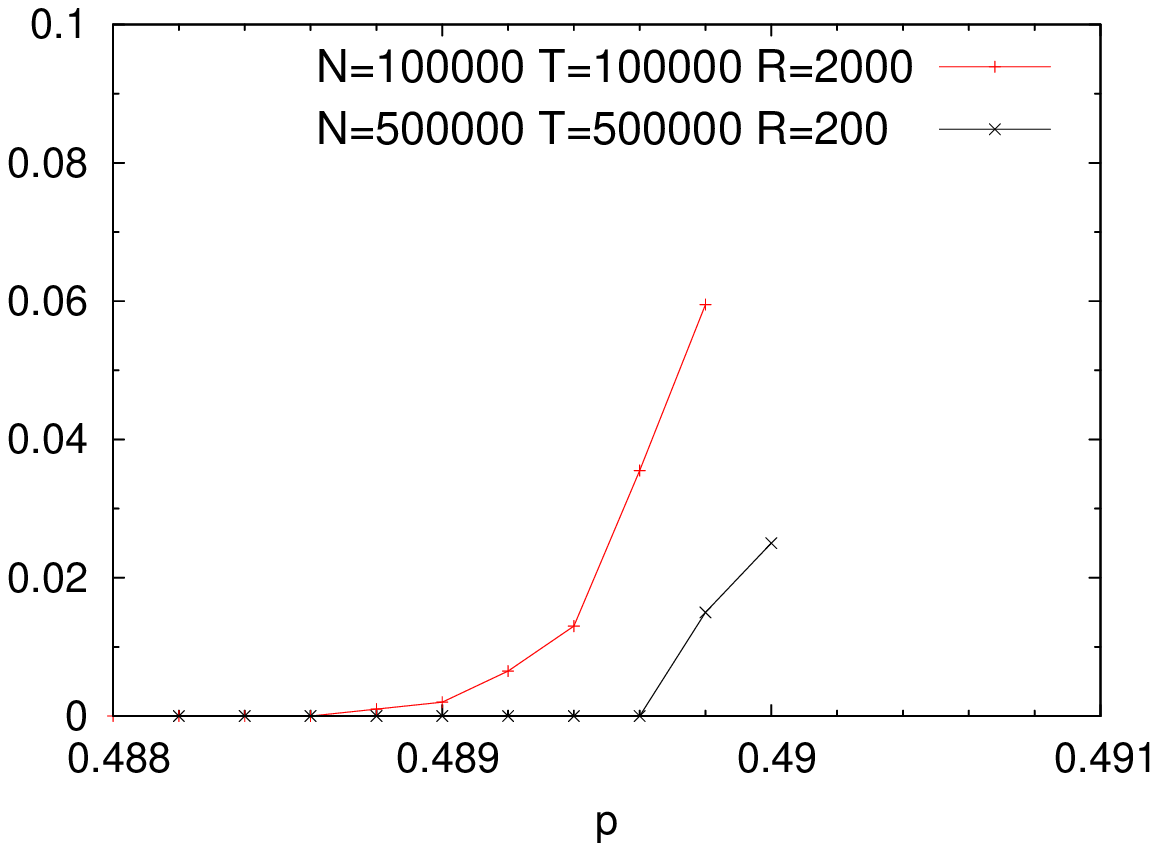}
\includegraphics[width=0.48\linewidth]{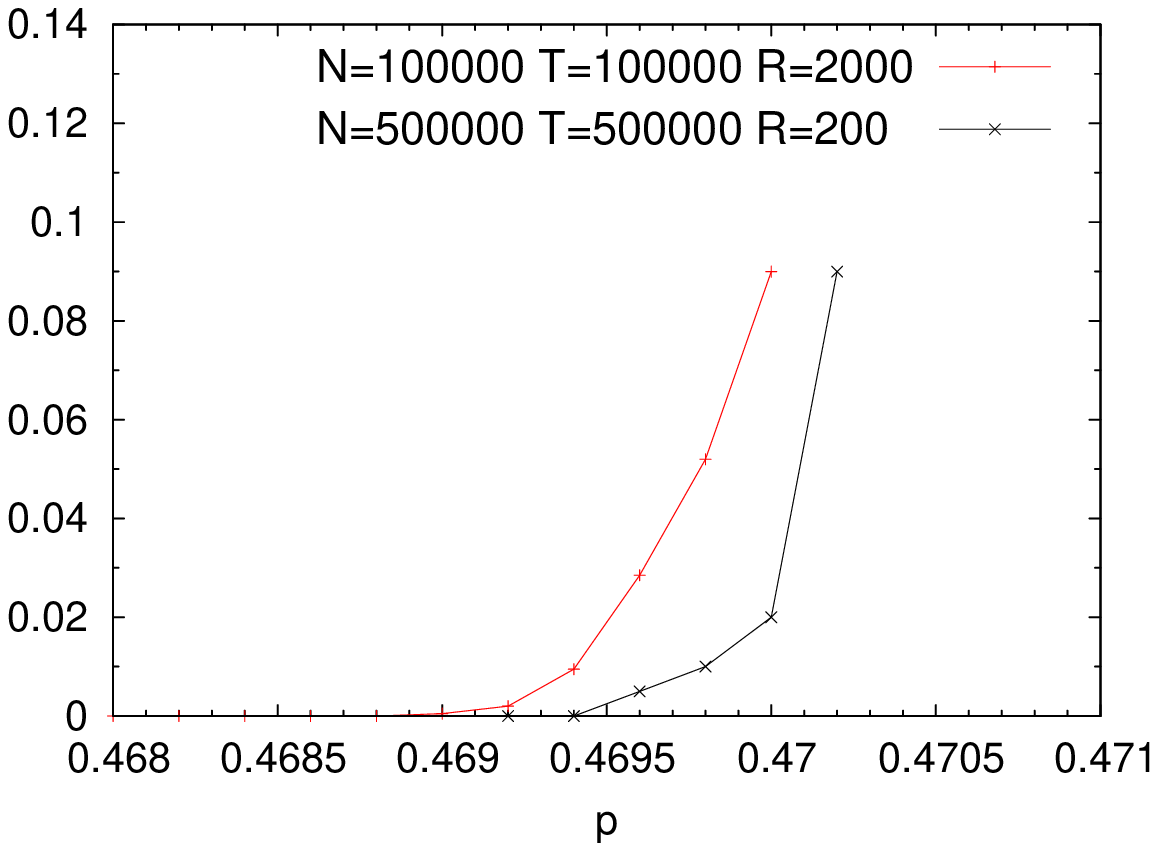}
\caption{Up Left: $\mathcal{U}=\{-1, 0, 1\}$. Up Right:
$\mathcal{U}=\{-1, 0, 1, 2\}$. Middle Left: $\mathcal{U}=\{-1, 0, 1, 2, 3\}$.
Middle Right: $\mathcal{U}=\{-1, 0,  2\}$. Down: $\mathcal{U}=\{-1, 0,  3\}$.}
\label{Fig:simulations}
\end{center}
\end{figure}

\section*{Acknowledgments}
The author is grateful to Artem Sapozhnikov
for teaching him many techniques used in the proof of Theorem 2.2.
The author thanks Andrei Toom,  Artem Sapozhnikov and
the anonymous referees whose comments
considerably helped to improve the presentation.
The author thanks Leonid Mityushin
for sending him the English version
of the proof of Proposition 9.
The author is grateful to J\"{u}rgen Jost
for giving him the possibility to study Probabilistic
Cellular Automata.
The author also thanks the organizers of the workshop
Probabilistic Cellular Automata - Eurandom
(June 2013, Eindhoven), 
where many inspiring discussions
favoured the development of this work.


\begin{thebibliography}{25}

\bibitem{Aizenmann} M. Aizenman, D. J. Barsky, 
Sharpness of the phase transition in percolation models.
\textit{Communications in Mathematical Physics}, 
108, \textbf{3}, 489-526,
(1987).


\bibitem{Berezner}
S. Berezner, M. Krutina, V.  Malyshev, 
Exponential convergence of Toom’s probabilistic cellular automata.
\textit{J. Stat. Phys.}
\textbf{73} (5–6), 927–944,
 (1993). 
 
 \bibitem{Bezuidenhout}  
C. Bezuidenhout, G. Grimmet,
The critical contact process dies out.
\textit{Annals of Probability},
18, \textbf{4}, 1462-1482,
(1990).

\bibitem{Bigelis}
S. Bigelis, E. N. M. Cirillo, J. L. Lebowitz and E. R. Speer,
Critical droplets in metastable states of probabilistic cellular automata.
\textit{Phys. Rev E},
 \textbf{59},  3935-3941,
 (1999).



\bibitem{Cirillo}
E. N. M. Cirillo, F. R. Nardi, C. Spitoni,
Metastability for Reversible Probabilistic Cellular Automata with Self-Interaction.
\textit{J Stat Phys},
\textbf{132}, 431-47,
 (2008).



\bibitem{Dawson}
D. A. Dawson,
Synchronous and asynchronous reversible Markov systems,
\textit{Canad. Math. Bull},
\textbf{17},
(1974/5),
633-649.

\bibitem{Depoorter} 
J. Depoorter and C. Maes, 
Stavskaya's measure is weakly Gibbsian.
\textit{Markov Processes and related fields},
\textbf{12}, 176, pp. 791-804,
 (2006). 

 
\bibitem{Durret}  
R. Durret,
On the growth of one dimensional contact processes,
\textit{Annals of Probability},
8, \textbf{5}, 890-907,
(1980).


\bibitem{DurretOr}  
R. Durret,
Oriented Percolation in two dimensions.
\textit{Annals of Probability},
12, \textbf{4}, 999-1040,
(1984).




 \bibitem{Griffeath}
D. Griffeath,
The basic contact processes
\textit{Stochastic Processes and their Applications},
 \textbf{11},  151-1585,
 (1981). 

\bibitem{Grimmet}
G. Grimmet,
Percolation.
\textit{Springer}. Second Edition, (1999).


\bibitem{GrimmetDir} 
G. Grimmet and P. Hiemer, 
Direct Percolation and Random Walk.
\textit{In and out of equilibrium}, Mambucaba.
Volume 51 of Progr. Probab., pp. 273-297. Birkhauser Boston,
(2000).



\bibitem{Hinrichsen}
H. Hinrichsen,
Nonequilibrium Critical Phenomena and
Phase Transitions into Absorbing States.
\textit{Lectures held at the
International Summer School on
Problems in Statistical Physics XI},
Leuven, Belgium, (September 2005).

\bibitem{Jensen}
I. Jensen,
Jensen series expansions for directed
percolation: III. Some two-dimensional lattices.
\textit{ Phys. A: Math. Gen.} 
\textbf{37}, 6899–6915, (2004).


\bibitem{Lebowitz}  
J. L. Lebowitz, C. Maes and E. R. Speer,
Statistical Mechanics of Probabilistic Cellular Automata.
\textit{J. Stat. Phys.}, \textbf{59}, 117-170,
(1990).


\bibitem{Louis}
P. Y. Louis,
Ergodicity of PCA : equivalence between spatial and temporal mixing conditions.
\textit{Elect. Comm. in Prob.}
9, 119-131,
(2004)

 
\bibitem{Lubeck}
S. Lubeck, R. D. Willmann,
Universal scaling behaviour of directed percolation
and the pair contact process in an external field.
\textit{J. Phys. A: Math. Gen}. 
\textbf{35}, 10205–10217,  
(2002).

\bibitem{Maere}
A. de Maere, L. Ponselet,
Exponential Decay of Correlations for Strongly Coupled
Toom Probabilistic Cellular Automata.
\textit{J. Stat. Phys.}
\textbf{147}, 
634–652,
(2012).


\bibitem{Mendonca}
J. R. G. Mendon\c{c}a,
Monte Carlo investigation of the critical behavior of Stavskaya’s probabilistic cellular automaton.
\textit{Phys. Rev. E} \textbf{83}, 012102, (2011)



\bibitem{Mityushin}L. G. Mityushin, 
Monotone transition operators and percolation problems, PhD thesis, Moscow 1974 (unpublished, in Russian).


\bibitem{Mityushin2} L. G. Mityushin,  Some results on percolation problems.
\textit{Proceedings of IV International Symposium on Information Theory, part 1}.
Moscow and Leningrad, pp. 103.104 (in Russian).


\bibitem{Pearce}
C. E. M. Pearce and F. K. Fletcher, 
Oriented Site Percolation Phase Transitions and Probability Bounds.
\textit{Journal of Inequalities in Pure and Applied Mathematics}, 
\textbf{6}, 5, 135, 
(2005).


\bibitem{Perlsman} 
E. Perlsman and S. Havlin, 
Method to estimate critical exponents using numerical studies.
\textit{Europhysics Letters}, 
\textbf{58}, 176,  176-181, 
(2002). 



\bibitem{Stavskaja}
O. N. Stavskaja, I. I. Piatetski-Shapiro,
\textit{Problemy Kibernet}. 
\textbf{20},
(1968).

\bibitem{Stavskaja2}
O. N. Stavskaja, 
Gibbs invariant measures for Markov chains on finite lattices with local interaction.
\textit{Mat. Sbornik}, 
\textbf{21}, 395, (1976). 

\bibitem{Stavskaja3} 
O. N. Stavskaya and I. I. Piatetski-Shapiro, 
On homogeneous nets of spontaneously active elements.
\textit{Systems Theory Res.} \textbf{20}, 75-58,
(1971).

\bibitem{ToomCell}  
A. L. Toom,
Cellular Automata with Errors: Problems for Students of Probability. 
\textit{Topics in Contemporary Probability and its Applications}.  
Ed J. Laurie Snell. Series 
\textit{Probability and Stochastic} 
ed. by Richard Durret and Mark Pinsky. CRC Press. pp 117-157,
(1995). 


\bibitem{ToomDiscr}  
A. L. Toom, N. B. Vasilyev, O. N. Stavskaya, L. G. Mityushin, G. L. Kurdyumov and S. A. Pirogov, 
Discrete Local Markov systems. 
In: \textit{Stochastic Cellular Systems: Ergodicity, Memory, Morphogenesis}, 
R. L. Dobrushin, V. I. Kryukov and A. L. Toom (eds.), 
Manchester University Press, 1-182,
(1990).


\bibitem{ToomCont}  
A. L. Toom,
Contours, Convex Sets, and Cellular Automata. 
\textit{Course notes from the 23th Colloquium of Brazilian Mathematics}, 
UFPE, departement of statistics, Recife, PE, Brazil, 
(2004). 


\bibitem{Uniform}  
A. L. Toom,
A Family of uniform nets of formal neurons.
\textit{Soviet Math. Dokl.}, 
\textbf{9}, 6, (1968).

\end{thebibliography}
\end{document}